\newcommand{\R}{\mathbb{R}}
\newcommand{\cbra}[1]{\left\{ #1 \right\}}
\newcommand{\set}[1]{\left\{ #1 \right\}}
\newcommand{\corr}[3]{\textnormal{corr}_{#3}(#1, #2)}
\newcommand{\discrepancy}[3]{\textnormal{\disc}_{#3}(#1, #2)}
\newcommand{\abs}[1]{\left\lvert #1 \right\rvert}
\newcommand{\norm}[1]{\left\lVert #1 \right\rVert}
\newcommand{\floor}[1]{\left\lfloor #1 \right\rfloor}
\newcommand{\ceil}[1]{\left\lceil #1 \right\rceil}
\newcommand{\twobitgadget}{\square}
\newcommand{\wh}{\widehat}
\DeclareMathOperator*{\Exp}{\mathbb{E}}
\newcommand{\zone}{\set{0,1}}
\newcommand{\pmone}{\set{-1,1}}
\newcommand{\OR}{\mathsf{OR}}
\newcommand{\NOR}{\mathsf{NOR}}
\newcommand{\AND}{\mathsf{AND}}
\newcommand{\XOR}{\mathsf{XOR}}
\newcommand{\PARITY}{\mathsf{PARITY}}
\newcommand{\IP}{\mathsf{IP}}
\newcommand{\EQ}{\mathsf{EQ}}
\newcommand{\wtcirc}{~\widetilde{\circ}~}
\newcommand{\DISJ}{\mathsf{DISJ}}
\renewcommand{\deg}{\textnormal{deg}}
\newcommand{\adeg}{\widetilde{\textnormal{deg}}}
\newcommand{\disc}{\textnormal{disc}}
\newcommand{\bdisc}{\textnormal{bdisc}}
\newcommand{\sign}{\text{sign}}
\newcommand{\UPP}{\mathsf{UPP}}
\newcommand{\Q}{\mathsf{Q}}
\newtheorem{theorem}{Theorem}[section]
\newtheorem{corollary}[theorem]{Corollary}
\newtheorem{remark}[theorem]{Remark}
\newtheorem{lemma}[theorem]{Lemma}
\newtheorem{claim}[theorem]{Claim}
\newtheorem{defi}[theorem]{Definition}
\newtheorem{question}[theorem]{Question}
\newtheorem{conjecture}[theorem]{Conjecture}
\newtheorem{observation}[theorem]{Observation}
\newtheorem{fact}[theorem]{Fact}
\newcommand{\ket}[1]{|#1\rangle}
\newcommand{\ketbra}[2]{|#1\rangle\! \langle #2|}
\newcommand{\eps}{\varepsilon}
\renewcommand{\epsilon}{\varepsilon}
\def\01{\{0,1\}}
\renewcommand{\DJ}{\mathsf{DJ}}
\newcommand{\asnorm}[2]{\|\widehat{#1}\|_{1,#2}}
\title{The Role of Symmetry in\\ 
Quantum Query-to-Communication Simulation\footnote{Preliminary versions of this work appeared in CCC'20~\cite{CCMP20} and STACS'22~\cite{CCH+20}.}}
\author{Sourav Chakraborty\thanks{Indian Statistical Institute, Kolkata. \texttt{sourav@isical.ac.in}}
\and 
Arkadev Chattopadhyay\thanks{TIFR, Mumbai. 
Partially supported by the  MATRICS grant MTR/2019/001633 of the Science and Engineering Research Board, DST, India. {\tt arkadev.c@tifr.res.in}
}
\and 
Peter H\o yer\thanks{Department of Computer Science, University of Calgary, Canada. {\tt hoyer@ucalgary.ca}}
\and 
Nikhil S.~Mande\thanks{University of Liverpool, UK.
Part of this work was done while the author was a postdoc at Georgetown University, and at CWI and QuSoft, supported by the Dutch Research Council (NWO) through QuantERA project QuantAlgo 680-91-034 and by the Quantum Software Consortium programme (project number 024.003.037). {\tt Nikhil.Mande@liverpool.ac.uk}.}
\and 
Manaswi Paraashar\thanks{Aarhus University, Denmark. Work done while
at Indian Statistical Institute, Kolkata. {\tt manaswi.isi@gmail.com}}
\and 
Ronald de Wolf\thanks{QuSoft, CWI and University of Amsterdam, the Netherlands. 
Partially supported by ERC Consolidator Grant 615307-QPROGRESS (which ended February 2019), and by the Dutch Research Council (NWO/OCW), as part of the Quantum Software Consortium programme (project number 024.003.037), and through QuantERA ERA-NET Cofund project QuantAlgo (680-91-034). {\tt rdewolf@cwi.nl}
}}
\date{}
\begin{document}

\maketitle

\begin{abstract}
Buhrman, Cleve and Wigderson (STOC'98) showed that for every Boolean function $f : \pmone^n \to \pmone$ and $G \in \{\AND_2, \XOR_2\}$, the bounded-error quantum communication complexity of the composed function $f \circ G$ equals $O(\Q(f) \log n)$, where $\Q(f)$ denotes the bounded-error quantum query complexity of $f$. 
This is achieved by Alice running the optimal quantum query algorithm for~$f$, using a round of $O(\log n)$ qubits of communication to implement each query.
This is in contrast with the classical setting, where it is easy to show that $\mathsf{R}^{cc}(f \circ G) \leq 2\mathsf{R}(f)$, where $\mathsf{R}^{cc}$ and $\mathsf{R}$ denote bounded-error communication and query complexity, respectively. We show that the $O(\log n)$ overhead is required for some functions in the quantum setting, and thus the BCW simulation is tight. We note here that prior to our work, the possibility of $\mathsf{Q}^{cc}(f \circ G) = O(\mathsf{Q}(f))$, for all $f$ and all $G \in \{\AND_2, \XOR_2\}$, had not been ruled out. More specifically, we show the following.
\begin{itemize}
    \item We show that the $\log n$ overhead is \emph{not} required when $f$ is symmetric (i.e., depends only on the Hamming weight of its input), generalizing a result of Aaronson and Ambainis for the Set-Disjointness function (Theory of Computing'05). This upper bound assumes a shared entangled state, though for most symmetric functions the assumed number of entangled qubits is less than the communication and hence could be part of the communication.
    \item In order to prove the above, we design an efficient distributed version of noisy amplitude amplification that allows us to prove the result when $f$ is the OR function.
    \item In view of our first result above, one may ask whether the $\log n$ overhead in the BCW simulation can be avoided 
    even when $f$ is transitive, which is a weaker notion of symmetry.
    We give a strong negative answer by showing that the $\log n$ overhead is still necessary for some transitive functions even when we allow the quantum communication protocol an error probability that can be arbitrarily close to~$1/2$ (this corresponds to the unbounded-error model of communication).
    \item We also give, among other things, a general recipe to construct functions for which the $\log n$ overhead is required in the BCW simulation in the bounded-error communication model, even if the parties are allowed to share an arbitrary prior entangled state for free.
\end{itemize}
\end{abstract}

\tableofcontents

\section{Introduction}
\subsection{Motivation and main results}

The classical model of communication complexity was introduced by Yao~\cite{Yao79}, who also subsequently introduced its quantum analogue~\cite{Yao93}. 
Communication complexity has important applications in several disciplines, in particular for lower bounds on circuits, data structures, streaming algorithms, and many other complexity measures (see, for example, \cite{kushilevitz&nisan:cc} and the references therein).

A natural way to derive a communication problem from a Boolean function $f: \pmone^n \to \pmone$ is via composition. Let $f: \pmone^n \to \pmone$ be a function and let $G: \pmone^j \times \pmone^k \to \pmone$ be a ``two-party function.''
Then $F = f \circ G : \pmone^{nj} \times \pmone^{nk} \to \pmone$ denotes the function corresponding to the communication problem in which Alice is given input $X = (X_1, \dots, X_n) \in \pmone^{nj}$, Bob is given $Y = (Y_1, \dots, Y_n) \in \pmone^{nk}$, and their task is to compute $F(X, Y) = f(G(X_1, Y_1), \dots, G(X_n, Y_n))$. 
Many well-known functions in communication complexity are derived in this way, such as Set-Disjointness ($\DISJ_n := \NOR_n \circ \AND_2$), Inner Product ($\IP_{n} : = \PARITY_n \circ \AND_2$) and Equality ($\EQ_{n} : = \NOR_n \circ \XOR_2$). 
A natural approach to obtain efficient quantum communication protocols for $f \circ G$ is to ``simulate'' a quantum query algorithm for~$f$, where a query to the $i$th input bit of $f$ is simulated by a communication protocol that computes $G(X_i,Y_i)$.
Buhrman, Cleve and Wigderson~\cite{BuhrmanCleveWigderson98} observed that such a simulation is indeed possible if $G$ is $\AND_2$ or $\XOR_2$.

\begin{theorem}[\cite{BuhrmanCleveWigderson98}]
\label{theo: qtm_sim_log_loss}
For every Boolean function $f : \pmone^n \to \pmone$ and $\twobitgadget\in\{\AND_2,\XOR_2\}$, we have
\[
\Q^{cc}\left(f \circ \twobitgadget \right) = O\left(\Q(f) \log n\right).
\]
\end{theorem}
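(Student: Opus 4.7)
The plan is to give the standard BCW simulation: Alice runs the optimal bounded-error quantum query algorithm $\cA$ for $f$ locally, and the two parties cooperate to implement each of the $\Q(f)$ oracle queries using $O(\log n)$ qubits of communication per query. I recall that a $T$-query algorithm is specified by unitaries $U_0, U_1, \dots, U_T$ acting on three registers: an index register of $\lceil \log n \rceil$ qubits, a one-qubit answer register, and an arbitrary workspace. Between consecutive $U_t$'s one applies the oracle $O_z : \ket{i}\ket{b}\ket{w} \mapsto \ket{i}\ket{b \oplus z_i}\ket{w}$ corresponding to the input $z \in \zone^n$ (we switch freely between $\pmone$ and $\zone$ encodings). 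In our setting $z_i = G(X_i, Y_i)$.

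For $G = \XOR_2$, since $X_i \oplus Y_i = X_i \oplus Y_i$ splits into two local XORs, I would have Alice first apply $O_X$ to her copy of the index and answer registers, then send just those $\lceil \log n\rceil + 1$ qubits to Bob; Bob applies $O_Y$ and returns the registers. The combined effect on the relevant registers is exactly the desired oracle $O_{X \oplus Y}$, while the workspace stays in Alice's hands. For $G = \AND_2$, I would introduce one fresh ancilla qubit initialized to $\ket{0}$. Alice first coherently XORs $X_i$ into the ancilla conditioned on the index register, producing $\ket{i}\ket{b}\ket{X_i}$; she ships the index, answer, and ancilla qubits (a total of $O(\log n)$ qubits) to Bob; Bob applies the Toffoli-style map $\ket{i}\ket{b}\ket{a} \mapsto \ket{i}\ket{b \oplus (Y_i \wedge a)}\ket{a}$ and returns the registers; Alice then uncomputes the ancilla using her knowledge of $X$. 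The net map on index plus answer is $\ket{i}\ket{b} \mapsto \ket{i}\ket{b \oplus (X_i \wedge Y_i)}$, i.e., exactly $O_{X \wedge Y}$, and the ancilla is returned to $\ket{0}$.

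Between queries, Alice applies the algorithm's unitaries $U_t$ entirely locally; those unitaries are input-independent and therefore require no communication. After all $\Q(f)$ simulated queries, Alice measures the answer register and outputs the result. Since $\cA$ computes $f$ with bounded error and the simulation is exact, the resulting communication protocol computes $f \circ \twobitgadget$ with the same error probability. Each simulated query costs $O(\log n)$ qubits of communication (one round, both directions), so the total communication is $O(\Q(f) \log n)$.

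There is no real obstacle here; the only things to double-check are the cleanness of the ancilla in the AND case (so that the joint state remains the correct pure superposition rather than entangled garbage that would disrupt later interference) and the accounting of the round that sends the registers back to Alice. Both are handled by the uncomputation step above.
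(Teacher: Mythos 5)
Your proposal is correct and is exactly the standard BCW simulation that the paper cites (and sketches in the abstract): Alice runs the optimal quantum query algorithm for $f$ locally, and each of the $\Q(f)$ queries to $z_i = G(X_i,Y_i)$ is implemented by one round of $O(\log n)$ qubits of communication, with the $\AND_2$ case handled by an ancilla that Alice uncomputes so the oracle call is exact and clean. No gaps; this matches the paper's (cited) argument.
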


Here $\Q(f)$ denotes the bounded-error quantum query complexity of $f$, and $\Q^{cc}(f\circ \twobitgadget)$ denotes the bounded-error quantum communication complexity for computing $f\circ \twobitgadget$. Throughout this paper, we refer to Theorem~\ref{theo: qtm_sim_log_loss} as the BCW simulation.
\cite{BuhrmanCleveWigderson98} used this, for instance, to show that the bounded-error quantum communication complexity of the Set-Disjointness function is $O(\sqrt{n}\log n)$, using Grover's $O(\sqrt{n})$-query search algorithm~\cite{Grover96} for the $\NOR_n$ function.

It is folklore in the classical world that the analogous simulation does not incur a $\log n$ factor overhead. That is,
\[
\mathsf{R}^{cc}\left(f \circ \twobitgadget\right) \leq 2\mathsf{R}(f), 
\]
where $\mathsf{R}(f)$ denotes the bounded-error randomized query complexity of $f$ and $\mathsf{R}^{cc}(f\circ \twobitgadget)$ denotes the bounded-error randomized communication complexity for computing $f\circ \twobitgadget$. Thus, a natural question is whether the multiplicative $\log n$ blow-up in the communication cost in the BCW simulation is necessary. 

For partial functions, tightness of the BCW simulation is known in some settings.
For example, consider the Deutsch-Jozsa ($\DJ$) problem, where the input is an $n$-bit string with the promise that its Hamming weight is either 0 or $n/2$, and $\DJ$ outputs $-1$ if the Hamming weight is $n/2$, and 1 otherwise.  $\DJ$ has quantum query complexity $1$ whereas the \emph{exact} quantum communication complexity of $(\DJ\circ \oplus)$ is $\log n$.  Note that it is unclear whether the $\log n$ factor loss here is additive or multiplicative.\footnote{Indeed, there are well-known situations where complexity of 1 vs.~$\log n$ can be deceptive. The classical private-coin randomized communication complexity of Equality is $\Theta(\log n)$, whereas the public-coin cost is well known to be $O(1)$. Newman's Theorem~\cite{newman:random} shows that this difference in costs, in general, is \emph{not multiplicative} but merely \emph{additive} (see~\cite{mande&wolf:equality} for a recent small improvement).}
Montanaro, Nishimura and Raymond~\cite{MNR11} exhibited a partial function for which the BCW simulation is tight (up to constants) in the \emph{exact} and \emph{non-deterministic} quantum settings.  They also observed the existence of a total function for which the BCW simulation is tight (up to constants) in the unbounded-error setting.
If we allow multi-output partial functions, then tightness of the BCW simulation is known: Consider the following function: $f$ takes an $n$-bit string $x$ as input which is promised to be a Hadamard codeword (see Definition~\ref{defi: hadamard codewords}), and outputs a $\log n$ bit string $z$ for which $x$ is its Hadamard codeword.  The communication problem $f \circ \wedge$, where the inputs to the two players are promised to be such that their bitwise-$\AND$ yields a Hadamard codeword, has bounded-error quantum communication complexity $\log n$, but $\Q(f) = O(1)$.  Again, it is not clear here whether the $\log n$ factor loss is additive or multiplicative.

For the canonical problem of Set-Disjointness, Aaronson and Ambainis~\cite{aaronson&ambainis:searchj} (improving upon~\cite{HdW02}) showed that the $\log n$ overhead in the BCW simulation can be avoided. Since the outer function $\mathsf{NOR}_n$ is symmetric (i.e., it only depends on the Hamming weight of its input, its number of $-1$s), a natural question is whether the $\log n$ overhead can be avoided whenever the outer function is symmetric. Our first result gives a positive answer to this question.

\begin{theorem}
\label{theo:symmetric no logn intro}
For every symmetric Boolean function $f:\pmone^n\to\pmone$ and two-party function $G:\pmone^j\times\pmone^k\to\01$, we have
\begin{align*}
    \Q^{cc,*}(f \circ G)=O(\Q(f)\Q_E^{cc}(G)).
\end{align*}
\end{theorem}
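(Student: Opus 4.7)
The plan is to reduce computing $f \circ G$ to distributed quantum approximate counting. Since $f$ is symmetric, $f(z)$ depends only on the Hamming weight $|z|$ of the inner-output string $z = (G(X_1,Y_1),\ldots,G(X_n,Y_n))$, and by the tight characterization of $\Q(f)$ for symmetric $f$ (Beals-Buhrman-Cleve-Mosca-de Wolf, via the Brassard-Høyer-Mosca-Tapp approximate counting algorithm), it suffices to estimate $|z|$ to a precision that resolves the relevant thresholds of $f$, using $O(\Q(f))$ coherent queries to the virtual string $z$. The task therefore reduces to implementing each such query using at most $O(\Q_E^{cc}(G))$ qubits of communication, with no $\log n$ index-routing overhead.

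The main resource is pre-shared entanglement of the form $\frac{1}{\sqrt{n}}\sum_{i=1}^{n}|i\rangle_A|i\rangle_B$, which plays the role of the index register for the quantum query algorithm. Given this shared index, Alice locally reads off $X_i$ from her input and Bob locally reads off $Y_i$, and they invoke the exact quantum communication protocol for $G$ coherently on $(X_i,Y_i)$ to compute $z_i$ into an ancilla using $\Q_E^{cc}(G)$ qubits of communication; the index itself is never transmitted. The difficulty is the Grover-style diffusion step: the reflection about $\frac{1}{\sqrt{n}}\sum_i |i\rangle_A|i\rangle_B$ is a nonlocal operation, and undoing this shared state coherently by communication would cost $\log n$ qubits per iteration, wiping out the savings.

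To handle this, I would first establish the case $f=\OR_n$, targeting $\Q^{cc,*}(\OR_n\circ G) = O(\sqrt{n}\,\Q_E^{cc}(G))$, via the distributed noisy amplitude amplification primitive announced in the introduction: the reflections in each Grover iterate are arranged so that the only communication per iteration is that of the $G$-protocol itself, possibly at the cost of a small amount of controllable error that a robust amplitude amplification analysis absorbs. Once the OR case is in hand, the same distributed oracle access plugs directly into amplitude estimation (phase estimation built on the same Grover iterate), yielding an $O(\Q(f))$-query approximate count of $|z|$ at the required precision, and hence $f(z)$.

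The main obstacle is the distributed reflection step that drives the OR case; once that is settled, the lift to general symmetric $f$ is structurally routine, modulo bookkeeping on the amount of pre-shared entanglement consumed across the $\Q(f)$ iterations and a precision check that the final count resolves the sensitive thresholds of $f$, which is guaranteed by the matching upper and lower bounds for symmetric quantum query complexity.
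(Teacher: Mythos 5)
Your overall frame (reduce to counting $|z|$, use a shared maximally entangled index register, implement each query by the exact protocol for $G$, and treat the nonlocal reflection as the crux) matches the paper's, but there is a genuine gap at exactly the step you call ``structurally routine'': lifting from the $\OR_n$/search primitive to general symmetric $f$ by plugging the distributed Grover iterate into amplitude estimation. For a worst-case symmetric $f$ you need the count of $|z|$ \emph{exactly} whenever $|z|\le t$ or $|z|\ge n-t$ (outside the constant middle interval $f$ may change value at every Hamming weight), which forces $M=\Theta(\sqrt{tn})$ applications of the iterate in phase estimation. But in the communication setting the reflection about $\ket{\psi}$ can only be implemented approximately, at cost $\Theta(\log(1/\eps))$ qubits for accuracy $\eps$ (a perfect reflection costs $\Theta(\log n)$). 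During phase estimation the state is rotated far away from $\ket{\psi}$, so the error of $R^\eps_\psi$ enters with weight $\Theta(\eps)$ per application and accumulates linearly; the one-sided property $R^\eps_\psi\ket{\psi}=\ket{\psi}$, which is what saves the amplitude-amplification analysis, buys nothing once the component orthogonal to $\ket{\psi}$ has constant norm. Hence you need $\eps=O(1/M)$ per iterate, i.e.\ $\Theta(\log n)$ qubits per iterate, and the total becomes $\Omega(\Q(f)(\Q_E^{cc}(G)+\log n))$ --- precisely the BCW overhead the theorem is supposed to remove.

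The paper avoids amplitude estimation altogether for this reason. Its counting protocol first distinguishes $|z|\ge 2t$ from $|z|\le t$ by random subsampling combined with the search primitive, and then, when $|z|<2t$, determines $|z|$ exactly by repeatedly running the distributed search to find a solution and ``deleting'' it (replacing $(X_i,Y_i)$ by pre-agreed inputs on which $G$ evaluates to $1$), with a doubling schedule of guesses and a repetition budget tuned so the total cost is $O(\sqrt{tn}\,\Q_E^{cc}(G))$. This uses only the constant-success-probability search primitive, whose recursive noisy-amplitude-amplification analysis tolerates constant-communication reflections with geometrically allocated one-sided errors. Note also that you leave that primitive itself unproved (the recursive formulation, the one-sided-error reflection via the maximally-entangled-state test of Aharonov et al., and the geometric error schedule are the real content there), and even its per-iteration cost is not purely the $G$-protocol: each approximate reflection still costs a small but nonzero amount of communication that must be accounted for.
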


Here $\Q^{cc,*}(F)$ denotes the bounded-error quantum communication complexity of two-party function $F$ when Alice and Bob shared an entangled state at the start of the protocol for free. $\Q_E^{cc}(G)$ denotes the \emph{exact} quantum communication complexity of $G$, where the error probability is~0. In particular, if $G\in\{\AND_2,\XOR_2\}$ then $\Q_E^{cc}(G)=1$ and hence $\Q^{cc,*}(f \circ G)=O(\Q(f))$.

\begin{remark}
If $\Q(f)=\Theta(\sqrt{tn})$, then our protocol in the proof of Theorem~\ref{theo:symmetric no logn intro} starts from a shared entangled state of $O(t\log n)$ EPR-pairs. Note that if $t\leq n\Q_E^{cc}(G)^2/(\log n)^2$ (this condition holds for instance if $\Q_E^{cc}(G)\geq\log n$) then this number of EPR-pairs is no more than the amount of communication and hence might as well be established in the first message, giving asymptotically the same upper bound $\Q^{cc}(f \circ G)=O(\Q(f)\Q_E^{cc}(G))$ for the model without prior entanglement. 
\end{remark}

The next question one might ask is whether one can weaken the notion of symmetry required in Theorem~\ref{theo:symmetric no logn intro}. A natural generalization of the class of symmetric functions is the class of \emph{transitive-symmetric} functions. A function $f : \pmone^n \to \pmone$ is said to be transitive-symmetric if for all $i, j \in [n]$, there exists $\sigma \in S_n$ such that $\sigma(i) = j$, and $f(x) = f(\sigma(x))$ for all $x \in \pmone^n$. Henceforth we refer to transitive-symmetric functions as simply transitive functions.
Can the $\log n$ overhead in the BCW simulation be avoided whenever the outer function is transitive?
We give a negative answer to this question in a strong sense:  the $\log n$ overhead is still necessary
even when we allow the quantum communication protocol  an error probability that can be arbitrarily close to~$1/2$.
We note here that prior to our work, even the possibility of $\mathsf{Q}^{cc}(f \circ G) = O(\mathsf{Q}(f))$, for all $f$ and all $G \in \{\AND_2, \XOR_2\}$, had not been ruled out.

\begin{theorem}
\label{theo: transitive upp lower bound intro}
There exists a transitive and total function $f : \pmone^{n} \to \pmone$, such that
    \begin{align*}
    \UPP^{cc}(f \circ \twobitgadget) & = \Omega(\Q(f) \log n)
    \end{align*}
for every $\twobitgadget\in\{\AND_2,\XOR_2\}$.
\end{theorem}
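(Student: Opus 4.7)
The proof plan goes through the sign-rank characterization of unbounded-error communication: by Paturi-Simon, $\UPP^{cc}(F) = \Theta(\log \text{sign-rank}(M_F))$ for every Boolean function $F$. It therefore suffices to exhibit a transitive $f : \pmone^n \to \pmone$ for which $\log \text{sign-rank}(M_{f \circ \twobitgadget}) = \Omega(\Q(f) \log n)$.

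The main tool for producing sign-rank lower bounds is the Razborov--Sherstov pattern-matrix method: if $g : \pmone^m \to \pmone$ has threshold degree $d$, then the $(n, m, g)$-pattern matrix has sign-rank at least $\exp(\Omega(d \log(n/m)))$, and this pattern matrix is the communication matrix $M_{g' \circ \twobitgadget}$ for a ``selector-composed'' outer function $g' : \pmone^n \to \pmone$. For a suitable base function $g$ (e.g.\ the Minsky--Papert $\AND$-of-$\OR$ function, or an iterated majority) the threshold degree matches the quantum query complexity up to constants, and $\Q(g') = O(d)$ after the pattern-matrix lift (using the quantum algorithm for $g$ combined with an efficient decoding of the selector bits).

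The crux of the proof is to transform the non-transitive $g'$ into a transitive $f$ while (a)~preserving the sign-rank lower bound on its communication matrix, and (b)~not inflating the quantum query complexity by more than a polylogarithmic factor. I would use a group-orbit construction: fix a transitive subgroup $\Gamma \leq S_n$ (for instance cyclic shifts $\ZZ_n$) and set $f(x) := \bigvee_{\sigma \in \Gamma} g'(\sigma(x))$. Transitivity is automatic. For (a), a submatrix-embedding argument transfers the sign-rank bound: restrict $M_{f \circ \twobitgadget}$ to those inputs whose address/selector bits force $g'(\sigma(\cdot)) = 0$ for every $\sigma \neq \sigma_0$; the resulting submatrix is (a copy of) the pattern matrix of $g$, and by monotonicity of sign-rank under submatrix restriction, $\log \text{sign-rank}(M_{f \circ \twobitgadget}) \geq \Omega(d \log(n/m))$.

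The main obstacle is point (b): keeping $\Q(f)$ close to $d$. Any faithful transitive action on $[n]$ satisfies $|\Gamma| \geq n$, so naively computing the OR by Grover search over $|\Gamma|$ shifts costs $\Q(g') \sqrt{|\Gamma|} = \Omega(d\sqrt{n})$, which would break the intended match $\Q(f) \cdot \log n$. To close the gap one must exploit the same localization of $g'$ that underlies the sign-rank argument: because $g'$ is non-trivial only when its selector bits point into a small window, the OR over $|\Gamma|$ shifts effectively reduces to an OR over only $\text{polylog}(n)$ distinct non-degenerate subfunctions, bringing $\Q(f)$ down to $O(d \cdot \text{polylog}(n))$, which still matches $\log \text{sign-rank} \geq \Omega(d \log n)$ up to the $\log n$ overhead we wanted. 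Simultaneously calibrating the base function $g$, the parameter $m$, the transitive group $\Gamma$, and the selector encoding so that all three constraints---transitivity, $\Q(f) = O(d \cdot \text{polylog}(n))$, and sign-rank $\geq \exp(\Omega(d \log n))$---hold together is the central technical work.
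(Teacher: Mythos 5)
Your high-level idea (UPP equals log of sign-rank, so prove a sign-rank lower bound for a transitive $f$) is fine, but the plan has concrete gaps that I do not see how to close. First, the tool you invoke is not a theorem in the generality you need: plain threshold degree $d$ of the base function $g$ does \emph{not} imply a sign-rank lower bound of $\exp(\Omega(d\log(n/m)))$ for the pattern matrix. The Razborov--Sherstov argument requires a \emph{smooth} orthogonalizing distribution (smooth threshold degree); the absence of a generic ``threshold-degree lifting'' for sign-rank/UPP is exactly why that smoothness machinery exists. Second, your candidate base functions do not have the calibration you assert: the Minsky--Papert $\AND$-of-$\OR$ function on $m$ bits has threshold degree $\Theta(m^{1/3})$ but quantum query complexity $\Theta(\sqrt m)$, so even granting the sign-rank bound you would get $\UPP^{cc}=\Omega(d\log n)$ against $\Q(f)=\Omega(\sqrt m)\gg d$, which is not $\Omega(\Q(f)\log n)$. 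Third, and most seriously, step (b) --- the step you yourself identify as the crux --- is asserted rather than argued. Taking $f(x)=\bigvee_{\sigma\in\Gamma} g'(\sigma(x))$ over a transitive group $\Gamma$ with $|\Gamma|\geq n$ generically costs a $\sqrt{|\Gamma|}$ factor in quantum query complexity; the claim that ``only $\mathrm{polylog}(n)$ shifts are non-degenerate'' has no supporting argument (the shifts permute the selector bits too, so every shift yields a legitimate selector pattern, and even certifying that a shift is degenerate requires queries). Without that, $\Q(f)$ can be as large as $\Omega(d\sqrt n)$ and the theorem's conclusion fails. The submatrix-restriction step also needs care for the $\XOR_2$ gadget (you must fix both sides to kill a shift), though that is a lesser issue.

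For contrast, the paper avoids all of this by building transitivity into the function rather than symmetrizing afterwards: it takes $f=\PARITY_n\,\wtcirc\, h_{\IP_{\log n}}$, i.e.\ a symmetric outer function composed with a Hadamard-encoded inner-product gadget. Transitivity follows because Hadamard codewords are invariant (up to global sign) under the index permutations $i\mapsto i\oplus \ell$, $\Q(f)=O(n)$ follows from Bernstein--Vazirani decoding plus one Grover-based consistency check, and the communication bound is a direct embedding $\UPP^{cc}(f\circ\twobitgadget)\geq \UPP^{cc}(\PARITY_n\circ\IP_{\log n})=\UPP^{cc}(\IP_{n\log n})=\Omega(n\log n)$ via Forster's theorem --- no pattern matrices, no smoothness, and no orbit-OR with its attendant query blow-up.
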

Here $\UPP^{cc}(f \circ \twobitgadget)$ denotes the unbounded-error quantum communication complexity of $f \circ \twobitgadget$ (adding ``quantum'' here only changes the communication complexity by a constant factor~\cite{INRY07}).
The unbounded-error model of communication was introduced by Paturi and Simon~\cite{PS86} and is the strongest communication complexity model against which we know how to prove explicit lower bounds.
This model is known to be strictly stronger than the bounded-error quantum model. For instance, the Set-Disjointness function on $n$ inputs requires $\Omega(n)$ bits or $\Omega(\sqrt{n})$ qubits of  communication in the bounded-error model, but only requires $O(\log n)$ bits of communication in the unbounded-error model.
In fact, it follows from a recent result of Hatami, Hosseini and Lovett~\cite{HHL20} that there exists a function $F : \pmone^n\times\pmone^n \to \pmone$ with $\Q^{cc,*}(F) = \Omega(n)$ while $\UPP^{cc}(F) = O(1)$.

Zhang~\cite{Zhang09} showed that for all Boolean functions $f$, there must exist gadgets $g_i$, each either $\wedge$ or $\vee$, such that $\Q^{cc}(f(g_1, \dots, g_n)) = \Omega(\textnormal{poly} (\Q(f)))$.  For monotone $f$, they showed that either $\Q^{cc}(f \circ \wedge) = \Omega(\textnormal{poly}(\Q(f)))$ or $\Q^{cc}(f \circ \OR_2) = \Omega(\textnormal{poly}(\Q(f)))$.  They also state that it is unclear how tight the BCW simulation is.  Theorem~\ref{theo: transitive upp lower bound intro} shows that there exists a function for which it is tight up to constants, even in the \emph{unbounded-error} communication model.

Theorem~\ref{theo:symmetric no logn intro} and Theorem~\ref{theo: transitive upp lower bound intro} clearly demonstrate the role of symmetry in determining the presence of the $\log n$ overhead in the BCW query-to-communication simulation: 
this overhead is absent for symmetric functions (Theorem~\ref{theo:symmetric no logn intro}), but  present for a transitive function even when the model of communication under consideration is as strong as the unbounded-error model (Theorem~\ref{theo: transitive upp lower bound intro}).
We also give a general recipe to construct functions for which the $\log n$ overhead is required in the BCW simulation in the bounded-error communication model (see Theorem~\ref{theo: intro recipe for constructing BCW tight functions}).

\subsection{Overview of our approach and techniques}
In this section we discuss the ideas that go into the proofs of Theorem~\ref{theo:symmetric no logn intro} and Theorem~\ref{theo: transitive upp lower bound intro}.

\subsubsection{Communication complexity upper bound for symmetric functions}

To prove Theorem~\ref{theo:symmetric no logn intro}
we use the well-known fact that every symmetric function $f$ has an interval around Hamming weight $n/2$ where the function is constant; for $\mathsf{NOR}_n$ the length of this interval would be essentially $n$, while for $\PARITY_n$ it would be~1.
To compute $f$, it suffices to either determine that the Hamming weight of the input lies in that interval (because the function value is the same throughout that interval) or to count the Hamming weight exactly.

For two-party functions of the form $f\circ G$, we want to do this type of counting on the $n$-bit string $z=(G(X_1,Y_1),\ldots,G(X_n,Y_n))\in\pmone^n$. We show how this can be done with  $O(\Q(f)\,\Q^{cc}_E(G))$ qubits of communication if we had a quantum protocol that can find $-1$s in the string $z$ at a cost of $O(\sqrt{n}\,\Q^{cc}_E(G))$ qubits. Such a protocol was already given by Aaronson and Ambainis for the special case where $G=\AND_2$ for their optimal quantum protocol for  Set-Disjointness, as a corollary of their quantum walk algorithm for search on a grid~\cite{aaronson&ambainis:searchj}. In this paper we give an alternative $O(\sqrt{n}\,\Q^{cc}_E(G))$-qubit protocol. This implies the result of Aaronson and Ambainis as a special case, but it is arguably simpler and may be of independent interest. 

Our protocol can be viewed as an efficient distributed implementation of amplitude amplification with faulty components.
In particular, we replace the usual reflection about the uniform superposition by an imperfect reflection about the $n$-dimensional maximally entangled state ($=\log n$ EPR-pairs if $n$ is a power of~2). Such a reflection would require $O(\log n)$ qubits of communication to implement perfectly, but can be implemented with small error using only $O(1)$ qubits of communication, by invoking the efficient protocol of Aharonov et al.~\cite[Theorem~1]{AHLNSV14} that tests whether a given bipartite state equals the $n$-dimensional maximally entangled state. Still, at the start of this protocol we need to assume (or establish by means of quantum communication) a shared state of $\log n$ EPR-pairs. If $\Q(f)=\Theta(\sqrt{tn})$ then our protocol for $f\circ G$ will run the $-1$-finding protocol $O(t)$ times, which accounts for our assumption that we share $O(t\log n)$ EPR-pairs at the start of the protocol.

\subsubsection{Communication complexity lower bound for transitive functions}
\label{sec: intro cc lb for transitive functions}
For proving Theorem~\ref{theo: transitive upp lower bound intro}, we exhibit a transitive function $f : \pmone^{2n^2} \to \pmone$ whose bounded-error quantum query complexity is $O(n)$ while the unbounded-error communication complexity of $f \circ \twobitgadget$ is $\Omega(n \log n)$ for $\twobitgadget \in \{\AND_2,\XOR_2\}$.

\textbf{Function construction and transitivity:} For the construction of $f$ we first require the definition of Hadamard codewords. The Hadamard codeword of $s \in \pmone^{\log n}$, denoted by $H(s) \in \pmone^n$, is a list of all parities of $s$. 
See Figure~\ref{fig: cexample} for a graphical visualization of $f$.

\begin{figure}
\begin{center}
\begin{tikzpicture}[scale=1]

\tikzstyle{gate}=[ellipse,draw=black]
\tikzstyle{input}=[]
\node(hdef) at (-6, -1){$f = $};
\node[gate] (root) at (0,0) {$\mathsf{PARITY}$};
\node[gate] (c1) at (-3,-2) {$h_{\IP_{\log n}}$};
\node[gate] (cn) at (3,-2) {$h_{\IP_{\log n}}$};
\node (dummy1) at (-2, -0.8) {};
\node (dummy2) at (2.1, -0.8) {$n$};
\draw[->, dashed] (dummy1) edge [bend right = 20] (dummy2);

\node (dummy3) at (-5, -2.7) {$2n$};
\node (dummy4) at (-1, -2.8) {};
\draw[<-, dashed] (dummy3) edge [bend right = 20] (dummy4);

\node[input] (x8) at (1,-2) {$\boldsymbol{\cdot}$};
\node[input] (x19) at (-1,-2) {$\boldsymbol{\cdot}$};
\node[input] (x419) at (0,-2) {$\boldsymbol{\cdot}$};

\node[input] (x11) at (-5.5,-4) {$x_{11}$};
\node[input] (x123) at (-4.5,-4) {$\cdots$};
\node[input] (x1f) at (-3.5,-4) {$x_{1n}$};
\node[input] (y11) at (-2.5,-4) {$y_{11}$};
\node[input] (y123) at (-1.5,-4) {$\cdots$};
\node[input] (y1f) at (-0.5,-4) {$y_{1n}$};

\node[input] (yn1) at (5.5,-4) {$y_{nn}$};
\node[input] (x13) at (4.5,-4) {$\cdots$};
\node[input] (ynf) at (3.5,-4) {$y_{n1}$};
\node[input] (xn1) at (2.5,-4) {$x_{nn}$};
\node[input] (y13) at (1.5,-4) {$\cdots$};
\node[input] (xnf) at (0.5,-4) {$x_{n1}$};

\draw[->] (c1) -- (root);
\draw[->] (cn) -- (root);

\draw[<-] (c1) -- (y11);
\draw[<-] (c1) -- (y1f);
\draw[<-] (c1) -- (x11);
\draw[<-] (c1) -- (x1f);

\draw[<-] (cn) -- (yn1);
\draw[<-] (cn) -- (ynf);
\draw[<-] (cn) -- (xn1);
\draw[<-] (cn) -- (xnf);

\draw[->] (root) -- ++(0,1);

\end{tikzpicture}
\end{center}
\caption{For a $(\log n)$-bit string $s$, let $H(s)$ denote the $n$-bit string that is a list of parities of all subsets of $s$. We refer to $H(s)$ as the \emph{Hadamard codeword corresponding to $s$} (see Definition~\ref{defi: hadamard codewords} for a formal definition). The following is a description of how $f$ behaves on an $2n^2$-bit input:
\\
If the inputs to the $j$-th $h_{\IP_{\log n}}$ are the Hadamard codewords $H(s_j)$ and $H(t_j)$ for all $j \in [n]$ and some $s_j, t_j \in \pmone^{\log n}$, then $f = \PARITY(\IP_{\log n}(s_1,t_1), \dots, \IP_{\log n}(s_n, t_n))$.  If there exists at least one $j \in [n]$ for which either $x_{j1}, \dots , x_{jn}$ or $y_{j1}, \dots, y_{jn}$ is not a Hadamard codeword, then $f$ outputs $-1$.}
\label{fig: cexample}
\end{figure}

Using properties of $\IP$ and Hadamard codewords, and the symmetry of $\PARITY_n$, we are able to show that $f$ is transitive (see Claim~\ref{clm:func_is_transitive}).

\textbf{Query upper bound:} The query upper bound of $O(n)$ is inspired by a query upper bound due to Ambainis and de Wolf~\cite{AdW14}. The following is a sketch of our upper bound.
\begin{itemize}
        \item Run $2n$ instances of the Bernstein-Vazirani algorithm~\cite{BV97}, two on each block, to obtain $2n$ strings of $\log n$ bits each. This algorithm guarantees that if all inputs to all of the $n$ inner gadgets $h_{\IP_{\log n}}$ were Hadamard codewords, then we would decode each Hadamard codeword correctly. In this case, we can compute the output of each gadget correctly, using just $2n$ queries.
        \item In the next step, we first construct a $(2n^2)$-bit string using the output of the previous step (this string is the concatenation of the Hadamard codewords of the $2n$ many $(\log n)$-bit strings obtained). We then run Grover's search~\cite{Grover96, BHMT02} to check the equality of the input with this string; this tests whether the output of the first step was correct.  If it was correct, we succeed with probability 1.  If it was not correct, Grover's search catches a difference with probability at least $2/3$ and we output $-1$, succeeding with probability at least $2/3$ in this case.
        \item The $2n$ invocations of the Bernstein-Vazirani algorithm use a total of $2n$ queries, Grover's search uses another $O(n)$ queries, for a cumulative total of $O(n)$ queries.
    \end{itemize}
See the proof of Theorem~\ref{theo: query complexity doesn't increase on composition with hadamardization} for the query algorithm and its formal analysis.

\textbf{Communication lower bound:} 
Towards the unbounded-error communication lower bound, we first recall that each input block of $f$ equals $\IP_{\log n}$ if the inputs to each block are promised to be Hadamard codewords. Hence $f$ equals $\IP_{n \log n}$ under this promise, since $\PARITY_n \circ \IP_{\log n} = \IP_{n \log n}$. Thus by setting certain inputs to Alice and Bob suitably, $f \circ \twobitgadget$ is at least as hard as $\IP_{n \log n}$ for $\twobitgadget \in \{\AND_2, \XOR_2\}$ (for a formal statement, see Lemma~\ref{lm: reduction of communication problem from hadamardization} with $r = \PARITY_n$ and $g = \IP_{\log n}$). It is known from a seminal result of Forster~\cite{For02} that the unbounded-error communication complexity of $\IP_{n \log n}$ equals $\Omega(n \log n)$, completing the proof of the lower bound.

\subsection{Other results}
\begin{itemize}

    \item We give a general recipe for constructing a class of functions that witness tightness of the BCW simulation where the inner gadget is either $\AND_2$ or $\XOR_2$. However, the communication lower bound we obtain here is in the bounded-error model in contrast to Theorem~\ref{theo: transitive upp lower bound intro}, where the communication lower bound is proven in the unbounded-error model.

    The functions $f$ constructed for this purpose are composed functions similar to the construction in Figure~\ref{fig: cexample}, except that we are able to use a more general class of functions in place of the outer $\PARITY$ function, and also a more general class of functions in place of the inner $\IP_{\log n}$ functions.
    See Figure~\ref{fig: general_example} and its caption for an illustration and a more precise definition. 
    
    \begin{figure}
\begin{center}
\begin{tikzpicture}[scale=1]

\tikzstyle{gate}=[ellipse,draw=black]
\tikzstyle{input}=[]
\node(hdef) at (-6, -1){$f = $};
\node[gate] (root) at (0,0) {$r$};
\node[gate] (c1) at (-3,-2) {$h_{G}$};
\node[gate] (cn) at (3,-2) {$h_{G}$};
\node (dummy1) at (-2, -0.8) {};
\node (dummy2) at (2.1, -0.8) {$n$};
\draw[->, dashed] (dummy1) edge [bend right = 20] (dummy2);

\node (dummy3) at (-5, -2.7) {$2n$};
\node (dummy4) at (-1, -2.8) {};
\draw[<-, dashed] (dummy3) edge [bend right = 20] (dummy4);

\node[input] (x8) at (1,-2) {$\boldsymbol{\cdot}$};
\node[input] (x19) at (-1,-2) {$\boldsymbol{\cdot}$};
\node[input] (x419) at (0,-2) {$\boldsymbol{\cdot}$};

\node[input] (x11) at (-5.5,-4) {$x_{11}$};
\node[input] (x123) at (-4.5,-4) {$\cdots$};
\node[input] (x1f) at (-3.5,-4) {$x_{1n}$};
\node[input] (y11) at (-2.5,-4) {$y_{11}$};
\node[input] (y123) at (-1.5,-4) {$\cdots$};
\node[input] (y1f) at (-0.5,-4) {$y_{1n}$};

\node[input] (yn1) at (5.5,-4) {$y_{nn}$};
\node[input] (x13) at (4.5,-4) {$\cdots$};
\node[input] (ynf) at (3.5,-4) {$y_{n1}$};
\node[input] (xn1) at (2.5,-4) {$x_{nn}$};
\node[input] (y13) at (1.5,-4) {$\cdots$};
\node[input] (xnf) at (0.5,-4) {$x_{n1}$};

\draw[->] (c1) -- (root);
\draw[->] (cn) -- (root);

\draw[<-] (c1) -- (y11);
\draw[<-] (c1) -- (y1f);
\draw[<-] (c1) -- (x11);
\draw[<-] (c1) -- (x1f);

\draw[<-] (cn) -- (yn1);
\draw[<-] (cn) -- (ynf);
\draw[<-] (cn) -- (xn1);
\draw[<-] (cn) -- (xnf);

\draw[->] (root) -- ++(0,1);

\end{tikzpicture}
\end{center}
\caption{In this figure, $G:\pmone^{\log } \times \pmone^{\log n} \to \pmone$. If the inputs to the $j$-th $h_{G}$ are Hadamard codewords in $\pm H(s_j)$ and $\pm H(t_j)$ for all $j \in [n]$ and some $s_j, t_j \in \pmone^{\log n}$, then $f = r(G(s_1,t_1), \dots, G(s_n, t_n))$.  If there exists at least one $j \in [n]$ for which either $x_{j1}, \dots , x_{jn}$ or $y_{j1}, \dots, y_{jn}$ is not a Hadamard codeword, then $f$ outputs $-1$.}
\label{fig: general_example}
\end{figure}

    We require some additional constraints on the outer and inner functions. First, the approximate degree (see Definition~\ref{defi: adeg}) of the outer function, $r$, should be $\Omega(n)$.
    Second, the discrepancy of $G$ should be small with respect to some ``balanced'' probability distribution (see Definition~\ref{defi: discrepancy} and Definition~\ref{defi: balanced discrepancy} for formal definitions of these notions).

\begin{theorem}[Informal version of Theorem~\ref{theo: recipe for constructing BCW tight functions}]
\label{theo: intro recipe for constructing BCW tight functions}
    Let $r : \pmone^n \to \pmone$ be such that $\adeg(r) = \Omega(n)$ and let $G : \pmone^{\log n} \times \pmone^{\log n} \to \pmone$ be a total function. Define $f : \pmone^{2n^{2}} \to \pmone$ as in Figure~\ref{fig: general_example}.
    If there exists $\mu: \pmone^{\log n} \times \pmone^{\log n} \to \R$ that is a balanced probability distribution with respect to $G$ and $\disc_{\mu}(G) = n^{-\Omega(1)}$, then
    \begin{align*}
    \Q(f) & = O(n),\\
    \Q^{cc, *}(f \circ \twobitgadget) & = \Omega(n \log n),
    \end{align*}
    for every $\twobitgadget\in\{\AND_2,\XOR_2\}$.
\end{theorem}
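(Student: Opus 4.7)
The plan is to establish $\Q(f) = O(n)$ and $\Q^{cc,*}(f \circ \twobitgadget) = \Omega(n \log n)$ separately. For the query upper bound I would follow the Bernstein--Vazirani-plus-Grover template sketched in Section~\ref{sec: intro cc lb for transitive functions}. The input to $f$ is partitioned into $n$ blocks of $2n$ bits each, which I further split into an ``$x$-half'' and a ``$y$-half'' of $n$ bits. For each of the $2n$ halves I run a single-query Bernstein--Vazirani to extract a candidate $s_j$ (resp.\ $t_j$); this correctly decodes $\pm H(s_j)$ (resp.\ $\pm H(t_j)$) because the overall sign is only a global phase in the BV circuit, costing $2n$ queries in total. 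I then query one reference position per half ($2n$ further queries) and run a single Grover search over the remaining $\Theta(n^2)$ positions, looking for any index whose bit disagrees with the reference under the decoded codeword; this costs $O(\sqrt{n^2})=O(n)$ queries. If a violator is detected the algorithm outputs $-1$; otherwise it outputs $r(G(s_1,t_1),\dots,G(s_n,t_n))$, computed classically from the decoded values at no extra query cost. The total is $O(n)$.

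For the communication lower bound I would first reduce the two-party problem $r \circ G$ to $f \circ \twobitgadget$ by hardwiring. With Alice holding $s_1,\dots,s_n \in \pmone^{\log n}$ and Bob holding $t_1,\dots,t_n \in \pmone^{\log n}$, Alice places $H(s_j)$ in the $x$-half of block $j$ and the identity element of $\twobitgadget$ ($+1$ for $\XOR_2$, $-1$ for $\AND_2$) in the $y$-half, while Bob makes the symmetric assignment. After pairwise application of $\twobitgadget$, block $j$ becomes $(H(s_j), H(t_j))$, so $f$ evaluates to $r(G(s_1,t_1),\dots,G(s_n,t_n))$, whence $\Q^{cc,*}(f \circ \twobitgadget) \geq \Q^{cc,*}(r \circ G)$. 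It then remains to prove $\Q^{cc,*}(r \circ G) = \Omega(n \log n)$, which I would do via Sherstov's pattern-matrix / generalized-discrepancy method for block compositions: combining a dual polynomial witnessing $\adeg(r) = \Omega(n)$ with the balanced distribution $\mu$ satisfying $\disc_\mu(G) = n^{-\Omega(1)}$ yields a dual witness for $r \circ G$ whose discrepancy is at most $\disc_\mu(G)^{\adeg(r)}$, giving
\[
\Q^{cc,*}(r \circ G) \;=\; \Omega\paren{\adeg(r) \cdot \log \tfrac{1}{\disc_\mu(G)}} \;=\; \Omega(n \log n).
\]

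I expect the most delicate step to be this last dual-witness construction. The balancedness of $\mu$ with respect to $G$ is precisely what lets the dual polynomial for $r$ be ``lifted'' through $G$ so that sign agreement with $r \circ G$ and the multiplicative discrepancy bound $\disc_\mu(G)^{\adeg(r)}$ are preserved simultaneously; making this composition work quantitatively, while tracking all parameters, is the technical heart of the lower bound.
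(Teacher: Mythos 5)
Your proposal is correct and follows essentially the same route as the paper: Bernstein--Vazirani plus a sign-reference query plus one Grover equality check for the $O(n)$ query upper bound, the same identity-element hardwiring reduction giving $\Q^{cc,*}(f\circ\twobitgadget)\geq \Q^{cc,*}(r\circ G)$, and the same dual-polynomial/generalized-discrepancy argument (the paper's Theorem~\ref{theo:  discrepancy lb on quantum communication}, implicit in Lee--Zhang) yielding $\Omega\left(\adeg(r)\log\frac{1}{\disc_\mu(G)}\right)$. The only cosmetic difference is that the paper's discrepancy bound carries a factor-$8$ XOR-lemma loss and hence needs the mild condition $8en/\adeg(r)\leq(1/\disc_\mu(G))^{1-\beta}$, which holds automatically under the stated hypotheses.
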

    
    The query upper bound follows along similar lines as that of Theorem~\ref{theo: transitive upp lower bound intro}. For the lower bound, we first show via a reduction that for $f$ as described in Figure~\ref{fig: general_example} and $\twobitgadget \in \{\AND_2, \XOR_2\}$, the communication problem $f \circ \twobitgadget$ is at least as hard as $r \circ G$ (see Lemma~\ref{lm: reduction of communication problem from hadamardization}).
    This part of the lower bound proof is the same as in the proof of Theorem~\ref{theo: transitive upp lower bound intro}. For the hardness of $r \circ G$ (which in the case of Theorem~\ref{theo: transitive upp lower bound intro} turned out to be $\IP_{n \log n}$, for which Forster's theorem yields an unbounded-error communication lower bound), we are able to use a theorem implicit in a work of Lee and Zhang~\cite{LZ10}.
    This theorem gives a lower bound on the bounded-error communication complexity of $r \circ G$ in terms of the approximate degree of $r$ and the discrepancy of $G$ under a balanced distribution. For completeness, we provide an explicit proof in Appendix~\ref{sec: appendix communication complexity lower bound via generalized discrepancy method}.
    
    \item For a Boolean function $f : \pmone^n \to \pmone$, let $\log \|\wh{f}\|_{1, 1/3}$ denote the log-approximate-spectral norm of $f$ (see Definition~\ref{defn: awt}), and $\adeg_{1/3}(f)$ denote its approximate degree. 
    As an approach towards proving the Fourier entropy-influence conjecture, Arunachalam et al.~\cite{ACK+18} asked whether it is true for all total functions $f : \pmone^n \to \pmone$ that $\log \|\wh{f}\|_{1, 1/3} = O(\adeg_{1/3}(f))$. 
    We observe that this holds true when $f$ is a symmetric function.
    However, we give a negative answer to this question by showing that for the function $f$ used to prove Theorem~\ref{theo: transitive upp lower bound intro}, which is a \emph{transitive} function, $\log \|\wh{f}\|_{1, 1/3} = \Omega(\adeg_{1/3}(f) \log n)$.

    These are discussed in Section~\ref{sec: Appendix log-approximate-spectral norm and approximate degree transitive function}.
\end{itemize}

\subsection{Organization}
Section~\ref{sec: notation and prelims} introduces some notation and preliminaries. 
In Section~\ref{sec:noisyamplamplif} we construct our new one-sided error protocol for finding solutions in the string $z=(G(X_1,Y_1),\ldots,G(X_n,Y_n))\in\pmone^n$, as a corollary of our distributed version of amplitude amplification.
In Section~\ref{sec: No log-factor needed for symmetric functions} we prove Theorem~\ref{theo:symmetric no logn intro}, which shows that the $\log n$ overhead in the BCW simulation can be avoided when the outer function is symmetric; this relies on the protocol from Section~\ref{sec:noisyamplamplif}. 
In Section~\ref{sec: proofs of log n required} we prove Theorem~\ref{theo: transitive upp lower bound intro} and Theorem~\ref{theo: intro recipe for constructing BCW tight functions}, which are our results regarding necessity of the $\log n$ overhead in the BCW simulation in the unbounded-error and the bounded-error models of communication, respectively. As mentioned before, our proof of Theorem~\ref{theo: intro recipe for constructing BCW tight functions} requires a lower bound on the bounded-error quantum communication complexity of $f \circ G$ in terms of the approximate degree of $f$ and the discrepancy of $G$ under balanced distributions. Such a result is implicit in~\cite[Theorem 7]{LZ10}, but we provide a proof in Appendix~\ref{sec: appendix communication complexity lower bound via generalized discrepancy method}.

In Section~\ref{sec: Appendix log-approximate-spectral norm and approximate degree transitive function} we exhibit a transitive function $f$ on $n$ bits for which $\log(\|\wh{f}\|_{1, 1/3}) = \Omega(\adeg(f) \log n)$.

\section{Notation and preliminaries}
\label{sec: notation and prelims}
Without loss of generality, we assume $n$ to be a power of $2$ in this paper, unless explicitly stated otherwise. All logarithms in this paper are base 2.
Let $S_n$ denote the symmetric group over the set $[n]=\{1,\ldots,n\}$. For a string $x \in \pmone^n$ and $\sigma \in S_n$, let $\sigma(x)$ denote the string $x_{\sigma(1)}, \dots, x_{\sigma(n)} \in \pmone^n$. Consider an arbitrary but fixed bijection between subsets of $[\log n]$ and elements of $[n]$. For a string $s \in \pmone^{\log n}$, we abuse notation and also use $s$ to denote the equivalent element of $[n]$. The view we take will be clear from context. 
For a string $x \in \pmone^n$ and set $S \subseteq [n]$, define the string $x_S \in \pmone^{S}$ to be the restriction of $x$ to the coordinates in $S$.
Let $1^n$  and $(-1)^n$ denote the $n$-bit strings $(1, 1, \dots, 1)$  and $(-1, -1, \dots, -1)$, respectively. 

\subsection{Boolean functions}

For strings $x, y \in \pmone^n$, let $\langle x, y \rangle$ denote the inner product (mod 2) of $x$ and $y$. That is,
\[
\langle x, y \rangle = \prod_{i = 1}^n (x_i \wedge y_i).
\]
For every positive integer $n$, let $\PARITY_n : \{-1,1\}^n \to \{-1,1\}$ be defined as:
\begin{align*}
    \PARITY_n(x_1, \dots, x_n) = \prod_{i \in [n]} x_i.
\end{align*}

\begin{defi}[Symmetric functions]
A function $f : \pmone^n \to \pmone$ is symmetric if for all $\sigma \in S_n$ and for all $x \in \pmone^n$ we have $f(x) = f(\sigma(x))$.
\end{defi}

\begin{defi}[Transitive functions]
A function $f : \pmone^n \to \pmone$ is transitive if for all $i,j \in [n]$ there exists a permutation $\sigma \in S_n$ such that
\begin{itemize}
    \item $\sigma(i) = j$, and
    \item $f(x) = f(\sigma(x))$ for all $x \in \pmone^n$.
\end{itemize}
\end{defi}

We next discuss function composition. For total functions $f, g$, let $f \circ g$ denote the standard composition of the functions $f$ and $g$. 
We also require the following notion of composition of a total function $f$ with a partial function $g$.

\begin{defi}[Composition with partial functions]
\label{defi: composition with partial}
Let $f : \pmone^{n} \to \pmone$ be a total function and let $g : \pmone^{m} \to \cbra{-1, 1, \star}$ be a partial function. Let $f \wtcirc g : \pmone^{nm} \to \pmone$ denote the total function that is defined as follows on input $(X_1, \dots, X_n) \in \pmone^{nm}$, where $X_i \in \pmone^m$ for all $i \in [n]$.
\[
f \wtcirc g (X_1, \dots, X_n) = \begin{cases}
f(g(X_1), \dots, g(X_n)) & \text{if}~g(X_i) \in \pmone~\text{for all}~i \in [n],\\
-1 & \text{otherwise}.
\end{cases}
\]
\end{defi}
That is, we use $f \wtcirc g$ to denote the total function that equals $f \circ g$ on inputs when each copy of $g$ outputs a value in $\pmone$, and equals $-1$ otherwise.

\begin{defi}[Approximate degree]
\label{defi: adeg}
For every $\epsilon \geq 0$, the $\epsilon$-approximate degree of a function $f : \pmone^n \to \pmone$ is defined to be the minimum degree of a real polynomial $p : \pmone^n \to \R$ that uniformly approximates $f$ to error $\epsilon$. That is,
\[
\adeg_{\epsilon}(f) = \min\cbra{\deg(p) : |p(x) - f(x)| \leq \epsilon~\text{for all}~x \in \pmone^n}.
\]
Unless specified otherwise, we drop $\epsilon$ from the subscript and assume $\epsilon = 1/3$.
\end{defi}

We assume familiarity with quantum computing~\cite{nielsen&chuang:qc}, and
use $\Q_{\epsilon}(f)$ to denote the $\epsilon$-error query complexity of $f$. Unless specified otherwise, we drop $\epsilon$ from the subscript and assume $\epsilon = 1/3$.

\begin{theorem}[\cite{bbcmw:polynomialsj}]
\label{theo: bbc+01}
Let $f: \pmone^n \to \pmone$ be a function. Then $\Q(f) \geq \adeg(f)/2$.
\end{theorem}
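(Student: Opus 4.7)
The plan is to apply the polynomial method, essentially as in Beals, Buhrman, Cleve, Mosca, and de Wolf. Let $\mathcal{A}$ be a quantum query algorithm for $f$ that uses $T = \Q(f)$ queries and has error at most $1/3$, and write its action as $U_T O_x U_{T-1} \cdots U_1 O_x U_0 |\psi_0\rangle$, where the $U_t$ are input-independent unitaries and $O_x$ is the phase oracle satisfying $O_x |i,b,w\rangle = x_i |i,b,w\rangle$ in the $\pmone$ convention. I would prove by induction on $t$ that every amplitude of the state after $t$ queries is a (multilinear, using $x_i^2=1$) polynomial in $x_1,\ldots,x_n$ of degree at most $t$: the input-independent unitaries take complex-linear combinations of existing amplitudes and therefore preserve the degree bound, while each application of $O_x$ multiplies the amplitude at basis state $|i,b,w\rangle$ by a single variable $x_i$, raising the degree by at most one.

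Once the amplitudes are degree-at-most-$T$ polynomials, the acceptance probability $p(x) = \sum_{z \in \mathrm{accept}} |\alpha_z(x)|^2$ is a sum of products of such amplitudes with their complex conjugates, and is therefore a real polynomial in $x$ of total degree at most $2T$. Because $\mathcal{A}$ has error at most $1/3$, the degree-$2T$ polynomial $q(x) := 1 - 2 p(x)$ satisfies $q(x) \leq -1/3$ when $f(x)=-1$ and $q(x) \geq 1/3$ when $f(x)=1$; after a constant number of independent repetitions of $\mathcal{A}$ followed by a majority vote (which multiplies $T$ by a constant and hence keeps the degree $O(T)$), one obtains a polynomial of degree $O(\Q(f))$ that $1/3$-approximates $f$ pointwise. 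This yields $\adeg(f) \leq 2\Q(f)$ once one tracks the constants carefully, which is equivalent to the claimed inequality $\Q(f) \geq \adeg(f)/2$.

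The main technical point is the inductive argument that amplitudes after $t$ queries are polynomials of degree at most $t$ in the input. The $\pmone$ phase-oracle form of the query is what makes this argument clean, since each query multiplies an amplitude by exactly one variable $x_i$ rather than by some more complicated function of the $x_i$; in the $\{0,1\}$ bit-oracle convention one would use the expansion $x_i = (1 - (-1)^{x_i})/2$ to recover the same degree bound. The rest of the argument — translating acceptance probabilities into approximating polynomials via the rescaling $q(x) = 1 - 2 p(x)$ and using standard error reduction — is routine once the error conventions are fixed.
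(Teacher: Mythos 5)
Your proposal is correct in substance and follows exactly the standard polynomial-method argument of Beals, Buhrman, Cleve, Mosca and de Wolf that the paper cites for this theorem without reproducing a proof: amplitudes after $t$ queries are degree-$t$ polynomials, so the acceptance probability has degree at most $2T$, which yields an approximating polynomial. One small caveat: your route via majority-vote amplification only gives $\adeg(f) = O(\Q(f))$ with a constant worse than $2$, whereas the exact factor $2$ in the statement comes out cleanly by noting that the acceptance probability itself already $1/3$-approximates the $\{0,1\}$-valued version of $f$ (the rescaling $q = 1-2p$ doubles the error in the $\pmone$ convention); this distinction is immaterial for every use of the theorem in this paper, which only needs the asymptotic bound.
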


\subsection{Fourier analysis of Boolean functions and multilinear polynomials}
\label{section: Fourier analysis of Boolean functions}
We define some basic notions from Fourier analysis on the Boolean cube.

Consider the vector space of functions from $\pmone^n$ to $\R$, equipped with an inner product defined by 
\[
\langle f, g \rangle := \Exp_{x \in \pmone^n}[f(x)g(x)] = \frac{1}{2^n}\sum_{x \in \pmone^n}f(x)g(x)
\]
for every $f,g: \pmone^n \rightarrow \R$. For any set $S \subseteq [n]$, define the associated \emph{parity} function $\chi_S$ by $\chi_S(x) = \prod_{i \in S}x_i$. The set $\cbra{\chi_S : S \subseteq [n]}$ of parity functions forms an orthonormal basis for this vector space.  Thus, every function $f : \pmone^n \to \R$ has a unique multilinear expression as $f = \sum_{S \subseteq [n]}\wh{f}(S)\chi_S$.
The coefficients $\cbra{\wh{f}(S) : S \subseteq [n]}$ are called the \emph{Fourier coefficients} of $f$.
The orthonormality of the $\chi_S$ functions easily implies the following fact.

\begin{fact}[Plancherel's Theorem]
\label{fact: plancherel's thm}
Let $f,g: \pmone^n \to \R$ be functions. Then,
\begin{align*}
    \langle f, g \rangle = \sum_{S \subseteq[n]} \wh{f}(S) \wh{g}(S).
\end{align*}
\end{fact}

We refer the reader to~\cite{odonnell:analysis} for more details on Fourier analysis of Boolean functions. More generally, we also consider the following class of functions whose domain is $\R^n$. 

\begin{defi}[Multilinear Polynomial]
A function $\phi : \R^n \to \R$ is a multilinear polynomial if $\phi$ is of the form:
$$
\phi(x_1, \dots, x_n) = \sum_{S \subseteq [n]}a_S \prod_{i \in S}x_i
$$
where $a_S \in \R$.
\end{defi}

\begin{defi}[Spectral Norm of a Multilinear Polynomial]
Let $\phi : \R^n \to \R$ be a multilinear polynomial of the form $\phi(x_1, \dots, x_n) = \sum_{S \subseteq [n]}a_S \prod_{i \in S}x_i$. The spectral norm of $\phi$, denoted by $\|\phi\|_1$, is defined as  
$$
\|\phi\|_1 = \sum_{S \subseteq [n]} |a_S|.
$$
\end{defi}

\begin{defi}[Approximate Spectral Norm]\label{defn: awt}
The \emph{approximate spectral norm} of a function  $f : \pmone^n \to \{-1,1\}$, denoted by $\asnorm{f}{\epsilon}$ is defined to be the minimum spectral norm of a real polynomial $p : \pmone^n \to \R$ that satisfies $\abs{p(x) - f(x)} \leq \eps$ for all $x \in \pmone^n$ for which $f(x) \in \pmone$.
\[
\asnorm{f}{\epsilon} := \min\cbra{\|{p}\|_1 : \abs{p(x) - f(x)} \leq \epsilon~\text{for all~} x \in \pmone^n~\text{for which}~f(x) \in \pmone}.
\]
\end{defi}

\begin{fact}[Properties of Spectral Norm of Multilinear Polynomials]
\label{fact:l1_norm_facts}
Let $f,g : \R^n \to \R$ be any symmetric polynomials and let $\alpha \in \R$ be any real number. Then,
\begin{enumerate}
    \item $\|\alpha f\|_1 = |\alpha| \|f\|_1$,
    \item $\|f + g\|_1 \leq \|f\|_1 + \|g\|_1$,
    \item $\|fg\|_1 \leq \|f\|_1\|g\|_1$.
\end{enumerate}
\end{fact}

\subsection{Communication complexity}

We assume familiarity with communication complexity~\cite{kushilevitz&nisan:cc}. 

\begin{defi}[Two-party function]
\label{defi: two party functions}
We call a function $G: \pmone^j \times \pmone^k \to \pmone$ a \emph{two-party function} to indicate that it corresponds to a communication problem in which Alice is given input $x \in \pmone^j$, Bob is given input $y \in \pmone^k$, and their task is to compute $G(x, y)$.
\end{defi}

\begin{remark}
Throughout this paper, we use uppercase letters to denote two-party functions, and lowercase letters to denote functions which are not two-party functions.
\end{remark}

\begin{defi}[Composition with two-party functions]
\label{defi: comm problem from two party functions}
Let $f: \pmone^n \to \pmone$ be a function and let $G: \pmone^j \times \pmone^k \to \pmone$ be a two-party function. Then $F = f \circ G : \pmone^{nj} \times \pmone^{nk} \to \pmone$ denotes the two-party function corresponding to the communication problem in which Alice is given input $X = (X_1, \dots, X_n) \in \pmone^{nj}$, Bob is given $Y = (Y_1, \dots, Y_n) \in \pmone^{nk}$, and their task is compute $F(X, Y) = f(G(X_1, Y_1), \dots, G(X_n, Y_n))$.
\end{defi}

\begin{defi}[Inner Product function]
\label{defi: inner product function}
For every positive integer $n$, define the function $\IP_n : \pmone^{n} \times \pmone^{n} \to \pmone$ by 
\[
\IP_n(x,y) = \langle x,y \rangle.
\]
In other words, $\IP_n = \PARITY_n \circ \AND_2$.
\end{defi}

\begin{observation}
\label{obs: parity composed inner product}
For all positive integers $k, t$, $\PARITY_k \circ \IP_t = \IP_{kt}$.
\end{observation}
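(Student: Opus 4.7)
The statement is essentially a direct unpacking of definitions, so my plan is to simply expand both sides and verify they agree pointwise on every input.

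First, I will fix notation: let Alice hold $X = (X_1, \dots, X_k) \in \pmone^{kt}$ with each $X_i \in \pmone^t$, and Bob hold $Y = (Y_1, \dots, Y_k) \in \pmone^{kt}$ with each $Y_i \in \pmone^t$. Using the explicit form $\PARITY_n(z_1, \dots, z_n) = \prod_{i=1}^n z_i$ from the definition of $\PARITY_n$, and the definition $\IP_t(X_i, Y_i) = \prod_{j=1}^t ((X_i)_j \wedge (Y_i)_j)$ given right above, I expand
\[
(\PARITY_k \circ \IP_t)(X, Y) \;=\; \prod_{i=1}^k \IP_t(X_i, Y_i) \;=\; \prod_{i=1}^k \prod_{j=1}^t \bigl((X_i)_j \wedge (Y_i)_j\bigr).
\]

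Second, I will reindex the double product as a single product over the $kt$ pairs $(i,j)$, which, after identifying the concatenation of the $X_i$'s with an element of $\pmone^{kt}$ (and similarly for $Y$), gives exactly $\prod_{\ell=1}^{kt} (X_\ell \wedge Y_\ell) = \IP_{kt}(X, Y)$ by Definition~\ref{defi: inner product function}. Since this identity holds for every $(X,Y)$, the two functions are equal.

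There is no substantive obstacle here beyond bookkeeping: the only subtlety is the implicit identification of $\pmone^{kt}$ with $(\pmone^t)^k$, which is the same identification already used to define $\PARITY_k \circ \IP_t$ via Definition~\ref{defi: comm problem from two party functions}. Conceptually, the observation is just the associativity/commutativity of the product that defines $\PARITY$ in $\pmone$-notation, combined with the fact that $\IP$ is itself defined as such a product of $\AND_2$'s.
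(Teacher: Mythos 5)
Your proof is correct: the paper states this as an observation without proof, and your direct unpacking of $\PARITY_k$ and $\IP_t$ as products of $\AND_2$'s followed by reindexing the double product is exactly the routine verification the paper implicitly intends. Nothing further is needed.
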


We also assume familiarity with quantum communication complexity~\cite{wolf:qccsurvey}.
We use $\Q_{\epsilon}^{cc}(G)$ and  $\Q_{\epsilon}^{cc,*}(G)$ to represent the $\epsilon$-error quantum communication complexity of a two-party function $G$ in the models without and with unlimited shared entanglement, respectively. The latter means that Alice and Bob share at the start of the protocol an entangled state of their choice (independent of their inputs $X$ and $Y$) at no cost, for instance a large number of EPR-pairs. One can think of this as the quantum generalization of shared randomness. Unless specified otherwise, we drop $\epsilon$ from the subscript and assume $\epsilon = 1/3$.

\begin{defi}[Balanced probability distribution]
\label{defi: balanced probability distribution}
We call a probability distribution $\mu: \pmone^n \to \R$ balanced with respect to a function $f: \pmone^n \to \pmone$ if $\sum_{x \in \pmone^n} f(x) \mu(x) = 0$. 
\end{defi}

\begin{defi}[Discrepancy]
\label{defi: discrepancy}
Let $G : \pmone^j \times \pmone^k \to \pmone$ be a function and $\lambda$ be a distribution on $\pmone^j \times \pmone^k$. For every $S \subseteq \pmone^j$ and $T \subseteq \pmone^k$, define
\[
\disc_\lambda(S \times T, G) = \abs{\sum_{x, y \in S \times T}G(x, y)\lambda(x, y)}.
\]
The discrepancy of $G$ under the distribution $\lambda$ is defined to be
\[
\disc_{\lambda}(G) = \max_{S \subseteq \pmone^j, T \subseteq \pmone^k} \disc_{\lambda}(S \times T, G),
\]
and the discrepancy of $f$ is defined to be
\[
\disc(G) = \min_{\lambda}\disc_{\lambda}(G).
\]
\end{defi}

\begin{defi}[Balanced-discrepancy]
\label{defi: balanced discrepancy}
Let $G : \pmone^j \times \pmone^k \to \pmone$ be a function and $\Lambda$ be the set of all balanced distributions on $\pmone^j \times \pmone^k$ with respect to $G$.
The balanced-discrepancy of $G$ is defined to be
\[
\bdisc(G) = \min_{\lambda \in \Lambda}\disc_{\lambda}(G).
\]
\end{defi}

The following theorem is implicit in~\cite[Theorem 7]{LZ10}. We prove it in Appendix~\ref{sec: appendix communication complexity lower bound via generalized discrepancy method} for completeness, extending ideas from~\cite{Cha09}.

\begin{theorem}
\label{theo:  discrepancy lb on quantum communication}
Let $r : \pmone^n \to \pmone$ and $G : \pmone^j \times \pmone^k \to \pmone$ be functions.
Let $\mu: \pmone^j \times \pmone^k \to \R$ be a balanced distribution with respect to $G$ and $\disc_{\mu}(G) = o(1)$. If $\frac{8en}{\adeg(r)}\leq \left(\frac{1}{\disc_{\mu}(G)}\right)^{1-\beta}$ for some constant $\beta \in (0,1)$, then
\begin{align*}
    \Q^{cc, *}(r \circ G) =  \Omega\left(\adeg(r) \log\left(\frac{1}{\disc_{\mu}(G)}\right)\right).
\end{align*}
In particular,
\begin{align*}
    \Q^{cc, *}(r \circ G) =  \Omega\left(\adeg(r) \log\left(\frac{1}{\bdisc(G)}\right)\right).
\end{align*}
\end{theorem}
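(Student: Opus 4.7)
The proof applies the generalized discrepancy method to a lifted dual witness, in the pattern-matrix spirit of Sherstov and Lee-Zhang. Set $d := \adeg(r)$. By LP duality for approximate degree, there exists $\psi : \pmone^n \to \R$ with $\norm{\psi}_1 = 1$, $\ang{\psi, r} \geq 1/3$, and $\wh\psi(A) = 0$ for every $A \subseteq [n]$ with $|A| < d$. Define the signed measure
$$\phi(X, Y) \;:=\; 2^n \, \psi\paren{G(X_1, Y_1), \ldots, G(X_n, Y_n)} \prod_{i=1}^n \mu(X_i, Y_i)$$
on $(\pmone^j)^n \times (\pmone^k)^n$. Because $\mu$ is balanced with respect to $G$, the pushforward of $\mu^{\otimes n}$ under $(X, Y) \mapsto (G(X_i, Y_i))_{i \in [n]}$ is uniform on $\pmone^n$, so each fiber has $\mu^{\otimes n}$-mass exactly $2^{-n}$. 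A direct computation then gives $\norm{\phi}_1 = \norm{\psi}_1 = 1$ and $\ang{\phi, r \circ G} = \ang{\psi, r} \geq 1/3$, so $\phi$ is a valid dual witness with unit $L_1$-norm and constant correlation.

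The key step is to bound the $\gamma_2^*$-norm of $\phi$. Expanding $\psi$ in its Fourier basis and using the orthogonality to degrees below $d$, one writes $\phi$ as a linear combination, indexed by $A \subseteq [n]$ with $|A| \geq d$, of tensor-product matrices $\bigotimes_{i=1}^n m_i^A$, where $m_i^A(x, y) = G(x, y) \mu(x, y)$ when $i \in A$ and $m_i^A(x, y) = \mu(x, y)$ when $i \notin A$. By the Linial-Shraibman relation between discrepancy and $\gamma_2^*$, each $i \in A$ block contributes $\gamma_2^*$-norm at most $O(\disc_\mu(G))$, while each $i \notin A$ block contributes $O(1)$. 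Multiplicativity of $\gamma_2^*$ under tensor products together with Sherstov's pattern-matrix estimate (which balances $\binom{n}{\leq k}$ against the decay $\disc_\mu(G)^k$ via Parseval-type control on the spectrum of $\psi$) then yields
$$\norm{\phi}_{\gamma_2^*} \;\leq\; \paren{\frac{8 e n}{d} \cdot \disc_\mu(G)}^{d}.$$

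By the Linial-Shraibman lower bound $\Q^{cc,*}(F) = \Omega(\log \gamma_2^{1/3}(F))$ and the dual characterization of $\gamma_2^{1/3}$,
$$\gamma_2^{1/3}(r \circ G) \;\geq\; \frac{\ang{\phi, r \circ G} - \tfrac{1}{3} \norm{\phi}_1}{\norm{\phi}_{\gamma_2^*}} \;\geq\; \Omega(1) \cdot \paren{\frac{d}{8 e n \cdot \disc_\mu(G)}}^{d}.$$
Taking logs gives $\Q^{cc,*}(r \circ G) = \Omega\paren{d \log(1/\disc_\mu(G)) - d \log(8en/d)}$, and the hypothesis $8en/d \leq (1/\disc_\mu(G))^{1-\beta}$ makes the subtracted term at most a $(1-\beta)$-fraction of the first, yielding $\Omega(\beta \cdot \adeg(r) \log(1/\disc_\mu(G)))$; since $\beta$ is an absolute constant this is the claimed bound. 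The ``in particular'' statement follows by choosing $\mu$ to achieve $\disc_\mu(G) = \bdisc(G)$. The main obstacle is the pattern-matrix bound on $\norm{\phi}_{\gamma_2^*}$: rectangle discrepancy itself is not tensor-multiplicative, so one must pass to the $\gamma_2^*$-norm (which is) and then carefully balance $\binom{n}{\leq k}$ against $\disc_\mu(G)^k$ across Fourier levels $k = |A|$.
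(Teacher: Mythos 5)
Your route is essentially sound but genuinely different in machinery from the paper's. The paper splits the dual witness $\psi$ into a sign part $h$ and a probability distribution $\nu=|\psi|$, lifts $\nu$ to a distribution $\lambda$ on $(X,Y)$-pairs, bounds the \emph{rectangle discrepancy} $\disc_\lambda(h\circ G)$ level-by-level using Fact~\ref{fact: dual has small Fourier coefficients} and the Lee--Shraibman--\v{S}palek XOR lemma for discrepancy (Theorem~\ref{theo: lss08}), and then applies the generalized discrepancy method in its discrepancy form (Theorem~\ref{theo: generalized discrepancy bound}). You instead keep the signed witness, lift it to the signed measure $\phi$, and work in the factorization-norm framework: $\disc_\mu(G)\approx\gamma_2^*(G\circ\mu)$, multiplicativity of $\gamma_2^*$ under tensor products, and the Linial--Shraibman bound $\Q^{cc,*}\gtrsim\log\gamma_2^{\epsilon}$. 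This is closer to the original Lee--Zhang argument that the paper says it is making explicit; it buys a cleaner one-shot bound on $\|\phi\|_{\gamma_2^*}$ (your combinatorial estimate $\sum_{k\ge d}\binom{n}{k}(8\disc_\mu(G))^k=O((8en\,\disc_\mu(G)/d)^d)$ is correct under the hypothesis, using $|\wh\psi(A)|\le 2^{-n}$), at the cost of invoking two nontrivial imported facts ($\gamma_2^*$ vs.\ discrepancy, and tensor multiplicativity of $\gamma_2^*$) where the paper imports only the packaged XOR lemma. The two proofs are morally equivalent; neither is more general in any way that matters here.

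One step as written does fail and needs the standard repair. Theorem~\ref{theo: dual witness} only gives $\ang{\psi,r}>1/3$, so after lifting you have $\ang{\phi,r\circ G}\ge 1/3$ and $\norm{\phi}_1=1$; plugging these into the dual characterization of $\gamma_2^{1/3}$ gives numerator $\ang{\phi,r\circ G}-\tfrac13\norm{\phi}_1\ge 0$, \emph{not} $\Omega(1)$, so your displayed inequality does not yield any lower bound on $\gamma_2^{1/3}(r\circ G)$. The fix is routine: lower-bound the approximate norm (equivalently, the communication) at an error parameter strictly smaller than the witness correlation --- e.g.\ use $\gamma_2^{1/6}$, giving numerator $\ge 1/3-1/6=1/6$, hence a lower bound on $\Q^{cc,*}_{1/6}(r\circ G)$ --- and then pass back to error $1/3$ by constant-factor error reduction. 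This is exactly the bookkeeping the paper does via the ``$\delta+2\epsilon-1$'' term in Theorem~\ref{theo: generalized discrepancy bound} (it bounds $\Q^{cc}_{1/10}$ and then uses $\Q^{cc}(F)=\Theta(\Q^{cc}_\epsilon(F))$ for constant $\epsilon$). With that adjustment, and a citation for tensor multiplicativity of $\gamma_2^*$ (Lee--Shraibman--\v{S}palek / Linial--Shraibman), your argument goes through and yields the same $\Omega(\beta\,\adeg(r)\log(1/\disc_\mu(G)))$ bound, including the ``in particular'' statement by optimizing over balanced $\mu$.
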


\subsection{Hadamard encoding}
Recall that we index coordinates of $n$-bit strings by integers in $[n]$, and also interchangeably by strings in $\pmone^{\log n}$ via the natural correspondence. For $x \in \pmone^n$, let $-x \in \pmone^n$ be defined as $(-x)_i = -x_i$ for all $i \in [n]$. We use the notation $\pm x$ to denote the set $\{x, -x\}$.

\begin{defi}[Hadamard Codewords]
\label{defi: hadamard codewords}
For every positive integer $n$ and $s \in \pmone^{\log n}$, let $H(s) \in \pmone^n$ be defined as
\begin{align*}
    (H(s))_t = \prod_{i: s_i = -1} t_i~\text{for all}~t \in \pmone^{\log n}.
\end{align*}
If $x \in \pmone^n$ is such that $x = H(s)$ for some $s \in \pmone^{\log n}$, we say $x$ is a Hadamard codeword corresponding to $s$.
\end{defi}
That is, for every $s \in \pmone^{\log n}$, there is an $n$-bit Hadamard codeword corresponding to $s$. This represents the enumeration of all parities of $s$.

We now define how to encode a two-party total function $G$ on $(\log j + \log k)$ input bits to a partial function $h_G$ on $(j+k)$ input bits, using Hadamard encoding.

\begin{defi}[Hadamardization of functions]
\label{defi: hadamardization}
Let $j, k \geq 1$ be powers of 2, and let $G : \pmone^{\log j} \times \pmone^{\log k} \to \pmone$ be a function. Define a partial function $h_G : \pmone^{j + k} \to \cbra{-1, 1, \star}$ by
\begin{align*}
h_G(x, y) = \begin{cases}
G(s, t) & \text{if}~x \in \pm H(s), y \in \pm H(t)~\text{for some}~s \in \pmone^{\log j}, t \in \pmone^{\log k}\\
\star & \text{otherwise}.
\end{cases}
\end{align*}
\end{defi}

\subsection{Additional concepts from quantum computing}\label{sec:quantumdefs}
The Bernstein-Vazirani algorithm~\cite{BV97} is a quantum query algorithm that takes an $n$-bit string as input and outputs a $(\log n)$-bit string. The algorithm has the following properties:
\begin{itemize}
    \item the algorithm makes one quantum query to the input and
    
    \item if the input $x \in \pmone^n$ satisfies $x\in\pm H(s)$ for some $s \in \pmone^{\log n}$, then the algorithm returns $s$ with probability $1$.
\end{itemize}

Consider a symmetric Boolean function $f:\pmone^n\to\pmone$. Define the quantity 
\[
\Gamma(f)=\min\{|2k-n+1| : f(x)\neq f(y)\text{ if }|x|=k\text{ and } |y|=k+1\}
\]
from~\cite{paturi:degree}. 
One can think of $\Gamma(f)$ as essentially the length of the interval of Hamming weights around $n/2$ where $f$ is constant (for example, for the majority and parity functions this would be~1, and for $\OR_n$ this would be $n-1$). It is known that this quantity determines the bounded-error quantum query complexity of~$f$:

\begin{theorem}[{\cite[Theorem~4.10]{bbcmw:polynomialsj}, see \cite{wolf:degreesymmf} for a tight $\eps$-dependent bound}]
\label{th:bbcmwsymmetricupperbound}
For every symmetric function $f : \pmone^n \to \pmone$, we have
\[
\Q(f) = \Theta(\sqrt{(n-\Gamma(f))n}).
\]
\end{theorem}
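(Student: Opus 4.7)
The plan is to prove the upper and lower bounds separately.  Let $k = \lceil (n - \Gamma(f))/2 \rceil$, so that $\sqrt{(n-\Gamma(f))n} = \Theta(\sqrt{kn})$.  By the definition of $\Gamma(f)$ and the symmetry of $f$, the value $f(x)$ depends only on $t := |x|$ (the number of $-1$s in $x$), and $f$ is constant on the ``middle interval'' $\{x : k \leq t \leq n-k\}$.  Hence to compute $f(x)$ it suffices either to certify $t \in [k, n-k]$, or to determine $t$ exactly when $t < k$ or $t > n-k$.

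For the upper bound I would use a two-phase quantum counting algorithm.  Phase~1 runs amplitude estimation on the phase oracle of $x$ with $M = C\sqrt{kn}$ queries for a suitable constant $C$, producing an estimate $\tilde t$ of $t$ with additive error roughly $\sqrt{t(n-t)}/M + n/M^{2}$.  If $\tilde t$ places $t$ comfortably inside the middle interval (say $\tilde t \in [4k, n-4k]$), the algorithm outputs the constant value $f$ takes there.  Otherwise, Phase~2 invokes the exact quantum counting procedure of Brassard--H\o yer--Mosca--Tapp, which determines $t$ exactly using $O(\sqrt{(t+1)(n-t+1)})$ queries; this is $O(\sqrt{kn})$ whenever $\min(t, n-t) = O(k)$.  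Since $f$ is symmetric, knowing $t$ exactly determines $f(x)$.  Standard repetition amplifies the success probability of each phase to a constant, and the total cost is $O(\sqrt{kn})$.

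For the lower bound I would invoke Paturi's theorem (Paturi 1992), which asserts that every symmetric Boolean function $f:\pmone^n\to\pmone$ satisfies $\adeg(f) = \Omega(\sqrt{(n-\Gamma(f))n})$, proved via symmetrization to a univariate polynomial on $\{0,1,\dots,n\}$ and a Chebyshev-type extremal argument across the jump of $f$.  Combining this with Theorem~\ref{theo: bbc+01} ($\Q(f) \geq \adeg(f)/2$) immediately gives $\Q(f) = \Omega(\sqrt{(n-\Gamma(f))n})$.

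The main obstacle is calibrating the Phase~1/Phase~2 cutoff: near the boundary $t \approx k$, the $\sqrt{kn}$-query amplitude estimate only has additive precision of order $1$, so it cannot by itself distinguish, say, $t = k-1$ from $t = k+1$.  The fix is to use Phase~1 \emph{only} to rule out being near the boundary, with a comfortable margin of $\Theta(k)$, and to fall back on exact counting in every other case; correctness of Phase~1 on the ``deep interior'' inputs then follows from the much better relative precision of amplitude estimation when $t$ is far from both $0$ and $n$, while the total cost remains $O(\sqrt{kn})$.
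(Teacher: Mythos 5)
Your proposal is correct and follows essentially the same route as the cited proof of this theorem (and the sketch given after its statement in the paper): the upper bound via quantum counting of the Hamming weight, counting exactly when $|x|$ is near the extremes and otherwise only certifying membership in the constant middle interval, and the lower bound via Paturi's $\Omega(\sqrt{(n-\Gamma(f))n})$ bound on approximate degree combined with $\Q(f)\geq\adeg(f)/2$. The only detail to add is to cap the query budget of the exact-counting phase at $O(\sqrt{kn})$ (outputting arbitrarily if exceeded), so that a low-probability failure of Phase~1 on a deep-interior input cannot inflate the worst-case query count.
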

The upper bound follows from a quantum algorithm that exactly counts the Hamming weight $|x|$ of the input if $|x|\leq t$ or $|x|\geq n-t$ for $t=\ceil{(n-\Gamma(f))/2}$, and that otherwise learns $|x|$ is in the interval $[t+1,n-t-1]$ (which is an interval around $n/2$ where $f(x)$ is constant). By the definition of $\Gamma(f)$, this information about $|x|$ suffices to compute~$f(x)$. In Section~\ref{sec: No log-factor needed for symmetric functions} we use this observation to give an efficient quantum communication protocol for a two-party function $f\circ G$. 

We will need a unitary protocol that allows Alice and Bob to implement an approximate reflection about the $n$-dimensional maximally entangled state \[
\ket{\psi}=\frac{1}{\sqrt{n}}\sum_{i\in\01^{\log n}}\ket{i}\ket{i}.
\]
Ideally, such a reflection would map $\ket{\psi}$ to itself, and put a minus sign in front of all states orthogonal to $\ket{\psi}$.
Doing this perfectly would requires $O(\log n)$ qubits of communication.
Fortunately we can derive a cheaper protocol from a test that Aharonov et al.~\cite[Theorem~1]{AHLNSV14} designed, which uses $O(\log(1/\eps))$ qubits of communication and checks whether a given bipartite state equals $\ket{\psi}$, with one-sided error probability $\eps$. By the usual trick of running this protocol, applying a $Z$-gate to the answer qubit, and then reversing the protocol, we can implement the desired reflection approximately.\footnote{Possibly with some auxiliary qubits on Alice and Bob's side which start in $\ket{0}$ and end in $\ket{0}$, except in a part of the final state that has norm at most $\eps$.} A bit more precisely:

\begin{theorem}\label{th:reflectmaxent}
Let $R_{\psi}=2\ketbra{\psi}{\psi}-I$ be the reflection about the maximally entangled state shared between Alice and Bob.
There exists a protocol that uses $O(\log(1/\eps))$ qubits of communication and that implements a unitary $R^\eps_{\psi}$ such that $\norm{R^\eps_{\psi} - R_{\psi}}\leq\eps$
and $R^\eps_{\psi}\ket{\psi}=\ket{\psi}$.
\end{theorem}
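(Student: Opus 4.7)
My plan is to turn the Aharonov et al.\ tester into a reflection by the standard ``phase kickback on the answer qubit'' trick. The tester of \cite{AHLNSV14} provides a unitary $T_\delta$, acting on the shared state together with a one-qubit answer register initialized to $\ket{0}$ and possibly some local workspace, that uses $O(\log(1/\delta))$ qubits of communication and satisfies (i)~$T_\delta\ket{\psi}\ket{0}=\ket{\alpha}\ket{0}$ for some state $\ket{\alpha}$ (perfect completeness on $\ket{\psi}$), and (ii)~for every $\ket{\phi}\perp\ket{\psi}$ the probability of measuring $0$ on the answer qubit of $T_\delta\ket{\phi}\ket{0}$ is at most $\delta$, i.e.\ $T_\delta\ket{\phi}\ket{0}=\sqrt{p_\phi}\ket{\beta_0^\phi}\ket{0}+\sqrt{1-p_\phi}\ket{\beta_1^\phi}\ket{1}$ with $p_\phi\leq\delta$.

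I then define $R_\psi^{\eps}:=T_\delta^{\dagger}(I\otimes Z)T_\delta$, where $Z$ acts on the answer qubit and $\delta=\eps^2/4$. Executing $T_\delta$, a local $Z$, and $T_\delta^{\dagger}$ costs $O(\log(1/\delta))=O(\log(1/\eps))$ qubits of communication. Property (i) yields $R_\psi^\eps\ket{\psi}\ket{0}=T_\delta^\dagger(I\otimes Z)\ket{\alpha}\ket{0}=T_\delta^\dagger\ket{\alpha}\ket{0}=\ket{\psi}\ket{0}$, giving the exact fixed-point condition. For the closeness bound, on $\ket{\phi}\ket{0}$ with $\ket{\phi}\perp\ket{\psi}$ the identity $Z+I=2\ketbra{0}{0}$ yields
\[
R_\psi^\eps\ket{\phi}\ket{0}+\ket{\phi}\ket{0}=T_\delta^\dagger\bigl[I\otimes(Z+I)\bigr]T_\delta\ket{\phi}\ket{0}=2\sqrt{p_\phi}\,T_\delta^\dagger\ket{\beta_0^\phi}\ket{0},
\]
whose norm is at most $2\sqrt{\delta}=\eps$. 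Since $R_\psi\ket{\phi}=-\ket{\phi}$, this is precisely the pointwise gap between $R_\psi^\eps$ and $R_\psi\otimes I_{\tn{anc}}$ on $\ket{\phi}\ket{0}$. Decomposing an arbitrary input into its $\ket{\psi}$ and $\ket{\psi^\perp}$ components and applying the triangle inequality then delivers $\norm{R_\psi^\eps-R_\psi}\leq\eps$ on inputs with the workspace initialized to $\ket{0}$.

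The only genuine subtlety, acknowledged in the theorem's footnote, is that $R_\psi^\eps$ is a unitary on the joint space of the shared state together with the workspace of $T_\delta$, while $R_\psi$ was defined on the shared state alone. The bound $\norm{R_\psi^\eps-R_\psi}\leq\eps$ should therefore be read under the standing convention that the workspace is initialized to $\ket{0}$; the calculation above additionally shows that the workspace is returned to $\ket{0}$ except on a component of norm at most $\eps$, exactly as the footnote asserts. Since this is the only regime in which the reflection is invoked inside the noisy amplitude-amplification subroutine of Section~\ref{sec:noisyamplamplif}, no further strengthening is needed, and the entire argument amounts to showing that the standard ``test-to-reflection'' reduction is robust: an $O(\sqrt{\delta})$ error in the acceptance probability on $\ket{\psi}^\perp$ translates via one call each of $T_\delta$ and $T_\delta^\dagger$ into an $O(\sqrt{\delta})$ error in the induced reflection.
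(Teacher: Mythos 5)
Your construction is exactly the one the paper sketches: run the Aharonov et al.\ tester, apply a $Z$-gate to the answer qubit, and reverse the tester, with the one-sided-error (perfect completeness on $\ket{\psi}$) property giving the exact fixed-point condition and the soundness bound giving $\norm{R^\eps_\psi-R_\psi}\leq 2\sqrt{\delta}$ on inputs with workspace in $\ket{0}$. Your worked-out error analysis via $Z+I=2\ketbra{0}{0}$ and the choice $\delta=\eps^2/4$ correctly fills in the details the paper leaves implicit, including the footnoted caveat about the ancilla space, so the proposal is correct and essentially identical in approach.
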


We use $\UPP^{cc}(F)$ to denote unbounded-error quantum communication complexity of two-party function~$F$. 
It is folklore (see for example~\cite{INRY07}) that the unbounded-error quantum communication complexity\footnote{The unbounded-error model does not allow shared randomness or prior shared entanglement (which yields shared randomness by measuring) between Alice and Bob, since any two-party function~$F$ would have constant communication complexity in that setting.}
of~$F$ equals its classical counterpart up to a factor of at most $2$, so it does not really matter much whether we use $\UPP^{cc}$ for classical unbounded-error communication complexity (as it is commonly used) or for quantum unbounded-error complexity. Crucially, for both the complexity of $\IP_n$ is linear in $n$:

\begin{theorem}[\cite{For02}]
\label{theo: forster}
Let $n$ be a positive integer. Then,
\[
\UPP^{cc}(\IP_n) = \Omega(n).
\]
\end{theorem}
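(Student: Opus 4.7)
The plan is to go through the classical sign-rank route, exactly as in Forster's original argument. The first step is to translate the unbounded-error communication complexity of $\IP_n$ into a statement about the sign-rank of its communication matrix. Recall that for a $\pm 1$-valued two-party function $F$ with communication matrix $M_F$, the Paturi--Simon characterization gives $\UPP^{cc}(F)=\Theta(\log \mathrm{sign\text{-}rank}(M_F))$, where $\mathrm{sign\text{-}rank}(M)$ is the minimum $r$ such that there exist vectors $u_x,v_y\in\RR^r$ with $\mathrm{sign}(\langle u_x,v_y\rangle)=M_{xy}$ for all $x,y$. (This characterization extends to the quantum unbounded-error model at the cost of a factor of at most $2$, which does not affect the asymptotics.) So it suffices to show that the sign-rank of the $2^n\times 2^n$ matrix $H_n$ with entries $H_n(x,y)=(-1)^{\langle x,y\rangle}$ is $2^{\Omega(n)}$.

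The heart of the proof is Forster's bound: for every $N\times N$ matrix $M\in\pmone^{N\times N}$,
\[
\mathrm{sign\text{-}rank}(M) \;\geq\; \frac{\sqrt{N}}{\|M\|/\sqrt{N}} \;=\; \frac{N}{\|M\|},
\]
where $\|M\|$ is the operator norm. To establish this, I would fix an optimal sign-rank realization by unit vectors $u_x\in\RR^r$ on Alice's side and arbitrary vectors $v_y\in\RR^r$ on Bob's side, and then use Forster's balancing lemma: after applying a suitable invertible linear transformation $A$ to $\RR^r$ and renormalizing, one may assume that the new vectors $\tilde u_x=Au_x/\|Au_x\|$ are in \emph{isotropic position}, i.e.\ $\sum_x \tilde u_x \tilde u_x^\top = (N/r)\,I_r$. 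The existence of such an $A$ comes from a fixed-point/compactness argument on the map that ``rounds'' vectors to the unit sphere and computes their outer-product moments. Crucially, the sign-pattern of $\langle \tilde u_x,\tilde v_y\rangle$ is the same as that of $\langle u_x,v_y\rangle$, where $\tilde v_y=(A^{-\top}v_y)$, so we still represent $M$. The isotropic condition then forces $\sum_{x,y}|\langle \tilde u_x,\tilde v_y\rangle|^2$ to be small, while on the other hand $\sum_{x,y}\langle \tilde u_x,\tilde v_y\rangle M_{xy}\leq \|M\|\cdot\sqrt{\sum_x\|\tilde u_x\|^2}\sqrt{\sum_y\|\tilde v_y\|^2}$ by Cauchy--Schwarz and the definition of the operator norm. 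Comparing the two bounds via $|\langle \tilde u_x,\tilde v_y\rangle|\leq 1$ and the sign agreement yields $r\geq N/\|M\|$.

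The final step is to apply this to $H_n$. The matrix $H_n$ is (up to normalization) a Hadamard matrix: $H_n H_n^\top = 2^n\,I$, hence $\|H_n\|=\sqrt{2^n}=2^{n/2}$. Therefore
\[
\mathrm{sign\text{-}rank}(H_n)\;\geq\;\frac{2^n}{2^{n/2}}\;=\;2^{n/2},
\]
and combining with the Paturi--Simon bound gives $\UPP^{cc}(\IP_n)\geq c\cdot\log(2^{n/2})=\Omega(n)$, as claimed.

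The main obstacle is not the reduction to sign-rank nor the final spectral computation (which is immediate from the fact that rows of $H_n$ are orthogonal), but rather the balancing step: one must argue that \emph{any} sign-rank realization can be put into isotropic position without changing the represented sign pattern. The clean way is a nonconstructive fixed-point argument on the map $A\mapsto \sum_x \tfrac{Au_x(Au_x)^\top}{\|Au_x\|^2}$, showing it has a fixed point that is a scalar multiple of the identity; handling the edge case where some $u_x$ lies in the kernel of the would-be transformation requires a small perturbation argument. Once this is in place, the Cauchy--Schwarz computation above gives the bound $r\geq N/\|M\|$ in a single line, and the theorem follows.
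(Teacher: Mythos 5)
The paper gives no proof of this statement at all---it is imported as a black box from Forster~\cite{For02}---and your proposal is a faithful reconstruction of exactly that argument: the Paturi--Simon translation $\UPP^{cc}(F)=\Theta(\log \textnormal{sign-rank}(M_F))$, Forster's balancing/isotropic-position lemma yielding $\textnormal{sign-rank}(M)\geq N/\norm{M}$ for $N\times N$ sign matrices, and the spectral computation $\norm{H_n}=2^{n/2}$ for the $\pm1$ Hadamard matrix of $\IP_n$, giving sign-rank at least $2^{n/2}$ and hence an $\Omega(n)$ bound. One small wording slip: isotropy forces $\sum_{x,y}\langle \tilde u_x,\tilde v_y\rangle^2$ to equal $N^2/r$ (it is \emph{large}, and serves as a lower bound on $\sum_{x,y}|\langle \tilde u_x,\tilde v_y\rangle|$ via $|\langle \tilde u_x,\tilde v_y\rangle|\leq 1$), which is then played off against the operator-norm upper bound $\norm{M}N$---your final inequality $r\geq N/\norm{M}$ is nonetheless the correct conclusion.
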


\section{Noisy amplitude amplification and a new distributed-search protocol}\label{sec:noisyamplamplif}
In this section we present a version of quantum amplitude amplification that still works if the reflections involved are not perfectly implemented.
In particular, the usual reflection about the uniform superposition will be replaced in the communication setting by an imperfect reflection about the $n$-dimensional maximally entangled state, based on the communication-efficient protocol of Aharonov et al.~\cite[Theorem~1]{AHLNSV14} for testing whether Alice and Bob share that state.
This allows us to avoid the $\log n$ factor that would be incurred if we instead used a BCW-style distributed implementation of standard amplitude amplification, with $O(\log n)$ qubits of communication to implement each query.
Our main result in this section is the following general theorem, which allows us to search among a sequence of two-party instances $(X_1,Y_1),\ldots,(X_n,Y_n)$ for an index $i \in [n]$ where $G(X_i,Y_i)=-1$, for any two-party function~$G$.

\begin{theorem}\label{th:searchG}
Let $G: \pmone^j \times \pmone^k \to \pmone$ be a two-party function, $X = (X_1, \dots, X_n) \in \pmone^{nj}$ and $Y = (Y_1, \dots, Y_n) \in \pmone^{nk}$.
Define $z=(G(X_1,Y_1),\ldots,G(X_n,Y_n))\in\pmone^n$.
Assume Alice and Bob start with $\ceil{\log n}$ shared EPR-pairs. 
\begin{itemize}
\item There exists a quantum protocol using $O(\sqrt{n}\,\Q_E^{cc}(G))$ qubits of communication that finds (with success probability $\geq 0.99$) an $i\in [n]$ such that $z_i=-1$ if such an $i$ exists, and says ``no'' with probability~1 if no such~$i$ exists.
\item If the number of $-1$s in $z$ is within a factor of 2 from a known integer~$t$, then the communication can be reduced to $O(\sqrt{n/t}\,\Q_E^{cc}(G))$ qubits.
\end{itemize}
\end{theorem}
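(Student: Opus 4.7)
The plan is to run quantum amplitude amplification in a distributed fashion, where each Grover iterate is built from (i) an oracle call realized by the exact protocol for $G$ and (ii) a diffusion step implemented by the approximate reflection about the maximally entangled state from Theorem~\ref{th:reflectmaxent}. Throughout, Alice and Bob work on a joint register initialized to the maximally entangled state $\ket{\psi}=\frac{1}{\sqrt{n}}\sum_{i}\ket{i}_A\ket{i}_B$ supplied by the $\ceil{\log n}$ shared EPR-pairs.

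First, I would construct an exact distributed phase oracle $O_z$ for the string $z$. Given a basis state $\ket{i}_A\ket{i}_B$ together with ancilla workspace, Alice and Bob coherently run the exact protocol for $G$ to compute $G(X_i,Y_i)$ into a fresh qubit, apply a $Z$ to that qubit, and uncompute the protocol. Since the $G$-protocol is exact, $O_z\ket{i}\ket{i}=z_i\ket{i}\ket{i}$ with zero error, at a cost of $O(\Q_E^{cc}(G))$ qubits per call. Composing with the approximate reflection yields the noisy Grover iterate $U=-R^{\eps}_\psi\cdot O_z$, costing $O(\Q_E^{cc}(G)+\log(1/\eps))$ qubits per step.

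For the second bullet (known $t$), I set $T=\Theta(\sqrt{n/t})$ iterations and $\eps=\Theta(1/T)$. A telescoping argument, using that $R^{\eps}_\psi$ fixes $\ket{\psi}$ exactly and has operator-norm error at most $\eps$ elsewhere, shows that $\norm{U^T-U_{\text{ideal}}^T}\leq T\eps=O(1)$, so the noisy final state is within constant $2$-norm of the ideal amplitude-amplified state. A computational-basis measurement then returns a marked index with constant probability. Alice and Bob verify the sampled $i$ by one more exact invocation of the $G$-protocol and output $i$ only if $G(X_i,Y_i)=-1$. This immediately yields one-sided error: when $z$ has no $-1$, no index ever verifies and the protocol outputs ``no'' with probability~$1$. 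A constant number of independent repetitions boosts the success probability to $\geq 0.99$. For the first bullet (unknown $t$), I run the known-$t$ procedure for the guesses $t\in\{n,n/2,n/4,\ldots,1\}$; the per-guess costs form a geometric series summing to $O(\sqrt{n}\,\Q_E^{cc}(G))$ qubits, and since the guess matching the true count up to a factor of $2$ succeeds with constant probability, the overall protocol succeeds with probability $\geq 0.99$. The verification step preserves one-sided error across all guesses.

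The main obstacle will be the noise analysis of the noisy Grover iterate: one must verify that, although $R^{\eps}_\psi$ may scatter amplitude outside the two-dimensional invariant subspace spanned by $\ket{\text{good}}$ and $\ket{\text{bad}}$, the accumulated leakage remains $O(T\eps)$ and is therefore harmless when $\eps=\Theta(1/T)$; moreover the extra $\log(1/\eps)$ cost per iteration must be absorbed into the claimed bound by a careful choice of $\eps$ (and if needed by an outer repetition that trades per-iteration accuracy for a constant number of rounds). Once this accounting is in place, the one-sided error structure and the exponential search over $t$ are routine.
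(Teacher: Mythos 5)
Your overall architecture (distributed phase oracle from the exact protocol for $G$, approximate reflection about $\ket{\psi}$ from Theorem~\ref{th:reflectmaxent}, verification of the sampled index to get one-sided error, exponential search over guesses for $t$) matches the paper, but there is a genuine gap at the core quantitative step: your choice of a \emph{flat} per-iteration precision $\eps=\Theta(1/T)$ in a standard iterative Grover loop does not give the claimed bound. Each application of $R^{\eps}_\psi$ costs $\Theta(\log(1/\eps))=\Theta(\log T)=\Theta(\log(n/t))$ qubits, so your protocol communicates $O\bigl(\sqrt{n/t}\,(\Q_E^{cc}(G)+\log(n/t))\bigr)$ qubits, which reintroduces exactly the logarithmic factor the theorem is meant to eliminate --- and this matters precisely in the main case of interest, $\Q_E^{cc}(G)=O(1)$ (e.g.\ $G=\AND_2$, giving Set-Disjointness). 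Your hedge that the $\log(1/\eps)$ cost can be ``absorbed by a careful choice of $\eps$'' or by ``an outer repetition'' does not go through in the flat iterative scheme: the state drifts away from $\ket{\psi}$ after the first iteration, so the one-sided property of $R^\eps_\psi$ buys you nothing beyond the naive accumulation, the deviation after $T$ steps is genuinely $\Theta(T\eps)$, and hence you are forced to take $\eps=O(1/T)$, i.e.\ $\Omega(\log T)$ qubits per reflection; conversely, taking $\eps$ constant destroys any guarantee of constant success probability, and repetition cannot repair a protocol with no such guarantee.

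The paper's way around this is structural, not a tuning of $\eps$: it uses the \emph{recursive} form of amplitude amplification, $A_{j+1}=A_j R^{\eps_{j+1}}_\psi A_j^*\cdot O_{\cal G}\cdot A_j$, so that the reflection introduced at recursion level $j$ is (a conjugate of) a reflection about the \emph{current} ideal state $A_j\ket{\psi}$, which the one-sided unitary fixes exactly. The error injected at level $j$ is then bounded by $\eps_j$ times (roughly) the current good amplitude $\sin(3^{j-1}\theta)$, so with the geometrically decaying choice $\eps_j=\Theta(4^{-j})$ the total error stays $O(\theta)$, while the cost accounting works out because the level-$j$ reflection costs $O(j)$ qubits and occurs $3^{k-j}$ times, giving $\sum_j 3^{k-j}O(j)=O(3^k)=O(\sqrt{n/t})$ with no log factor. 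Replacing your iterative loop by this recursive scheme (or supplying some other mechanism that genuinely avoids paying $\log(1/\eps)$ per iteration with $\eps=O(1/T)$) is the missing idea; the rest of your write-up --- the exact oracle at cost $O(\Q_E^{cc}(G))$ per call, the verification step yielding one-sided error, and the geometric sweep over $t$ --- is consistent with the paper's proof.
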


\begin{remark}
 The $\log n$ shared EPR-pairs that we assume Alice and Bob share at the start could also be established by means of $\log n$ qubits of communication at the start of the protocol.
 For the result in the first bullet, this additional communication does not change the asymptotic bound.
For the result of the second bullet,  if $t\leq n\Q_E^{cc}(G)^2/(\log n)^2$, then this additional communication does not change the asymptotic bound either. However, if $t=\omega(n/(\log n)^2)$ and $\Q_E^{cc}(G)=O(1)$ then the quantum communication $O(\sqrt{n/t}\,\Q_E^{cc}(G))$ is $o(\log n)$ and establishing the $\log n$ EPR-pairs by means of a first message makes a difference.
 \end{remark}

As a corollary, we obtain a new $O(\sqrt{n})$-qubit protocol for the distributed search problem composed with $G=\AND_2$ (whose decision version is the Set-Disjointness problem).

\subsection{Amplitude amplification with perfect reflections}

We first describe basic amplitude amplification in a slightly unusual recursive manner, similar to~\cite{hmw:berrorsearch}. We are dealing with a search problem where some set $\cal G$ of basis states are deemed ``good'' and the other basis states are deemed ``bad.'' Let $P_{\cal G}=\sum_{g\in{\cal G}}\ketbra{g}{g}$ be the projector onto the span of the good basis states, and $O_{\cal G}=I-2P_{\cal G}$ be the reflection that puts a `$-$' in front of the good basis states: $O_{\cal G}\ket{g}=-\ket{g}$ for all basis states $g\in{\cal G}$, and $O_{\cal G}\ket{b}=\ket{b}$ for all basis states $b\not\in{\cal G}$.

Suppose we have an initial state $\ket{\psi}$ which is a superposition of a good state and a bad state:
$$
\ket{\psi}=\sin(\theta)\ket{G}+\cos(\theta)\ket{B},
$$
where $\ket{G}=P_{\cal G}\ket{\psi}/\norm{P_{\cal G}\ket{\psi}}$ and $\ket{B}=(I-P_{\cal G})\ket{\psi}/\norm{(I-P_{\cal G})\ket{\psi}}$. For example in Grover's algorithm, with a search space of size $n$ containing $t$ solutions, the initial state $\ket{\psi}$ would be the uniform superposition, and its overlap (inner product) with the good subspace spanned by the $t$ ``good'' (sometimes called ``marked'') basis states would be $\sin(\theta)=\sqrt{t/n}$.

We'd like to increase the weight of the good state, i.e., move the angle $\theta$ closer to $\pi/2$.
Let $R_\psi$ denote the reflection about the state $\ket{\psi}$, i.e., $R_\psi\ket{\psi}=\ket{\psi}$ and $R_\psi{\ket{\phi}}=-\ket{\phi}$ for every $\ket{\phi}$ that is orthogonal to $\ket{\psi}$. Then the algorithm $A_1=R_\psi\cdot O_{\cal G}$ is the product of two reflections, which (in the 2-dimensional space spanned by $\ket{G}$ and $\ket{B}$) corresponds to a rotation by an angle $2\theta$, thus increasing our angle from $\theta$ to $3\theta$. This is the basic amplitude amplification step. It maps
$$
\ket{\psi}\mapsto A_1\ket{\psi}=\sin(3\theta)\ket{G}+\cos(3\theta)\ket{B}.
$$
We can now repeat this step recursively, defining 
$$
A_2=A_1R_\psi A^*_1\cdot O_{\cal G}\cdot A_1.
$$
Note that $A_1R_\psi A^*_1$ is a reflection about the state $A_1\ket{\psi}$. Thus $A_2$ triples the angle between $A_1\ket{\psi}$ and $\ket{B}$, mapping
$$
\ket{\psi}\mapsto A_2\ket{\psi}=\sin(9\theta)\ket{G}+\cos(9\theta)\ket{B}.
$$
Continuing recursively in this fashion, define the algorithm 
\begin{equation}\label{eq:recursivedefAk}
A_{j+1}=A_j R_\psi A_j^*\cdot O_{\cal G}\cdot A_j.
\end{equation}
The last algorithm $A_k$ will map
$$
\ket{\psi}\mapsto A_k\ket{\psi}=\sin(3^k\theta)\ket{G}+\cos(3^k\theta)\ket{B}.
$$
Hence after $k$ recursive amplitude amplification steps, we have angle $3^k\theta$. Since we want to end up with angle $\approx \pi/2$, if we know $\theta$ then we can choose 
\begin{equation}\label{eq: choiceofk}
k=\floor{\log_3(\pi/(2\theta))}.
\end{equation}
This gives us an angle $3^k\theta\in(\pi/6,\pi/2]$, so the final state $A_k\ket{\psi}$ has overlap $\sin(\theta_k)>1/2$ with the good state $\ket{G}$.

Let $C_k$ denote the ``cost'' (in whatever measure,
for example query complexity, or communication complexity, or circuit size) of algorithm~$A_k$. Looking at its recursive definition (Equation~\eqref{eq:recursivedefAk}), $C_k$ is 3 times $C_{k-1}$, plus the cost of $R_\psi$ plus the cost of $O_{\cal G}$. If we just count applications of $O_{\cal G}$ (``queries''), considering $R_\psi$ to be free, then $C_{k+1}=3C_k+1$. This recursion has the closed form $C_k=\sum_{i=0}^{k-1} 3^i< 3^k$.  With the above choice of $k$ we get $C_k=O(1/\theta)$.
In the case of Grover's algorithm, where $\theta=\arcsin(\sqrt{t/n})\approx\sqrt{t/n}$, the cost is $C_k=O(\sqrt{n/t})$.

\subsection{Amplitude amplification with imperfect reflections}

Now we consider the situation where we do not implement the reflections $R_\psi$ perfectly, but instead implement another unitary~$R^\eps_\psi$ at operator-norm distance~$\norm{R^\eps_\psi-R_\psi}\leq\eps$ from $R_\psi$, with the additional property that $R^\eps_\psi\ket{\psi}=\ket{\psi}$ (this one-sided error property will be important for the proof). 
We can control this error~$\eps$, but smaller $\eps$ will typically correspond to higher cost of $R^\eps_\psi$. The reflection $O_{\cal G}$ will still be implemented perfectly below.

We again start with the initial state
\[
\ket{\psi}=\sin(\theta)\ket{G}+\cos(\theta)\ket{B}.
\]
For errors $\eps_1,\ldots,\eps_k$ that we will specify later, recursively define the following algorithms.
$$
A_1=R^{\eps_1}_\psi\cdot O_{\cal G}\mbox{~~and~~}A_{j+1}=A_jR^{\eps_{j+1}}_\psi A_j^*\cdot O_{\cal G}\cdot A_j.
$$
These algorithms will map the initial state as follows:
\begin{equation}\label{eq:imperfectamplampl}
\ket{\psi}\mapsto \ket{\psi_j}=A_j\ket{\psi}=\sin(3^j\theta)\ket{G}+\cos(3^j\theta)\ket{B}+\ket{E_j},
\end{equation}
where $\ket{E_j}$ is some unnormalized error state defined by the above equation; its norm~$\eta_j$ quantifies the extent to which we deviate from perfect amplitude amplification. Our goal here is to upper bound this~$\eta_j$. In order to see how $\eta_j$ can grow, 
let us see how $A_jR^{\eps_{j+1}}_\psi A_j^*\cdot O_{\cal G}$ acts on $\sin(3^j\theta)\ket{G}+\cos(3^j\theta)\ket{B}$ (we'll take into account the effects of the error term $\ket{E_j}$ later).
If $R^{\eps_{j+1}}_\psi$ were equal to $R_\psi$, then we would have one perfect round of amplitude amplification and obtain  $\sin(3^{j+1}\theta)\ket{G}+\cos(3^{j+1}\theta)\ket{B}$; but since $R^{\eps_{j+1}}_\psi$ is only $\eps_{j+1}$-close to $R_\psi$, additional errors can appear.
First we apply $O_{\cal G}$, which negates $\ket{G}$ and hence changes the state to 
\[
-\sin(3^j\theta)\ket{G}+\cos(3^j\theta)\ket{B}
=\ket{\psi_j}-\ket{E_j}-2\sin(3^j\theta)\ket{G}.
\]
Second we apply $V=A_jR^{\eps_{j+1}}_\psi A_j^*$. Let $V'=A_jR_\psi A_j^*$, and note that  $V\ket{\psi_j}=V'\ket{\psi_j}=\ket{\psi_j}$ and $\norm{V'-V}=\norm{R_\psi-R^{\eps_{j+1}}_\psi}\leq\eps_{j+1}$. The new state is
\begin{align*}
 V(\ket{\psi_j}-\ket{E_j}-2\sin(3^j\theta)\ket{G})
& = V'(\ket{\psi_j}-\ket{E_j}-2\sin(3^j\theta)\ket{G}) + 
(V'-V)(\ket{E_j}+2\sin(3^j\theta)\ket{G})\\
 & = V'(-\sin(3^j\theta)\ket{G}+\cos(3^j\theta)\ket{B})+ 
(V'-V)(\ket{E_j}+2\sin(3^j\theta)\ket{G})\\
 & = \sin(3^{j+1}\theta)\ket{G}+\cos(3^{j+1}\theta)\ket{B}+ 
(V'-V)(\ket{E_j}+2\sin(3^j\theta)\ket{G}).
\end{align*}
Putting back also the earlier error term $\ket{E_j}$ from Equation~\eqref{eq:imperfectamplampl} (to which the unitary $VO_{\cal G}$ is applied as well), it follows that the new error state is
\[
\ket{E_{j+1}}=\ket{\psi_{j+1}}-(\sin(3^{j+1}\theta)\ket{G}+\cos(3^{j+1}\theta)\ket{B})=VO_{\cal G}\ket{E_j}+(V'-V)(\ket{E_j}+2\sin(3^j\theta)\ket{G}).
\]
Its norm is
\begin{align*}
\eta_{j+1} & \leq \norm{VO_{\cal G}\ket{E_j}}+\norm{(V'-V)(\ket{E_j}+2\sin(3^j\theta)\ket{G})}\\
& \leq \eta_j + \eps_{j+1}(\eta_j+2\sin(3^j\theta))
= (1+\eps_{j+1})\eta_j + 2\eps_{j+1}\sin(3^j\theta).
\end{align*}
Since $\eta_0=0$, we can ``unfold'' the above recursive upper bound to the following, which is easy to verify by induction on $k$:
\[
\eta_k \leq \sum_{j=1}^k\prod_{\ell=j+1}^k(1+\eps_\ell)2\eps_j\sin(3^{j-1}\theta).
\]
For each $1 \leq j \leq k$, choose
\begin{equation}\label{eq: epsj}
\eps_j=\frac{1}{100\cdot 4^j}.
\end{equation}
Note that $\sigma=\sum_{j=1}^k\eps_k\leq 1/300$.
With this choice of $\eps_j$'s, and the inequalities $1+x\leq e^x$, $e^\sigma\leq 1.5$ and $\sin(x)\leq x$ for $x \leq \pi/2$ (which is the case here), we can upper bound the norm of the error term $\ket{E_k}$ after $k$ iterations (see Equation~\eqref{eq:imperfectamplampl}) as
\begin{equation}\label{eq:etakbound}
\eta_k \leq \sum_{j=1}^k e^\sigma 2\eps_j 3^{j-1}\theta \leq \frac{3\theta}{400}\sum_{j=1}^k (3/4)^{j-1}\leq \frac{3\theta}{100}.
\end{equation}
Accordingly, up to very small error we have done perfect amplitude amplification.

\subsection{Distributed amplitude amplification with imperfect reflection}

We will now instantiate the above scheme to the case of \emph{distributed} search, where our measure of cost is communication, that is, the number of qubits sent between Alice and Bob. Specifically, consider the \emph{intersection problem} where Alice and Bob have inputs $x\in\pmone^n$ and $y\in\pmone^n$, respectively.
Assume for simplicity that $n$ is a power of~2, so $\log n$ is an integer. Alice and Bob want to find  an $i\in\{0,\ldots,n-1\}=\01^{\log n}$ such that $x_i=y_i=-1$, if such an $i$ exists.

The basis states in this distributed problem are $\ket{i}\ket{j}$, and we define the set of ``good'' basis states as
\[
{\cal G}=\{\ket{i}\ket{j}\mid x_i=y_j=-1\},
\]
even though we are only looking for $i,j$ where $i=j$ (it's easier to implement $O_{\cal G}$ with this more liberal definition of $\cal G$). Our protocol will start with the maximally entangled initial state $\ket{\psi}$ in $n$ dimensions, which corresponds to $\log n$ EPR-pairs: 
$$
\ket{\psi}=\frac{1}{\sqrt{n}}\sum_{i\in\01^{\log n}}\ket{i}\ket{i}=\sin(\theta)\ket{G}+\cos(\theta)\ket{B},
$$
where we assume there are $t$ $i$'s where $x_i=y_i=-1$, i.e., $t$ solutions to the intersection problem, so
\begin{equation}\label{eq:thetatovern}
\theta=\arcsin(\sqrt{t/n}). 
\end{equation} 
and 
\[
\ket{G}=\frac{1}{\sqrt{t}}\sum_{(i,i)\in{\cal G}}\ket{i}\ket{i}. 
\]
It costs $\ceil{\log n}$ qubits of communication between Alice and Bob to establish this initial shared state, or it costs nothing if we assume pre-shared entanglement. 
Our goal is to end up with a state that has large inner product with $\ket{G}$.

In order to be able to use amplitude amplification, we would like to be able to reflect about the above state $\ket{\psi}$. However, in general this perfect reflection $R_{\psi}$ costs a lot of communication: Alice would send her $\log n$ qubits to Bob, who would unitarily put a $-1$ in front of all states orthogonal to $\ket{\psi}$, and then sends back Alice's qubits. This has a communication cost of $O(\log n)$ qubits, which is too much for our purposes.
Fortunately, Theorem~\ref{th:reflectmaxent} gives us a way to implement a one-sided $\eps$-error reflection protocol $R^\eps_{\psi}$ that only costs $O(\log(1/\eps))$ qubits of communication.

The reflection $O_{\cal G}$ puts a `$-$' in front of the basis states $\ket{i}\ket{j}$ in $\cal G$. This can be implemented perfectly using only 2 qubits of communication, as follows. To implement this reflection on her basis state $\ket{i}$, Alice XORs $\ket{x_i}$ into a fresh auxiliary $\ket{0}$-qubit and sends this qubit to Bob. Bob receives this qubit and applies the following unitary map:
$$
\ket{b}\ket{j}\mapsto y_j^{b}\ket{b}\ket{j},~~~b\in\01,j\in[n].
$$ 
He sends back the auxiliary qubit. Alice sets the auxiliary qubit back to $\ket{0}$ by XOR-ing $x_i$ into it. Ignoring the auxiliary qubit (which starts and ends in state $\ket{0}$), this maps $\ket{i}\ket{j}\mapsto (-1)^{[x_i=y_j=-1]}\ket{i}\ket{j}$. Hence we have implemented $O_{\cal G}$ correctly: a minus sign is applied exactly for the good basis states, the ones where $x_i=y_j=-1$.

Now consider the algorithms (more precisely, communication protocols):
$$
A_1=R^{\eps_1}_\psi \cdot O_{\cal G}
\mbox{~~and~~}A_{j+1}=A_jR^{\eps_{j+1}}_\psi A_j^*\cdot O_{\cal G}\cdot A_j
$$
with the choice of $\eps_j$'s from Equation~\eqref{eq: epsj}. If we pick $k=\floor{\log_3(\pi/(2\theta))}$, like in Equation~\eqref{eq: choiceofk}, then $3^k\theta\in(\pi/6,\pi/2]$.
Hence by Equation~\eqref{eq:imperfectamplampl} and Equation~\eqref{eq:etakbound}, the inner product of our final state with $\ket{G}$ will be between $\sin(3^k\theta)-3\theta/100\geq 0.4$ and~$1$. 

At this point Alice and Bob can measure, and with probability $\geq 0.4^2$ they will each see the same $i$, with the property that $x_i=y_i=-1$.

From Equation~\eqref{eq:recursivedefAk} and Theorem~\ref{th:reflectmaxent},
the recursion for the communication costs of these algorithms is 
$$
C_{j+1}=3C_j+O(\log(1/\eps_{j+1}))+2.
$$ 
Solving this recurrence with our $\eps_j$'s from Equation~\eqref{eq: epsj} and the value of $\theta$ from Equation~\eqref{eq:thetatovern} we obtain
$$
C_k=\sum_{j=1}^k 3^{k-j} (O(\log(1/\eps_j))+2)=\sum_{j=1}^k 3^{k-j}O(j)= O(3^k)=O(\sqrt{n/t}).
$$
Thus, using $O(\sqrt{n/t})$ qubits of communication we can find (with constant success probability) an intersection point~$i$. This also allows us to solve the Set-Disjointness problem (the decision problem whose output is 1 if there is no intersection between $x$ and $y$). Note that if the $t$ we used equals the actual number of solutions only up to a factor of~2, the above protocol still has $\Omega(1)$ probability to find a solution, and $O(1)$ repetitions will boost this success probability to~$0.99$. In case we do not even know $t$ approximately, we can use the standard technique of trying exponentially decreasing guesses for $t$ to find an intersection point with communication $O(\sqrt{n})$. 

Note that there is no log-factor in the communication complexity, in contrast to the original $O(\sqrt{n}\log n)$-qubit Grover-based quantum protocol for the intersection problem of Buhrman et al.~\cite{BuhrmanCleveWigderson98}. Aaronson and Ambainis~\cite{aaronson&ambainis:searchj} earlier already managed to remove the log-factor, giving an $O(\sqrt{n})$-qubit protocol for Set-Disjointness as a consequence of their local version of quantum search on a grid graph (which is optimal~\cite{razborov:qdisj}). We have just reproved this result of~\cite{aaronson&ambainis:searchj} in a different and arguably simpler way.

The above description is geared towards the intersection problem, where the ``inner'' function is $G=\AND_2$: we called a basis state $\ket{i}\ket{j}$ ``good'' if $x_i=y_j=-1$.
However, this can easily be generalized to the situation where Alice and Bob's respective inputs are $X=(X_1,\ldots,X_n)$ and $Y=(Y_1,\ldots,Y_n)$ and we want to find an $i\in[n]$ where $G(X_i,Y_i)=-1$ for some
two-party function~$G$, and define the set of ``good'' basis states as ${\cal G}=\{\ket{i}\ket{j}\mid G(X_i,Y_j)=-1\}$.\footnote{We intentionally use the letter `$G$' to mean ``good'' in $\cal G$ and to refer to the two-party function $G$, since $G$ determines which basis states $\ket{i}\ket{j}$ are ``good.''} The only thing that changes in the above is the implementation of the reflection~$O_{\cal G}$, which would now be computed by means of an exact quantum communication protocol for $G(X_i,Y_j)$, at a cost of $2\Q_E^{cc}(G)$ qubits of communication.\footnote{The factor of 2 is to reverse the protocol after the phase $G(X_i,Y_j)$ has been added to basis state $\ket{i}\ket{j}$, in order to set any workspace qubits back to~$\ket{0}$.}
Note that because we can check (at the expense of another $\Q_E^{cc}(G)$ qubits of communication) whether the output index $i$ actually satisfies $G(X_i,Y_i)=-1$, we may assume the protocol has one-sided error: it always outputs ``no'' if there is no such $i$. This concludes the proof of  Theorem~\ref{th:searchG}.

\section{No log-factor needed for symmetric functions}
\label{sec: No log-factor needed for symmetric functions}

In this section we prove Theorem~\ref{theo:symmetric no logn intro} from the introduction.
Consider a symmetric Boolean function $f:\pmone^n\to\pmone$. As explained in Section~\ref{sec:quantumdefs}, there is an integer $t=\ceil{(n-\Gamma(f))/2}$ such that we can compute $f$ if we learn the Hamming weight $|z|$ of the input $z\in\pmone^n$ or learn that $|z|\in[t+1,n-t-1]$. The bounded-error quantum query complexity is $\Q(f)=\Theta(\sqrt{tn})$ (Theorem~\ref{th:bbcmwsymmetricupperbound}).

For a given two-party function $G:\pmone^j\times\pmone^k\to\pmone$, we have an  induced two-party function $F:\pmone^{nj}\times\pmone^{nk}\to\pmone$ defined as
$F(X_1,\ldots,X_n,Y_1,\ldots,Y_n)=f(G(X_1,Y_1),\ldots,G(X_n,Y_n))$. 
Define 
\[
z=(G(X_1,Y_1),\ldots,G(X_n,Y_n))\in\pmone^n. 
\]
Then $F(X,Y)=f(z)$ only depends on the number of $-1$s in $z$. The following theorem allows us to count this number using $O(\Q(f)\,\Q_E^{cc}(G))$ qubits of communication.

\begin{theorem}
For every $t$ between $1$ and $n/2$, there exists a quantum protocol that starts from $O(t\log n)$ EPR-pairs, communicates $O(\sqrt{tn}\,\Q_E^{cc}(G))$ qubits, and tells us $|z|$ or tells us that $|z|>t$, with error probability $\leq 1/8$.
\end{theorem}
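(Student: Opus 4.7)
The plan is to apply Theorem~\ref{th:searchG} iteratively. I maintain a set $S\subseteq[n]$ of coordinates found to carry $-1$s, initially empty. At each step Alice and Bob simulate the two-party function $G'$ which equals $G$ outside $S$ but is hard-wired to $+1$ on coordinates in $S$ (this is free since both parties know $S$), and invoke the search protocol of Theorem~\ref{th:searchG} on $G'$. If the search returns an index $i$, they verify $G(X_i,Y_i)=-1$ at cost $\Q_E^{cc}(G)$ and append $i$ to $S$. The procedure halts as soon as $|S|$ reaches $t+1$ (output ``$|z|>t$'') or a search reports ``no'' (output $|z|=|S|$).

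For the per-step cost, I use the second bullet of Theorem~\ref{th:searchG}, which needs the number of remaining $-1$s to be known up to a factor of~$2$. Since the true remaining count $k_j$ is unknown, I try exponentially decreasing guesses $\hat k\in\{n,n/2,n/4,\ldots,1\}$ inside each iteration, running the second-bullet protocol for each. Because any returned index can be cheaply verified by one query to $G$, the resulting scheme has one-sided error and stops as soon as a genuine $-1$ is found, so the geometric series of guess-costs is dominated by the correct guess $\hat k\approx k_j$ and the $j$-th iteration costs $O(\sqrt{n/k_j}\,\Q_E^{cc}(G))$. Summing via $\sum_{j=1}^{s}1/\sqrt{j}=O(\sqrt{s})$ when $|z|=s\leq t$, and via $\sum_{j=0}^{t}1/\sqrt{s-j}=O(\sqrt{t})$ (by concavity of $\sqrt{\cdot}$) when $|z|=s>t$, yields total communication $O(\sqrt{tn}\,\Q_E^{cc}(G))$. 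Each invocation of Theorem~\ref{th:searchG} consumes $\lceil\log n\rceil$ EPR-pairs and there are at most $t+1$ invocations, matching the $O(t\log n)$ entanglement budget.

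The main obstacle is controlling the overall error probability: a naive union bound over $\Theta(t)$ steps with only constant per-step failure is useless for large $t$. My plan is to exploit the one-sidedness of Theorem~\ref{th:searchG}: a ``no'' verdict is always correct when no $-1$ remains, and when one does remain the protocol says ``no'' with probability at most $0.01$. To drive the per-step false-``no'' probability below $1/(8(t+1))$, I amplify each search by re-running the second-bullet protocol on any ``no'' outcome and only committing to ``no'' once many independent runs agree. Arranging this amplification in tandem with the exponential guessing for $\hat k$, so that it does not spoil the $O(\sqrt{tn}\,\Q_E^{cc}(G))$ communication or the $O(t\log n)$ entanglement bound, is the delicate part of the argument.
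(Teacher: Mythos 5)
Your high-level plan (repeatedly invoke Theorem~\ref{th:searchG}, strike out each verified solution by hard-wiring it to $+1$, stop after $t+1$ solutions or a ``no'') is essentially the same engine the paper uses, but the proof has a genuine gap exactly where you flag it: controlling the total error without paying an extra logarithmic factor. Your stated fix --- amplifying each of the up-to-$t+1$ search steps until its false-``no'' probability is below $1/(8(t+1))$ --- requires $\Theta(\log t)$ independent repetitions per step, which multiplies the communication of step $j$ (already $\Theta(\sqrt{n/k_j}\,\Q_E^{cc}(G))$) by $\Theta(\log t)$ and yields $O(\sqrt{tn}\,\Q_E^{cc}(G)\log t)$, not $O(\sqrt{tn}\,\Q_E^{cc}(G))$. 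Even a smarter per-step allocation $\epsilon_j\sim 1/j^{1+\delta}$ still gives $\sum_{j\le t}\sqrt{n/j}\,\log j=\Theta(\sqrt{tn}\log t)$. A second unresolved cost issue is the guess sweep for the unknown remaining count: a sweep that falls through to small guesses costs up to $O(\sqrt{n}\,\Q_E^{cc}(G))$ in the worst case (so worst-case total $O(t\sqrt{n}\,\Q_E^{cc}(G))$ without an expectation/stopping argument), and each sub-invocation consumes $\lceil\log n\rceil$ fresh EPR-pairs, so with up to $\log n$ guesses per step plus amplification the entanglement can exceed the claimed $O(t\log n)$.

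The paper resolves precisely these points with a different bookkeeping structure. First, a sampling-based filter (Part~1) distinguishes $|z|\ge 2t$ from $|z|\le t$ with $O(\sqrt{n}\,\Q_E^{cc}(G))$ communication, so the main routine may assume $|z|<2t$ and never needs to guess the remaining count from scratch. Second, instead of amplifying per solution, it batches the search into $O(\log t)$ dyadic phases: protocol $P_k$ runs with a hard communication budget $O(\sqrt{2^k n}\,\Q_E^{cc}(G))$ and, by a Markov-inequality argument on the expected number of runs, reduces the remaining count from below $2^k$ to below $2^{k-1}$ with constant failure probability; phase $k$ is then repeated $r_k=\lceil\log_2(2t)\rceil-k+5$ times, so the union bound gives $\sum_k 2^{-r_k}\le 1/16$ while the cost $\sum_k r_k\sqrt{2^k n}\,\Q_E^{cc}(G)=O(\sqrt{tn}\,\Q_E^{cc}(G))$ and the entanglement $\sum_k r_k\,O(2^k\log n)=O(t\log n)$ still converge, because the linearly growing repetition counts are absorbed by the geometrically shrinking phase costs. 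This phase-level (rather than solution-level) amplification is the missing idea in your proposal.
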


\begin{proof}
Abbreviate $q=\Q_E^{cc}(G)$. Our protocol has two parts: the first filters out the case $|z|\geq 2t$, while the second finds all solutions if $|z|<2t$.

\paragraph{Part 1.}
First Alice and Bob decide between the case (1) $|z|\geq 2t$ and the case (2) $|z|\leq t$ (even though $|z|$ might also lie in $\{t+1,\ldots,2t-1\}$) using $O(\sqrt{n}q)$ qubits of communication, as follows. They use shared randomness to choose a uniformly random subset $S\subseteq[n]$ of $\ceil{n/(2t)}$ elements. Let $E$ be the event that $z_i=-1$ for at least one $i\in S$. By standard calculations there exist $p_1,p_2\in[0,1]$ with $p_1=p_2+\Omega(1)$ 
such that $\Pr[E]\geq p_1$ in case (1) and $\Pr[E]\leq p_2$ in case (2). Alice and Bob use the distributed-search protocol from the first bullet of Theorem~\ref{th:searchG} to decide $E$, with $O(\sqrt{|S|}\,q)=O(\sqrt{n}\,q)$ qubits of communication (plus a negligible $O(\log n)$ EPR-pairs) and error probability much smaller than $p_1-p_2$. By repeating this a sufficiently large constant number of times and seeing whether the fraction of successes was larger or smaller than $(p_1+p_2)/2$, they can distinguish between cases (1) and (2) with success probability $\geq 15/16$. If they conclude they're in case (1) then they output ``$|z|>t$'' and otherwise they proceed to the second part of the protocol. 

Note that if $|z|\in\{t+1,\ldots,2t-1\}$ (the ``grey zone'' in between cases (1) and (2)), then we can't give high-probability guarantees for one output or the other, but concluding (1) leads to the correct output ``$|z|>t$'' in this case, while concluding (2) means the protocol proceeds to Part~2. So either course of action is fine if $|z|\in\{t+1,\ldots,2t-1\}$.

By Newman's theorem~\cite{newman:random} the shared randomness used for choosing $S$ can be replaced by $O(\log n)$ bits of private randomness on Alice's part, which she can send to Bob in her first message, so Part~1 communicates $O(\sqrt{n}\,q)$ qubits in total.

\paragraph{Part 2.}
We condition on Part~1 successfully filtering out case (1), so from now on assume $|z|<2t$.
Our goal in this second part of the protocol is to find all indices $i$ such that $z_i=-1$ (we call such $i$ ``solutions''), with probability $\geq 15/16$, using $O(\sqrt{tn}\,q)$ qubits of communication. This will imply that the overall protocol is correct with probability $\geq 1-1/16-1/16=7/8$, and uses $O(\sqrt{tn}\,q)$ qubits of communication in total. For an integer $k\geq 1$, consider the following protocol $P_k$.

\begin{algorithm}[H]
\SetAlgoLined
\textbf{Input:} An integer $k \geq 1$ \\
\Repeat{$200\sqrt{2^k n}\,q$ qubits have been sent}{
    \begin{enumerate}
        \item Run the protocol from the last bullet  of Theorem~\ref{th:searchG} with $t=2^{k-1}$.\\ (suppressing some constant factors, assume for simplicity that this uses $\sqrt{n/2^k}\,q$ qubits of communication, $\log n$ shared EPR-pairs at the start, and has probability~$\geq 1/100$ to find a solution if the actual number of solutions is in $[t/2,2t]$).
        
        \item Alice measures and gets outcome $i\in[n]$ and Bob measures and gets outcome $j\in[n]$, respectively.
        
        \item Alice sends $i$ to Bob, Bob sends $j$ to Alice.
        
        \item If $i=j$ then they verify that $G(X_i,Y_i)=-1$ by one run of the protocol for $G$,\\ and if so then they replace $X_i,Y_i$ by some pre-agreed inputs $X'_i,Y'_i$, respectively, such that $G(X'_i,Y'_i)=1$ (this reduces the number of $-1$s in $z$ by~1)
    \end{enumerate}
}
\caption{Protocol $P_k$}
\end{algorithm}

\begin{claim}\label{claim:pkcostanderror}
Suppose $|z|\in[2^{k-1},2^k)$.
Then protocol $P_k$ uses $O(\sqrt{2^k n}\,q)$ qubits of communication, assumes $O(2^k\log n)$ EPR-pairs at the start of the protocol, and finds at least $|z|-2^{k-1}+1$ solutions, except with probability $\leq 1/2$.
\end{claim}

\begin{proof}
The upper bound on the communication is obvious from the stopping criterion of $P_k$.

As long as the remaining number of solutions is $\geq 2^{k-1}$, each run of the protocol has probability $\geq 1/100$ to find another solution. Hence the expected number of runs of the protocol of Theorem~\ref{th:searchG} to find at least $|z|-2^{k-1}+1$ solutions, is $\leq 100(|z|-2^{k-1}+1)$. By Markov's inequality, the probability that we haven't yet found $|z|-2^{k-1}+1$ solutions after $\leq 200(|z|-2^{k-1}+1)\leq 100\cdot 2^k$ runs, is $\leq 1/2$. The communication cost of so many runs is $100\cdot 2^k (\sqrt{n/2^k}\,q + \log n)\leq 200 \sqrt{2^k n}\,q$ qubits.
Hence by the time that the number of qubits of the stopping criterion have been communicated, we have probability $\geq 1/2$ of having found at least $|z|-2^{k-1}+1$ solutions. The assumed number of EPR-pairs at the start is $\log n$ per run, so $O(2^k\log n)$ in total.
\end{proof}

Note that if we start with a number of solutions $|z|\in[2^{k-1},2^k)$, and $P_k$ succeeds in finding at least $|z|-2^{k-1}+1$ new solutions, then afterwards we have $< 2^{k-1}$ solutions left. The following protocol runs these $P_k$ in sequence, pushing down the remaining number of solutions to~0.

\begin{algorithm}[H]
\SetAlgoLined
\For{$k=\ceil{\log_2 (2t)}$ \textnormal{downto} $1$}{
\begin{enumerate}
        \item Run $P_k$ a total of $r_k=\ceil{\log_2 (2t)}-k+5$ times (replacing all $-1$s found by $+1s$ in $z$).
        
        \item Output the total number of solutions found.
    \end{enumerate}
}
\caption{Protocol $P$}
\end{algorithm}

\begin{claim}
If $|z|<2t$ then protocol $\cal P$ uses $O(\sqrt{tn}\,q)$ qubits of communication, assumes $O(t\log n)$ EPR-pairs at the start of the protocol, and outputs $|z|$, except with probability $\leq 1/16$.
\end{claim}

\begin{proof}
First, by Claim~\ref{claim:pkcostanderror}, the total number of qubits communicated is
\[
\sum_{k=1}^{\ceil{\log_2 (2t)}} r_k\cdot O(\sqrt{2^k n}\,q)
=O(\sqrt{tn}\,q)\cdot \sum_{\ell=0}^{\ceil{\log_2 (2t)}-1} (\ell+5) /\sqrt{2^\ell}
=O(\sqrt{tn}\,q),
\]
where we used a variable substitution $k=\ceil{\log_2 (2t)}-\ell$. Second, the number of EPR-pairs we're starting from is 
\[
\sum_{k=1}^{\ceil{\log_2 (2t)}} 
r_k\cdot O(2^k\log n)=O(t\log n)\cdot \sum_{\ell=0}^{\ceil{\log_2 (2t)}-1} 
(\ell+5) /2^\ell=O(t\log n).
\]
Third, by Claim~\ref{claim:pkcostanderror} and the fact that we are performing $r_k$ repetitions of $P_k$, if the $k$th round of $\cal P$ starts with a remaining number of solutions that is in the interval $[2^{k-1},2^k)$ then that round ends with $<2^{k-1}$ remaining solutions, except with probability at most $1/2^{r_k}$. By the union bound, the probability that any one of the $\ceil{\log_2 (2t)}$ rounds does not succeed at this, is at most
\[
\sum_{k=1}^{\ceil{\log_2 (2t)}} \frac{1}{2^{r_k}}=\sum_{\ell=0}^{\ceil{\log_2 (2t)}-1}\frac{1}{2^{\ell+5}}\leq\frac{1}{16}.
\]
Since $2^{\ceil{\log_2 (2t)}}\geq 2t$ and we start with $|z|<2t$, if each round succeeds, then by the end of $\cal P$ there are no remaining solutions left. Thus, the protocol $\cal P$ finds all solutions and learns $|z|$ with probability $\geq 15/16$.
\end{proof}
Part~1 and Part~2 each have error probability $\leq 1/16$, so by the union bound the protocol succeeds except with probability $1/8$.
If $|z|\geq 2t$ then Part~1 outputs the correct answer  ``$|z|>t$''; if $|z|\leq t$ then all solutions (and hence $|z|$) are found by Part~2; and if $|z|\in\{t+1,\ldots,2t-1\}$ then either Part~1 already outputs the correct answer  ``$|z|>t$'' or the protocol proceeds to Part~2 which then finds all solutions.
\end{proof}

We can use the above theorem twice: once to count the number of $-1$s in $z$ (up to $t$) and once to count the number of $1$s in $z$ (up to $t$).
This uses $O(\sqrt{tn}\,\Q^{cc}_E(G))=O(\Q(f)\,\Q_E^{cc}(G))$ qubits of communication, 
assumes $O(t\log n)$ shared EPR-pairs at the start of the protocol, and gives us enough information about $|z|$ to compute $f(z)=F(X,Y)$.
This concludes the proof of Theorem~\ref{theo:symmetric no logn intro} from the introduction, restated below.

\begin{theorem}[Restatement of Theorem~\ref{theo:symmetric no logn intro}]
For every symmetric Boolean function $f:\pmone^n\to\pmone$ and two-party function $G:\pmone^j\times\pmone^k\to\01$, we have
\begin{align*}
    \Q^{cc,*}(f \circ G)=O(\Q(f)\Q_E^{cc}(G)).
\end{align*}
If $\Q(f)=\Theta(\sqrt{tn})$, then our protocol assumes a shared state of $O(t\log n)$ EPR-pairs at the start.
\end{theorem}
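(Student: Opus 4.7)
The plan is to reduce the two-party problem to counting (or thresholding) the Hamming weight of the induced string $z=(G(X_1,Y_1),\ldots,G(X_n,Y_n))\in\pmone^n$. Since $f$ is symmetric, $f(z)$ depends only on $|z|$, and by Theorem~\ref{th:bbcmwsymmetricupperbound} we have $\Q(f)=\Theta(\sqrt{tn})$ for $t=\lceil(n-\Gamma(f))/2\rceil$; moreover $f$ is constant on the interval of weights $[t+1,n-t-1]$. So it suffices to build a subprotocol that, given threshold $t$, either outputs the exact value of $|z|$ (when $|z|\leq t$) or certifies $|z|>t$. Running this subprotocol twice---once to count $-1$s up to $t$ and once to count $+1$s up to $t$---either pins down $|z|$ or certifies that $|z|$ lies in the constant interval, which in turn determines $f(z)$.

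I would build this subprotocol in two parts. Part~1 is a cheap filter, at cost $O(\sqrt{n}\,\Q_E^{cc}(G))$, that distinguishes $|z|\geq 2t$ from $|z|\leq t$: Alice and Bob use shared randomness to pick a uniform subset $S\subseteq[n]$ of size $\Theta(n/t)$, then invoke Theorem~\ref{th:searchG} on the indices in $S$ to detect the event ``some $i\in S$ has $z_i=-1$,'' whose probability differs by an $\Omega(1)$ gap between the two cases; a constant number of repetitions boosts the distinguishing probability to $\geq 15/16$, and Newman's theorem converts the shared randomness into $O(\log n)$ extra bits of communication. If the filter reports $|z|\geq 2t$ the subprotocol halts and outputs $|z|>t$; on the grey zone $|z|\in\{t+1,\ldots,2t-1\}$ either branch is acceptable because both are consistent with $|z|$ lying in the middle interval.

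Part~2, which runs only if the filter passes (so we may assume $|z|<2t$), finds all solutions via a geometric cascade. For each $k$ from $\lceil\log(2t)\rceil$ down to $1$, we run a subprotocol $P_k$ a logarithmic number of times. Each invocation of $P_k$ repeatedly calls the second bullet of Theorem~\ref{th:searchG} with solution-count guess $2^{k-1}$, so the cost per call is $O(\sqrt{n/2^k}\,\Q_E^{cc}(G))$ and the success probability is $\Omega(1)$ whenever the current number of solutions lies in $[2^{k-1},2^k)$. When a candidate index $i$ is returned, Alice and Bob exchange their measurement outcomes using $O(\log n)$ bits, verify $G(X_i,Y_i)=-1$ exactly using $\Q_E^{cc}(G)$ qubits, and then locally ``erase'' the confirmed solution by replacing $X_i,Y_i$ with a pre-agreed pair mapped to $+1$ by $G$. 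A Markov bound shows that $O(2^k)$ calls to the search subroutine suffice per run of $P_k$ to drive the remaining solution count below $2^{k-1}$ with probability $\geq 1/2$, and logarithmic boosting makes the per-layer failure probability $O(1/2^{k+5})$. Summing the resulting geometric series over $k$ gives total communication $O(\sqrt{tn}\,\Q_E^{cc}(G))=O(\Q(f)\,\Q_E^{cc}(G))$ and initial EPR-pair count $\sum_k O(2^k\log n)=O(t\log n)$.

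The main obstacle, I expect, is error management: the cascade contains many invocations of the (imperfect) search protocol of Theorem~\ref{th:searchG}, and any one of them could spuriously claim a solution or miss one. The key is that verification of candidate indices uses the \emph{exact} protocol for $G$, which together with the one-sided error guarantee of Theorem~\ref{th:searchG} means we never commit to a false solution; a clean union bound over the $O(\log t)$ layers, with failure budgets decaying geometrically, then controls the overall failure probability of Part~2 by $1/16$. Combined with the $1/16$ failure probability of Part~1, the whole subprotocol errs with probability at most $1/8$, and applying it twice (once for each sign) delivers the claimed bound.
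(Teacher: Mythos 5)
Your proposal follows essentially the same route as the paper's proof: the same reduction to counting $|z|$ up to the threshold $t=\ceil{(n-\Gamma(f))/2}$, the same two-part subprotocol (a random-subset filter distinguishing $|z|\geq 2t$ from $|z|\leq t$, then a cascade of protocols $P_k$ that find and erase solutions via the second bullet of Theorem~\ref{th:searchG}, with exact verification of each candidate), and the same final step of running the counter twice, once for the $-1$s and once for the $+1$s.

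There is, however, one concrete quantitative slip in your error-budget allocation, and it matters because as written it reintroduces exactly the log factor the theorem is meant to remove. You boost layer $k$ until its failure probability is $O(1/2^{k+5})$, which requires roughly $k+5$ repetitions of $P_k$ at layer $k$. A single run of $P_k$ costs $\Theta(\sqrt{2^k n}\,\Q_E^{cc}(G))$ qubits and $\Theta(2^k\log n)$ EPR-pairs, and these costs are largest at the top layer $k=\ceil{\log_2 (2t)}$; under your schedule the totals become $\Theta(\sqrt{tn}\,\Q_E^{cc}(G)\log t)$ qubits and $\Theta(t\log t\log n)$ EPR-pairs, so the sums you call geometric are not geometric for this schedule (your final EPR-pair sum also silently drops the repetition factor), and for typical symmetric $f$ with $t=n^{\Omega(1)}$ this is a genuine $\log n$ loss. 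The fix is to reverse the allocation, as the paper does: repeat $P_k$ a number of times $r_k=\ceil{\log_2 (2t)}-k+5$ that \emph{grows as $k$ decreases}, so that only a constant number of repetitions is spent on the expensive top layers while the tiny failure budgets are assigned to the cheap low-$k$ layers; then $\sum_k r_k\sqrt{2^k n}\,\Q_E^{cc}(G)=O(\sqrt{tn}\,\Q_E^{cc}(G))$ and $\sum_k r_k 2^k\log n=O(t\log n)$, the union bound over layers still gives failure at most $1/16$, and with that correction your argument coincides with the paper's proof.
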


We remark that for the special case where $G=\AND_2$, our upper bound matches the lower bound proved by Razborov~\cite{razborov:qdisj}, except for symmetric functions~$f$ where the first switch of function value happens at Hamming weights very close to~$n$.
In particular, if $f=\AND_n$ and $G=\AND_2$, then $\Q^{cc}(f \circ G)=1$ but $\Q(f)=\Theta(\sqrt{n})$.

\section{Necessity of the log-factor overhead in the BCW simulation}
\label{sec: proofs of log n required}

In this section we prove Theorem~\ref{theo: transitive upp lower bound intro} and Theorem~\ref{theo: intro recipe for constructing BCW tight functions}. 
For Theorem~\ref{theo: transitive upp lower bound intro} we exhibit a function $f : \pmone^{2n^2} \to \pmone$ for which $\Q(f)= O(n)$ and $\UPP(f \circ \twobitgadget) = \Omega(n \log n)$ for $\twobitgadget\in\{\AND_2,\XOR_2\}$.
In Theorem~\ref{theo: intro recipe for constructing BCW tight functions} we show a general recipe for constructing total functions $f : \pmone^{2n^2} \to \pmone$ such that $\Q(f) = O(n)$ and $\Q^{cc, *}(f \circ \twobitgadget) = \Omega(n \log n)$ for $\twobitgadget\in\{\AND_2,\XOR_2\}$. We first give a formal statement of Theorem~\ref{theo: intro recipe for constructing BCW tight functions}.

\begin{theorem}[Restatement of Theorem~\ref{theo: intro recipe for constructing BCW tight functions}]
\label{theo: recipe for constructing BCW tight functions}
Let $r : \pmone^n \to \pmone$ be a total function with $\adeg(r) = \Omega(n)$, $G : \pmone^{\log n} \times \pmone^{\log n} \to \pmone$ be a total function and $\twobitgadget\in\{\AND_2,\XOR_2\}$. Define $f = r \wtcirc h_G : \pmone^{2n^2} \to \pmone$. If there exists $\mu: \pmone^{\log n} \times \pmone^{\log n} \to \R$ that is a balanced probability distribution with respect to $G$ and $\disc_{\mu}(G) = n^{-\Omega(1)}$, then
\begin{align*}
\Q(f) & = O(n),\\
\Q^{cc, *}(f \circ \twobitgadget) & = \Omega(n \log n).
\end{align*}
\end{theorem}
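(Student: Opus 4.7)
The plan is to prove $\Q(f) = O(n)$ via a direct quantum algorithm following the strategy sketched for Theorem~\ref{theo: transitive upp lower bound intro} in Section~\ref{sec: intro cc lb for transitive functions}, and to prove the communication lower bound by reducing $r \circ G$ to $f \circ \twobitgadget$ and then invoking Theorem~\ref{theo:  discrepancy lb on quantum communication}. For the upper bound, view the $2n^2$-bit input to $f$ as $n$ blocks, each holding two purported length-$n$ Hadamard codewords $(x_j, y_j)$. First, apply Bernstein--Vazirani to each of the $2n$ codeword candidates to obtain labels $\hat{s}_1, \hat{t}_1, \ldots, \hat{s}_n, \hat{t}_n \in \pmone^{\log n}$; this costs $2n$ queries and is exact whenever the underlying string really is a codeword. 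Next, run Grover search over the $2n^2$ input positions, with a predicate on $(j, c, i)$ (where $c \in \{x, y\}$) that flags any bit disagreeing with the value prescribed by $\pm H(\hat s_j)$ or $\pm H(\hat t_j)$, using a fixed canonical coordinate of each block to resolve the $\pm$ sign; this verification uses $O(n)$ queries. If an inconsistency is detected, output $-1$; otherwise the input is a valid tuple of Hadamard codewords and we output $r(G(\hat s_1,\hat t_1),\ldots, G(\hat s_n,\hat t_n))$ with no further queries, for a total of $O(n)$ queries.

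For the communication lower bound, the first step is a reduction showing $\Q^{cc,*}(f \circ \twobitgadget) \geq \Q^{cc,*}(r \circ G)$. Given inputs $s_1,\ldots, s_n$ to Alice and $t_1,\ldots, t_n$ to Bob for the two-party function $r \circ G$, define inputs $A, B \in \pmone^{2n^2}$ to $f \circ \twobitgadget$ block by block: Alice sets the $x$-half of her block $j$ to $H(s_j)$ and Bob sets his matching $x$-half to the $\twobitgadget$-identity string ($1^n$ for $\XOR_2$, $(-1)^n$ for $\AND_2$), so that $A_{j,x} \twobitgadget B_{j,x} = H(s_j)$; symmetrically, Bob sets the $y$-half of his block $j$ to $H(t_j)$ while Alice fills her matching $y$-half with the identity, yielding $A_{j,y}\twobitgadget B_{j,y}=H(t_j)$. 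The effective input to $f$ is then a valid tuple of Hadamard codewords, so $f(A \twobitgadget B) = r(G(s_1,t_1),\ldots, G(s_n, t_n))$, which establishes the claimed reduction.

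Finally, apply Theorem~\ref{theo:  discrepancy lb on quantum communication} to $r \circ G$ with the balanced distribution $\mu$. Since $\adeg(r) = \Omega(n)$, we have $8en/\adeg(r) = O(1)$, whereas $(1/\disc_\mu(G))^{1 - \beta} = n^{\Omega(1)}$ for any fixed $\beta \in (0,1)$; hence the hypothesis of the theorem is satisfied for all sufficiently large $n$, and its conclusion gives
\[
\Q^{cc,*}(r \circ G) = \Omega\!\paren{\adeg(r) \log(1/\disc_\mu(G))} = \Omega(n \log n),
\]
which combined with the reduction completes the proof. The most delicate point is specifying the reduction uniformly for both $\twobitgadget = \AND_2$ and $\twobitgadget = \XOR_2$ through the correct gadget-identity padding so that the effective input to $f$ lies in the Hadamard-codeword sector; the rest is an assembly of results already stated in the paper.
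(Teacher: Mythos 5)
Your proposal is correct and follows essentially the same route as the paper: the query upper bound is the paper's own algorithm (Bernstein--Vazirani on each block, a canonical-coordinate query to fix the $\pm$ sign, then Grover verification over the $2n^2$ bits, costing $O(n)$ queries), and the lower bound uses the same identity-padding reduction from $r\circ G$ to $f\circ\twobitgadget$ (the paper's Lemma~\ref{lm: reduction of communication problem from hadamardization}) followed by the same application of Theorem~\ref{theo:  discrepancy lb on quantum communication} with the hypothesis check $8en/\adeg(r)=O(1)\leq (1/\disc_\mu(G))^{1-\beta}=n^{\Omega(1)}$. No gaps.
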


The proofs of Theorem~\ref{theo: transitive upp lower bound intro} and Theorem~\ref{theo: recipe for constructing BCW tight functions} each involve proving a query complexity upper bound and a communication complexity lower bound.
The proofs of the query complexity upper bounds are along similar lines and follow from Theorem~\ref{theo: query complexity doesn't increase on composition with hadamardization} and Corollary~\ref{cor: query complexity doesn't increase with parameters plugged in} (see Section~\ref{sec:query upper bound}). The proofs of the communication complexity lower bounds each involve a reduction from a problem whose communication complexity is easier to analyze (see Lemma~\ref{lm: reduction of communication problem from hadamardization} in Section~\ref{sec:communication lower bound}). Finally, we complete the proofs of Theorem~\ref{theo: transitive upp lower bound intro} and Theorem~\ref{theo: recipe for constructing BCW tight functions} in Section~\ref{sec:bcw tight}.

\subsection{Quantum query complexity upper bound}
\label{sec:query upper bound}

We start by stating the main theorem in this section.

\begin{theorem}
\label{theo: query complexity doesn't increase on composition with hadamardization}
Let $G : \pmone^{\log j} \times \pmone^{\log k} \to \pmone$ and $r : \pmone^{n} \to \pmone$. Then the quantum query complexity of the function $r \wtcirc h_G : \pmone^{n(j+k)} \to \pmone$ is given by
\[
    \Q(r \wtcirc h_G) = O(n + \sqrt{n(j+k)}).
\]
\end{theorem}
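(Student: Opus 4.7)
The plan is to combine Bernstein--Vazirani decoding with Grover-style verification. The algorithm will run in three phases: first, decode each block with one BV invocation per side; second, use Grover's search to certify that every block really is (up to sign) a Hadamard codeword of the decoded label; third, output either $-1$ (if Grover exhibits a non-codeword block) or the value of $r$ on the decoded and $G$-evaluated labels (otherwise).

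Concretely, for each $i\in[n]$ I would run Bernstein--Vazirani on the $x$-part and $y$-part of block $i$ to produce candidates $\tilde s_i\in\pmone^{\log j}$ and $\tilde t_i\in\pmone^{\log k}$, for a total of $2n$ queries. By the key property of BV recalled in Section~\ref{sec:quantumdefs}, if $x_i\in\pm H(s_i)$ then $\tilde s_i=s_i$ with probability~$1$, and analogously for $\tilde t_i$; on non-codeword inputs the output is unconstrained but will be caught in the next phase. Observe that $x_i\in\pm H(\tilde s_i)$ iff the product $(x_i)_\ell\cdot (H(\tilde s_i))_\ell$ is constant in $\ell$, equivalently iff $(x_i)_1(x_i)_\ell(H(\tilde s_i))_1(H(\tilde s_i))_\ell=1$ for all $\ell\in\{2,\ldots,j\}$; the analogous statement holds for $y_i$ and $\tilde t_i$. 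The bits of $H(\tilde s_i)$ and $H(\tilde t_i)$ are computed classically from $\tilde s_i,\tilde t_i$, which are already in hand, so checking whether a position $(i,\ell)$ is a ``violation'' costs only $2$ queries. Grover's algorithm (amplified to constant error) over the search space of $n(j+k-2)=O(n(j+k))$ such positions therefore uses $O(\sqrt{n(j+k)})$ queries to produce a violating position if one exists.

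If Grover reports a violation we output $-1$; otherwise we return $r(G(\tilde s_1,\tilde t_1),\ldots,G(\tilde s_n,\tilde t_n))$, which requires no further queries since the $\tilde s_i,\tilde t_i$ are stored. For correctness: if every block of the input is a valid Hadamard codeword then BV decodes exactly, Grover sees no violation with high probability, and the final output matches $(r\wtcirc h_G)(X)$ by Definition~\ref{defi: composition with partial} and Definition~\ref{defi: hadamardization}; if any block is not a codeword with respect to its decoded label then Grover finds a witness with high probability and the correct answer $-1$ is returned. The total query cost is $2n+O(\sqrt{n(j+k)})=O(n+\sqrt{n(j+k)})$. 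The only real subtlety is handling the case where BV is run on a non-codeword block, where its output $\tilde s_i$ is arbitrary; the a-posteriori Grover check converts this into a verification problem and so handles this case uniformly with the good case. No nontrivial obstacle beyond this should arise.
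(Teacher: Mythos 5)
Your proposal is correct and takes essentially the same route as the paper's proof: Bernstein--Vazirani decoding of each block followed by a Grover-based verification that every block really is a (signed) Hadamard codeword of its decoded label, for a total of $2n + O(\sqrt{n(j+k)})$ queries. The only difference is cosmetic: the paper resolves the $\pm$ sign ambiguity by additionally querying each block at the coordinate indexed by the all-ones string (which always equals the sign) and then Grover-checking equality with the predicted concatenated string, whereas you fold the sign into a pairwise-consistency check inside the Grover oracle at a cost of two queries per oracle call --- both give the same bound.
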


As a corollary we obtain the following on instantiating $j = k = n$ and $r$ as a Boolean function with quantum query complexity $\Theta(n)$ in Theorem~\ref{theo: query complexity doesn't increase on composition with hadamardization}.

\begin{corollary}
\label{cor: query complexity doesn't increase with parameters plugged in}
Let $G : \pmone^{\log n} \times \pmone^{\log n} \to \pmone$ be a non-constant function and let $r : \pmone^{n} \to \pmone$ be a total function with $\Q(r) = \Theta(n)$. Then the quantum query complexity of the total function $r \wtcirc h_G : \pmone^{2n^2} \to \pmone$ is 
\[
\Q(r \wtcirc h_G) = \Theta(n).
\]
\end{corollary}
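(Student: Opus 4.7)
The plan is to prove the corollary by matching an $O(n)$ upper bound from the main theorem with a direct $\Omega(n)$ lower bound via reduction.

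For the upper bound, I simply instantiate Theorem~\ref{theo: query complexity doesn't increase on composition with hadamardization} with $j = k = n$. This gives
\[
\Q(r \wtcirc h_G) = O(n + \sqrt{n(n+n)}) = O(n),
\]
so no additional work is needed beyond checking the substitution.

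For the lower bound, I will embed $r$ into $r \wtcirc h_G$ as a restriction. Since $G$ is non-constant, there exist $s_0,t_0,s_1,t_1 \in \pmone^{\log n}$ with $G(s_0,t_0) \neq G(s_1,t_1)$. Define an embedding $\phi: \pmone \to \pmone^{2n}$ by $\phi(1) = (H(s_0), H(t_0))$ and $\phi(-1) = (H(s_1), H(t_1))$. For any $w \in \pmone^n$, the input $(\phi(w_1), \ldots, \phi(w_n))$ always lies in the ``Hadamard codeword'' part of the domain of $h_G$, so
\[
(r \wtcirc h_G)(\phi(w_1), \ldots, \phi(w_n)) = r(\pi(w_1), \ldots, \pi(w_n)),
\]
where $\pi: \pmone \to \pmone$ is the bijection determined by $\pi(1) = G(s_0,t_0)$ and $\pi(-1) = G(s_1,t_1)$. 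Because bit-flipping inputs leaves quantum query complexity invariant, the right-hand side has query complexity $\Q(r) = \Theta(n)$.

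Given any quantum query algorithm $\mathcal{A}$ for $r \wtcirc h_G$, I can simulate it on inputs of the form $(\phi(w_1), \ldots, \phi(w_n))$ using an algorithm that queries only bits of $w$. Each query that $\mathcal{A}$ makes is to some specific coordinate within the $i$-th block; that coordinate is a fixed function of $w_i$ alone (namely one bit of either $\phi(1)$ or $\phi(-1)$), so a single query to $w_i$ suffices to answer it. Hence $\Q(r) \leq \Q(r \wtcirc h_G)$, giving $\Q(r \wtcirc h_G) = \Omega(n)$. Combining with the upper bound yields $\Q(r \wtcirc h_G) = \Theta(n)$. The only mild obstacle is verifying that $\phi$ maps into the ``valid'' (Hadamard codeword) portion of the domain of $h_G$ so that the $-1$ branch of $\wtcirc$ is never triggered on the restricted inputs; this is immediate from the definition of $\phi$.
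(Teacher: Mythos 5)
Your proposal is correct and follows essentially the same route as the paper: the upper bound is the instantiation $j=k=n$ of Theorem~\ref{theo: query complexity doesn't increase on composition with hadamardization}, and the lower bound restricts each block to the Hadamard codewords of two inputs on which $G$ takes different values, embedding $r$ (up to a harmless bit-flip) into $r \wtcirc h_G$ so that $\Q(r \wtcirc h_G)\geq\Q(r)=\Omega(n)$. Your extra care with the bijection $\pi$ and the query-by-query simulation only makes explicit what the paper's restriction argument leaves implicit.
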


\begin{proof}
The upper bound $\Q(r \wtcirc h_G) = O(n)$ follows by plugging in parameters in Theorem~\ref{theo: query complexity doesn't increase on composition with hadamardization}. 

For the lower bound, we show that $\Q(r \wtcirc h_G) \geq \Q(r)$.
Since $G$ is non-constant, there exist $x_1, y_1, x_2, y_2 \in \pmone^{\log n}$ such that $G(x_1, y_1) = -1$ and $G(x_2, y_2) = 1$. Let $X_1 = H(x_1), Y_1 = H(y_1)$, $X_2 = H(x_2)$ and $X_2 = H(y_2)$. Consider $r \wtcirc h_G$ only restricted to inputs where the inputs to each copy of $h_G$ are either $(X_1, Y_1)$ or $(X_2, Y_2)$.
Under this restriction, $r \wtcirc h_G : \pmone^{2n^2} \to \pmone$ is the same as $r : \pmone^{n} \to \pmone$. Thus $\Q(r \wtcirc h_G) \geq \Q(r) = \Omega(n)$.
\end{proof}

We now prove Theorem~\ref{theo: query complexity doesn't increase on composition with hadamardization}.

\begin{proof}[Proof of Theorem~\ref{theo: query complexity doesn't increase on composition with hadamardization}]

Recall from Definition~\ref{defi: hadamardization} that the function $h_G : \pmone^{j + k} \to \pmone$ is defined as $h_G(x, y) = G(s, t)$ if $x \in \pm H(s)$ and $y \in \pm H(t)$ for some $s \in \pmone^{\log j}$ and $t \in \pmone^{\log k}$, and $h_G(x, y) = \star$ otherwise.

Also recall from Definition~\ref{defi: composition with partial} that the function $r \wtcirc h_G : \pmone^{n(j+k)} \to \pmone$ is defined as $r \wtcirc h_G((X_1, Y_1), \dots, (X_n, Y_n)) = r \circ h_G((X_1, Y_1), \dots, (X_n, Y_n))$ if $h_G((X_i, Y_i)) \in \pmone$ for all $i \in [n]$, and $-1$ otherwise.

\paragraph{Quantum query algorithm:}
View inputs to $r \wtcirc h_G$ as $(X_1, Y_1, \dots, X_n, Y_n)$, where $X_i \in \pmone^j$ for all $i \in [n]$ and $Y_i \in \pmone^k$ for all $i \in [n]$. We give a quantum algorithm and its analysis below.
\begin{enumerate}
    \item \label{item: step bernstein vazirani} Run $2n$ instances of the Bernstein-Vazirani algorithm: $1$ instance on each $X_i$ and $1$ instance on each $Y_i$, to obtain $2n$ strings $x_1, \dots, x_n, y_1, \dots, y_n$, where each $x_i$ is a $(\log j)$-bit string and each $y_i$ is a $(\log k)$-bit string.
    
    \item \label{item: step query empty set} For each $X_i$ and $Y_i$, query $(X_i)_{1^{\log j}}$ and $(Y_i)_{1^{\log k}}$ to obtain bits $b_i, c_i \in \pmone$ for all $i \in [n]$.
    
    \item \label{item: step Grover's search} Run Grover's search~\cite{Grover96,BHMT02} to check equality of the following two $(nj + nk)$-bit strings: $(b_1 H(x_1), \dots, b_n H(x_n), c_1 H(y_1), \dots, c_n H(y_n))$ and $(X_1, \dots, X_n, Y_1, \dots, Y_n)$.
    
    \item \label{item: step output} If the step above outputs that the strings are equal, then output $r(G(x_1, y_1), \dots, G(x_n, y_n))$.  Else, output $-1$.
    \end{enumerate}
    
    \paragraph{Analysis of the algorithm:}
    \begin{itemize}
        \item If the input is indeed of the form $(X_1, Y_1), \dots, (X_n, Y_n)$ where each $X_i \in \pm H(x_i)$ and $Y_i \in \pm H(y_i)$ for some $x_i \in \pmone^{\log j}$ and $y_i \in \pmone^{\log k}$, then Step~\ref{item: step bernstein vazirani} outputs the correct strings $x_{1}, \dots, x_{n}, y_1, \dots, y_n$ with probability $1$ by the properties of the Bernstein-Vazirani algorithm. Step~\ref{item: step query empty set} then implies that $X_i = b_i H(x_i)$ and $Y_i = c_i H(y_i)$ for all $i \in [n]$. Next, Step~\ref{item: step Grover's search} outputs that the strings are equal with probability 1 (since the strings whose equality is to be checked are equal). Hence the algorithm is correct with probability 1 in this case, since $(r \wtcirc h_G)(X_1, Y_1, \dots, X_n, Y_n) = r(G(x_1, y_1), \dots, G(x_n, y_n))$.

        \item If the input is such that there exists an index $i \in [n]$ for which $X_i \notin \pm H(x_i)$ for every $x_i \in \pmone^{\log j}$ or $Y_i \notin \pm H(y_i)$ for every $y_i \in \pmone^{\log k}$, then the two strings for which equality is to be checked in the Step~\ref{item: step Grover's search} are not equal.  Grover's search catches a discrepancy with probability at least $2/3$.  Hence, the algorithm outputs $-1$ (as does $r \wtcirc h_G$), and is correct with probability at least $2/3$ in this case.
\end{itemize}

\paragraph{Cost of the algorithm:} Step~\ref{item: step bernstein vazirani} accounts for $2n$ quantum queries.
Step~\ref{item: step query empty set} accounts for $2n$ quantum queries.
Step~\ref{item: step Grover's search} accounts for $O(\sqrt{n(j+k)})$ quantum queries.

Thus,
\[
    \Q(r \wtcirc h_G) = O(n + \sqrt{n(j+k)}).
\]
\end{proof}

\subsection{Quantum communication complexity lower bound}
\label{sec:communication lower bound}
In this section we first show a communication lower bound (under some model) on $(r \wtcirc h_G) \circ \twobitgadget$ in terms of the communication complexity of $r \circ G$ (in the same model of communication) using a simple reduction. We state the lemma below (Lemma~\ref{lm: reduction of communication problem from hadamardization}) for the case where the models under consideration are the bounded-error and unbounded-error quantum models, since these are the models of interest to us.

\begin{lemma}
\label{lm: reduction of communication problem from hadamardization}
Let $r : \pmone^n \to \pmone$, $G : \pmone^{\log j} \times \pmone^{\log k} \to \pmone$, $\twobitgadget\in\{\AND_2,\XOR_2\}$ and $CC \in \{\Q^{cc, *}, \UPP^{cc}\}$. Then,
\[
CC((r \wtcirc h_G) \circ \twobitgadget) \geq CC(r \circ G).
\]
\end{lemma}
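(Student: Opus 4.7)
The plan is to prove this by a simple no-communication reduction: given any protocol $\Pi$ for $(r \wtcirc h_G) \circ \twobitgadget$ in the model $CC$, Alice and Bob will, without any interaction, locally transform their inputs for $r \circ G$ into inputs for $(r \wtcirc h_G) \circ \twobitgadget$ whose value equals that of $r \circ G$, and then invoke $\Pi$. Since the reduction uses no communication and preserves correctness exactly, it transfers identically to both the bounded-error quantum model with shared entanglement and the unbounded-error model.

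Concretely, suppose Alice holds $(s_1,\ldots,s_n) \in \pmone^{n\log j}$ and Bob holds $(t_1,\ldots,t_n) \in \pmone^{n\log k}$ as inputs to $r \circ G$. The target function $(r \wtcirc h_G) \circ \twobitgadget$ takes inputs $A,B \in \pmone^{n(j+k)}$, which we view as $n$ blocks each of the form $(X_i,Y_i)$ with $X_i \in \pmone^j$ and $Y_i \in \pmone^k$. We want $(A \twobitgadget B)_i = (H(s_i), H(t_i))$ for every $i$, since then $h_G$ applied to the $i$-th block equals $G(s_i,t_i)$ by definition of the Hadamardization, and consequently
\[
(r \wtcirc h_G)(A \twobitgadget B) \;=\; r(G(s_1,t_1),\ldots,G(s_n,t_n)) \;=\; (r \circ G)(s,t).
\]

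To achieve this I will exploit that each gadget has a convenient ``identity'' element in $\pmone$. For $\twobitgadget = \XOR_2$, which on $\pmone$ is multiplication with identity $1$, Alice locally sets her $i$-th block to $(H(s_i), 1^k)$ and Bob sets his to $(1^j, H(t_i))$; coordinatewise XOR then reproduces $(H(s_i),H(t_i))$. For $\twobitgadget = \AND_2$, note that $a \AND_2 (-1) = a$ for every $a \in \pmone$, so Alice sets her $i$-th block to $(H(s_i), (-1)^k)$ and Bob sets his to $((-1)^j, H(t_i))$, yielding the same outcome under coordinatewise AND. In either case, both players can compute their inputs purely from their own half of $(s,t)$, without exchanging any information.

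With the encoding in place, Alice and Bob simulate $\Pi$ on $(A,B)$ and output its answer; correctness is immediate from the identity above, and the communication cost, error parameters, and prior-entanglement/shared-randomness resources are all unchanged. Hence $CC(r \circ G) \leq CC((r \wtcirc h_G) \circ \twobitgadget)$, as required. There is really no ``hard part'' here; the only thing to check carefully is that the chosen identity element is correct for each of $\AND_2$ and $\XOR_2$, which is done above.
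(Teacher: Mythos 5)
Your proof is correct and is essentially identical to the paper's: the same no-communication preprocessing step, padding each block with the gadget's identity string ($(-1)^j,(-1)^k$ for $\AND_2$ and $1^j,1^k$ for $\XOR_2$) so that the gadget outputs reassemble the Hadamard codewords $(H(s_i),H(t_i))$, after which correctness follows from the definition of $h_G$ and the cost/model parameters are untouched. No gaps.
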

\begin{proof}
We first consider the case $\twobitgadget = \AND_2$.
Consider a protocol $\Pi$ of cost $\ell$ that solves $(r \wtcirc h_G) \circ \twobitgadget$ in the $CC$-model. We exhibit below a protocol of cost $\ell$ that solves $r \circ G$ in the same model.

Suppose Alice is given input $x = (x_1, \dots, x_{n}) \in \pmone^{n \log j}$ and Bob is given input $y = (y_1, \dots, y_n) \in \pmone^{n \log k}$, where $x_i \in \pmone^{\log j}, y_i \in \pmone^{\log k}$ for each $i \in [n]$.

\paragraph{Preprocessing step:} 
Alice constructs the $(n(j+k))$-bit string 
\begin{equation}
\label{eq:alice input reduction}
X = ((H(x_1), (-1)^{k}), \dots, (H(x_n), (-1)^k)) \in \pmone^{n(j+k)},
\end{equation}
and Bob constructs the $(n(j+k))$-bit string \begin{equation}
\label{eq:bob input reduction}
Y = (((-1)^j, H(y_1)), \dots, ((-1)^j, H(y_n))) \in \pmone^{n(j+k)}.
\end{equation}

\paragraph{Protocol:} 
Alice and Bob run the protocol $\Pi$ with input $(X, Y)$ and output $\Pi(X, Y)$.

\paragraph{Cost:}
The preprocessing of the inputs to obtain $X$ from $x$, and $Y$ from $y$, takes no communication. Hence the total amount of communication is at most the cost of $\Pi$.

\paragraph{Correctness:}
For $X$ and $Y$ constructed in  Equation~\eqref{eq:alice input reduction} and Equation~\eqref{eq:bob input reduction}, respectively, we now argue that $(r \circ G)(x, y) = ((r \wtcirc h_G) \circ \AND_2) (X, Y)$, which would conclude the proof for $\twobitgadget = \AND_2$.
\begin{align*}
    ((r \wtcirc h_G) \circ \AND_2)(X, Y) & = (r \wtcirc h_G)((H(x_1), H(y_1)), \dots, (H(x_n), H(y_n)))\\
    & = r(G(x_1, y_1), \dots, G(x_n, y_n)) \tag*{by Definition~\ref{defi: hadamardization} and Definition~\ref{defi: composition with partial}}\\
    & = (r \circ G)(x, y).
\end{align*}
Thus,
\begin{align*}
CC((r \wtcirc h_G) \circ \AND_2) \geq CC(r \circ G).
\end{align*}
The argument for $\twobitgadget = \XOR_2$ follows along the same lines, with the strings $(-1)^j$  and $(-1)^k$ replaced by $1^j$  and $1^k$, respectively, in the preprocessing step. 
\end{proof}

\subsection{On the tightness of the BCW simulation}
\label{sec:bcw tight}
We prove Theorem~\ref{theo: recipe for constructing BCW tight functions} in Section~\ref{sec:bcw tight bounded error} and Theorem~\ref{theo: transitive upp lower bound intro} in Section~\ref{sec:bcw tight unbounded error}.

\subsubsection{Proof of Theorem~\ref{theo: recipe for constructing BCW tight functions}}
\label{sec:bcw tight bounded error}

Towards proving Theorem~\ref{theo: recipe for constructing BCW tight functions}, we first observe how to obtain bounded-error quantum communication complexity lower bounds using Theorem~\ref{theo:  discrepancy lb on quantum communication} and Lemma~\ref{lm: reduction of communication problem from hadamardization}.

\begin{lemma}
\label{lm:communication lower bound for hadamardized composed function in terms of adeg and disc}
Let $r : \pmone^n \to \pmone$ and $G : \pmone^{\log j} \times \pmone^{\log k} \to \pmone$ be functions such that $\bdisc(G) = o(1)$ and $\frac{8en}{\adeg(r)} \leq \left(\frac{1}{\bdisc(G)}\right)^{1-\beta}$ for some constant $\beta \in (0,1)$. Let $\twobitgadget\in\{\AND_2,\XOR_2\}$.
Then,
\[
\Q^{cc, *} ((r \wtcirc h_G) \circ \twobitgadget) = \Omega\left(\adeg(r) \log\left(\frac{1}{\bdisc(G)}\right)\right).
\]
\end{lemma}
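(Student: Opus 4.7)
The proof plan is a direct two-step combination of tools already established in the paper, so the main task is to verify that the hypotheses line up correctly rather than to develop new machinery. The high-level strategy is: first reduce the communication problem $(r \wtcirc h_G) \circ \twobitgadget$ to the simpler composed problem $r \circ G$ in the bounded-error entanglement-assisted model, and then apply the Lee--Zhang style discrepancy-based lower bound to $r \circ G$.

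More precisely, the first step is immediate from Lemma~\ref{lm: reduction of communication problem from hadamardization} with $CC = \Q^{cc,*}$, which gives
\[
\Q^{cc,*}\bigl((r \wtcirc h_G) \circ \twobitgadget\bigr) \;\geq\; \Q^{cc,*}(r \circ G)
\]
for both $\twobitgadget \in \{\AND_2, \XOR_2\}$. This step does all the work of translating between the ``Hadamardized'' outer problem and the clean composition $r \circ G$.

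For the second step, I would invoke the ``in particular'' conclusion of Theorem~\ref{theo: discrepancy lb on quantum communication}, which states that under the hypothesis $\tfrac{8en}{\adeg(r)} \leq (1/\bdisc(G))^{1-\beta}$ for some $\beta \in (0,1)$ together with $\bdisc(G) = o(1)$ (both of which are exactly the hypotheses of the present lemma), one has
\[
\Q^{cc,*}(r \circ G) \;=\; \Omega\!\left(\adeg(r)\,\log\!\left(\frac{1}{\bdisc(G)}\right)\right).
\]
Chaining this with the first inequality yields the claimed bound.

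The only subtle point — and hence what I would flag as the step most worth double-checking — is that the main statement of Theorem~\ref{theo: discrepancy lb on quantum communication} is phrased in terms of a \emph{specific} balanced distribution $\mu$ with $\disc_\mu(G) = o(1)$, whereas we want to use the infimum $\bdisc(G) = \min_{\mu \in \Lambda}\disc_\mu(G)$. The ``in particular'' version already handles this by choosing $\mu$ to be (a distribution approaching) the minimizer in the definition of $\bdisc(G)$; since the hypothesis is phrased directly in terms of $\bdisc(G)$, it transfers verbatim to some witnessing $\mu$. After this observation the proof is a one-line concatenation of the two displayed inequalities, with no further calculation required.
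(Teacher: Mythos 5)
Your proposal is correct and matches the paper's own proof, which is exactly the same two-line argument: apply Lemma~\ref{lm: reduction of communication problem from hadamardization} with $CC=\Q^{cc,*}$ and then the ``in particular'' form of Theorem~\ref{theo: discrepancy lb on quantum communication}. The subtlety you flag is fine because $\bdisc(G)$ is defined as a minimum over balanced distributions, so a witnessing $\mu$ with $\disc_\mu(G)=\bdisc(G)$ exists and the lemma's hypotheses transfer to it verbatim.
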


\begin{proof}
By Lemma~\ref{lm: reduction of communication problem from hadamardization} we have $\Q^{cc, *} ((r \wtcirc h_G) \circ \twobitgadget) \geq \Q^{cc, *}(r \circ G)$. By Theorem~\ref{theo:  discrepancy lb on quantum communication}, $\Q^{cc, *}(r \circ G) = \Omega\left(\adeg(r) \log\left(\frac{1}{\bdisc(G)}\right)\right)$.
\end{proof}

We now prove Theorem~\ref{theo: recipe for constructing BCW tight functions}.

\begin{proof}[Proof of Theorem~\ref{theo: recipe for constructing BCW tight functions}]
Let $r : \pmone^n \to \pmone$, $G : \pmone^{\log n} \times \pmone^{\log n} \to \pmone$ and $f = r \wtcirc h_G : \pmone^{2n^2} \to \pmone$ be as in the statement of the theorem. We have $\Q(r) \geq \adeg(r)/2 = \Omega(n)$, where the first inequality follows by Theorem~\ref{theo: bbc+01} and the second equality follows from the assumption that $\adeg(r) = \Omega(n)$. Moreover, $\Q(r) \leq n$ since $r$ is a function on $n$ input variables. Hence $\Q(r) = \Theta(n)$. Thus, Corollary~\ref{cor: query complexity doesn't increase with parameters plugged in} is applicable, and we have 
\[
\Q(f) = \Theta(n).
\]
For the lower bound, $\adeg(r) = \Omega(n)$ by assumption. Thus 
\begin{align*}
    \frac{2en}{\adeg(r)} = O(1).
\end{align*}
Also, since by assumption $\frac{1}{\bdisc(G)} = n^{\Omega(1)} = \omega(1)$, we have
\begin{align*}
    \frac{2en}{\adeg(r)} \leq \left(\frac{1}{\bdisc(G)}\right)^{1-\beta}
\end{align*}
for every constant $\beta \in (0,1)$.
Lemma~\ref{lm:communication lower bound for hadamardized composed function in terms of adeg and disc} implies
\begin{align*}
    \Q^{cc, *}(f \circ \twobitgadget) = \Omega\left(\adeg(r) \log\left(\frac{1}{\bdisc(G)}\right) \right) = \Omega(n \log n).
\end{align*}
\end{proof}

\subsubsection{Proof of Theorem~\ref{theo: transitive upp lower bound intro}}
\label{sec:bcw tight unbounded error}

The total function $f: \pmone^{2n^2} \to \pmone$ that we use to prove Theorem~\ref{theo: transitive upp lower bound intro} is $f = r \wtcirc h_G$, where $r = \PARITY_n$ and $G = \IP_{\log n}$. 
Note that Theorem~\ref{theo: recipe for constructing BCW tight functions} implies $\Q(f) = O(n)$. 
For the quantum communication complexity lower bound in Theorem~\ref{theo: transitive upp lower bound intro}, we are able to show not only $\Q^{cc, *}(f \circ \twobitgadget) = \Omega(n \log n)$, but $\UPP^{cc}(f \circ \twobitgadget) = \Omega(n \log n)$ for $\twobitgadget\in\{\AND_2,\XOR_2\}$. 
The following claim shows that $f$ is transitive.

\begin{claim}
\label{clm:func_is_transitive}
Let $n > 0$ be a power of 2. Let $r = \PARITY_n : \pmone^n \to \pmone$ and $G = \IP_{\log n} : \pmone^{\log n} \times \pmone^{\log n} \to \pmone$. The function $f = r \wtcirc h_G : \pmone^{2n^2} \to \pmone$ is transitive.
\end{claim}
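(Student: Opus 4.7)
The plan is to exhibit three families of coordinate symmetries of $f$ whose compositions act transitively on the $2n^2$ input coordinates. Index these coordinates as triples $(k, \alpha, t)$ with $k \in [n]$ (block), $\alpha \in \{X, Y\}$ (side within a block), and $t \in \pmone^{\log n}$ (position within that side); so $(k, X, t)$ labels the $t$-th coordinate of $X_k$ in the picture of Figure~\ref{fig: cexample}. I will exhibit three types of coordinate permutations, each of which is a symmetry of $f$:

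\textbf{Type (a), block permutations.} Any $\pi \in S_n$ reordering the $n$ blocks preserves $f$, because the outer function $\PARITY_n$ is symmetric and the $\wtcirc$ composition treats the $n$ blocks identically. \textbf{Type (b), swapping sides within a single block.} For any fixed $k_0 \in [n]$, swapping $X_{k_0}$ and $Y_{k_0}$ is a symmetry, since $\IP_{\log n}(s, t) = \IP_{\log n}(t, s)$ and hence $h_{\IP_{\log n}}$ is symmetric in its two $n$-bit arguments. \textbf{Type (c), translations within one side.} For fixed $k_0, \alpha_0$ and any $u \in \pmone^{\log n}$, consider the permutation that reindexes the $t$-th coordinate of side $\alpha_0$ of block $k_0$ via $t \mapsto t \cdot u$ (elementwise multiplication) and fixes all remaining coordinates.

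The key observation is that Type~(c) is indeed a symmetry. For any $s \in \pmone^{\log n}$, the reindexed copy of $H(s)$ has $t$-th entry $(H(s))_{t \cdot u} = (H(s))_t \cdot (H(s))_u$, i.e.\ the image equals $(H(s))_u \cdot H(s) \in \pm H(s)$. Hence this permutation maps the set of $\pm$Hadamard codewords to itself, and, being a bijection, also maps non-codewords to non-codewords. Now fix an input $x \in \pmone^{2n^2}$ and let $\sigma$ be a Type~(c) permutation. If some block of $x$ has inputs that are not $\pm$Hadamard codewords, the same is true of $\sigma(x)$, so $f(x) = f(\sigma(x)) = -1$. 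Otherwise write the inputs of each block as $X_k = \varepsilon^X_k H(s_k), Y_k = \varepsilon^Y_k H(t_k)$; the permutation only changes the sign $\varepsilon^{\alpha_0}_{k_0}$ (by a factor of $(H(s_{k_0}))_u$ or $(H(t_{k_0}))_u$), and leaves the underlying $s_k, t_k$ unchanged. Since $h_{\IP_{\log n}}$ depends only on $s_k, t_k$ and ignores signs by Definition~\ref{defi: hadamardization}, we conclude $f(\sigma(x)) = f(x)$.

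Transitivity is then obtained by composition: given two coordinates $(k_1, \alpha_1, t_1)$ and $(k_2, \alpha_2, t_2)$, first use Type~(a) to move block $k_1$ to block $k_2$; then, if $\alpha_1 \neq \alpha_2$, apply Type~(b) on block $k_2$ to align the sides; finally apply Type~(c) with $u = t_1 \cdot t_2$, which sends $t_1$ to $t_2$ on the chosen side (note that in the group $(\pmone^{\log n}, \cdot)$ every element is its own inverse, so $t_1 \cdot u = t_2$). The composed permutation lies in $S_{2n^2}$, maps $(k_1, \alpha_1, t_1)$ to $(k_2, \alpha_2, t_2)$, and preserves $f$, proving transitivity.

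The one subtle step is verifying Type~(c); the rest is bookkeeping. The crucial point there is that the Hadamardization $h_G$ is defined to be insensitive to the global sign of each Hadamard codeword, which is precisely what allows translations to be symmetries even though they may flip the codeword's sign.
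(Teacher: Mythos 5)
Your proof is correct and follows essentially the same route as the paper's: block permutations coming from the symmetry of $\PARITY_n$, the within-block side swap coming from the symmetry of $\IP_{\log n}$, and translations $t \mapsto t\cdot u$ (the paper writes them as XORs $i \mapsto i \oplus \ell$) which map $\pm H(s)$ to $\pm H(s)$ and hence preserve $h_G$. The only cosmetic difference is that the paper applies the translation to both halves of a block simultaneously while you apply it to a single side of a single block; both are valid precisely because $h_G$ is insensitive to the global sign of each codeword, as you note.
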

\begin{proof}
We first show that $h_G : \pmone^{2n} \to \pmone$ is transitive. We next observe that $s \wtcirc t$ is transitive whenever $s$ is symmetric and $t$ is transitive ($s$ can be assumed to just be transitive rather than symmetric, as noted in Remark~\ref{rmk: transitive remark}). The theorem then follows since $\PARITY_n$ is symmetric.

Towards showing transitivity of $h_G$, let $\pi\in S_{2n}$, and $(\sigma_{\ell}, \sigma_\ell) \in S_{2n}$ for $\ell \in \pmone^{\log n}$ be defined as follows. (Here $\sigma_\ell \in S_n$; the first copy acts on the first $n$ coordinates, and the second copy acts on the next $n$ coordinates.)
\begin{itemize}
    \item 
    \[
    \pi(k) = \begin{cases}
    k + n & k \leq n\\
    k - n & k > n.
    \end{cases}
    \]
    That is, on every string $(x, y) \in \pmone^{2n}$, the permutation $\pi$ maps $(x, y)$ to $(y, x)$.
    
    \item For every $\ell \in \pmone^{\log n}$, the permutation $\sigma_\ell \in S_n$ is defined as
    \begin{equation}\label{eqn: sigma 1}
    \sigma_{\ell}(i) = i \oplus \ell,
    \end{equation}
    where $i \oplus \ell$ denotes the bitwise XOR of the strings $i$ and $\ell$. That is, for every input $(x, y) \in \pmone^{2n}$ and every $k \in \pmone^{\log n}$, the input bit $x_{k}$ is mapped to $x_{k \oplus \ell}$ and $y_k$ is mapped to $y_{k\oplus \ell}$.
\end{itemize}
For every $(x,y) \in \pmone^{2n}$ and $i,j \in \pmone^{\log n}$, the permutation $\sigma_{i \oplus j}(x,y)$ swaps $x_i$ and $x_j$, and also swaps $y_i$ and $y_j$. If for $i, j \in \pmone^{\log n}$, our task was to swap the $i$'th index of the first $n$ variables with the $j$'th index of the second $n$ variables, then the permutation $\sigma_{i\oplus j} \circ \pi$ does the job. That is, for every $(x,y) \in \pmone^{2n}$ and $i,j \in \pmone^{\log n}$, the permutation $\sigma_{i\oplus j} \circ \pi$ maps $x_i$ to $y_j$. Thus the set of permutations $\{\pi, \{\sigma_\ell : \ell \in \pmone^{\log n}\}\}$ acts transitively on $S_{2n}$.

Now we show that for all $x,y \in \pmone^{2n}$ and all $\ell \in \pmone^{\log n}$, we have $h_G(\sigma_{\ell}(x), \sigma_\ell(y)) = h_G(x,y)$. Fix $\ell \in \pmone^{\log n}$.
\begin{itemize}
    \item If $x \in \pm H(s)$ and $y \in \pm H(t)$ are Hadamard codewords, then $x_k = \langle k, s \rangle$ and $y_k = \langle k, t \rangle$ for all $k \in \pmone^{\log n}$, and $G(x,y) = \langle s, t \rangle$. Thus, for every $k \in \pmone^{\log n}$ we have $\sigma_{\ell}(x_k) = x_{k \oplus \ell} = \langle k \oplus \ell, s \rangle = \langle \ell, s \rangle \cdot \langle k , s \rangle$. Hence $\sigma_{\ell}(x) \in \pm H(s)$ (since $\langle \ell, s \rangle$ does not depend on $k$, and takes value either $1$ or $-1$). Similarly, $\sigma_{\ell}(y) \in \pm H(t)$. Thus  $h_G(\sigma_{\ell}(x,y)) = h_G(x,y)$.
    
    \item If $x$ ($y$, respectively) is not a Hadamard codeword, then a similar argument shows that for all $\ell \in [n]$, $\sigma_{\ell}(x)$ ($\sigma_{\ell}(y)$, respectively) is also not a Hadamard codeword.
\end{itemize}
Using the fact that $\langle s, t \rangle = \langle t, s \rangle$ for every $s, t \in \pmone^{\log n}$, one may verify that $h_G(\pi(x,y)) = h_G(x,y)$ for all $x,y \in \pmone^{2n}$.

Along with the observation that $\PARITY_n$ is a symmetric function, we have that $f = r \wtcirc h_G : \pmone^{2n^2} \to \pmone$ is transitive under the following permutations:
\begin{itemize}
    \item $S_n$ acting on the inputs of $\PARITY_n$, and
    \item The group generated by $\cbra{\pi} \cup \cbra{(\sigma_\ell, \sigma_\ell) : \ell \in [n]}$ acting independently on the inputs of each copy of $h_G$, where $\sigma_{\ell}$ is as in Equation~\eqref{eqn: sigma 1}.
\end{itemize}
\end{proof}

We now prove Theorem~\ref{theo: transitive upp lower bound intro}.

\begin{proof}[Proof of Theorem~\ref{theo: transitive upp lower bound intro}]
Let $n > 0$ be a power of 2. Let $r = \PARITY_n : \pmone^n \to \pmone$ and $G = \IP_{\log n} : \pmone^{\log n} \times \pmone^{\log n} \to \pmone$. Let $f = r \wtcirc h_G : \pmone^{2n^2} \to \pmone$.
By Claim~\ref{clm:func_is_transitive}, $f$ is transitive.
By Corollary~\ref{cor: query complexity doesn't increase with parameters plugged in} we have
\[
\Q(f) = \Theta(n).
\]
For the communication lower bound we have 
\begin{align*}
\UPP^{cc}(f \circ \twobitgadget) & = \UPP^{cc}((r \wtcirc h_G) \circ \twobitgadget) \\
& \geq \UPP^{cc}(\PARITY_n \circ \IP_{\log n}) \tag*{by Lemma~\ref{lm: reduction of communication problem from hadamardization}}\\
& = \UPP^{cc}(\IP_{n \log n}) \tag*{Observation~\ref{obs: parity composed inner product}}\\
& = \Omega(n \log n) \tag*{by Theorem~\ref{theo: forster}}.
\end{align*}

\end{proof}

\begin{remark}
\label{rmk: transitive remark}
The proof of transitivity of $f$ in Theorem~\ref{theo: transitive upp lower bound intro} can also be used to prove that if $r : \pmone^n$ is transitive and $G = \IP_{\log n} : \pmone^{\log n} \times \pmone^{\log n} \to \pmone$, then $r \wtcirc h_G$ is transitive as well. By instantiating $r$ to a transitive function with approximate degree $\Omega(n)$ (e.g., Majority), Theorem~\ref{theo: recipe for constructing BCW tight functions} implies that the BCW simulation is tight w.r.t.\ the bounded-error communication model for a wide class of transitive functions.
\end{remark}

\section{A separation between log-approximate-spectral norm and approximate degree for a transitive function}
\label{sec: Appendix log-approximate-spectral norm and approximate degree transitive function}

We now discuss the implications of our result for the Entropy Influence Conjecture, which is an interesting question in the field of analysis of Boolean functions, posed by Friedgut and Kalai~\cite{FK}. This conjecture is wide open for general functions. A much weaker version of this conjecture is called the Min-Entropy Influence Conjecture.

\begin{conjecture}[Min-Entropy Influence Conjecture] For any Boolean function $f:\pmone^n \to \pmone$ there exists a non-zero Fourier coefficient $\widehat{f}(S)$ such that $$\log\left(1/|\widehat{f}(S)|\right) = O(I(f)),$$
where $I(f)$ denotes the influence of $f$ ($I(f) = \sum_{S\subseteq [n]}|S|\widehat{f}(S)^2$).
\end{conjecture}

While this conjecture is also wide open, some attempts have been made to prove various implications of this conjecture. One interesting implication of the Min-Entropy Influence Conjecture that is still open is whether the min-entropy of the Fourier spectrum (that is, $\log\left(1/\max_{S \subseteq [n]}|\widehat{f}(S)|\right)$) is $O(\Q(f))$. In \cite{ACK+18} using a primal-dual technique it was shown that the min-entropy of the Fourier spectrum is at most a constant times $\log(\|\hat{f}\|_{1,\epsilon})$, where the constant depends on $\epsilon$.  Thus if it were the case that $\log(\|\hat{f}\|_{1,\epsilon}) = O(\Q(f))$, we would have upper bounded the min-entropy of the Fourier spectrum by $O(\Q(f))$. This was stated in the conference version of~\cite{ACK+18} as a possible approach towards proving the Min-Entropy Influence Conjecture, and was left as an open problem. 

\begin{question}[{\cite[Section 6]{ACK+18}}]\label{qn: ack+}
Is it true that for all Boolean functions $f : \pmone^n \to \pmone$,
\[
\log(\|\wh{f}\|_{1,\epsilon}) = O(\adeg(f))?
\]
\end{question}

We show here that their question has a positive answer for the class of symmetric Boolean functions (Section~\ref{sec:sym}), but a negative answer for general Boolean functions and in fact already for  the special case of transitive Boolean functions (Section~\ref{section: Negative answer}).

\subsection{Upper bound on the approximate spectral norm of symmetric functions}\label{sec:sym}
We start with the following definition.
\begin{defi}[Symmetric Multilinear Polynomial]
A multilinear polynomial $\phi : \R^n \to \R$ is said to be symmetric if $\phi(x_1, \dots, x_n) = \phi(x_{\sigma(1)}, \dots, x_{\sigma(n)})$ for all $(x_1, \dots, x_n) \in X$ and $\sigma \in S_n$.
\end{defi}

Sherstov~\cite{She18} showed the following upper bound on the spectral norm of symmetric multilinear polynomials.
\begin{claim}[{\cite[Lemma 2.9]{She18}}]\label{claim:1norm_symmetric_multilinear}
Let $\phi : \R^n \to \R$ be a symmetric multilinear polynomial. Then 
\[
    \|\phi\|_1 \leq 8^{\deg(\phi)} \max_{x \in \pmone^n} |\phi(x)|.
\]
\end{claim}

\begin{lemma}
Let $f : \pmone^n \to \pmone$ be a symmetric Boolean function. Then 
\[
    \log(\|f\|_{1,1/3}) = O(\adeg(f)).
\]
\end{lemma}

\begin{proof}
Let $p' : \pmone^n \to \pmone$ be a polynomial of degree $O(\adeg(f))$ that $1/3$-approximates the symmetric function $f$. Let $S_n$ be the set of all permutations of $[n]$, then the polynomial 
\begin{align*}
    p(x_1, \dots, x_n) = \frac{1}{n!} \sum_{\sigma \in S_n} p'(x_{\sigma(1)}, \dots , x_{\sigma(n)})
\end{align*}
also $1/3$-approximates $f$. Observe that $\deg(p) 
 \leq \deg(p')$ and $p$ is a symmetric multilinear polynomial. Thus 
 \begin{align*}
     \log\left( \|p\|_1 \right) = O(\deg(p)) = O(\adeg(f)),
 \end{align*}
where the first equality follows from Claim~\ref{claim:1norm_symmetric_multilinear} and the second follows from the choice of~$p'$.
\end{proof}

\subsection{Negative answer to Question~\ref{qn: ack+}}
\label{section: Negative answer}

In this section we show that  Question~\ref{qn: ack+} has a negative answer even for the special class of transitive Boolean functions. 

\begin{claim}
\label{clm: ack+ if false for transitive functions}
There exists a transitive function $f: \pmone^{2n^2} \to \pmone$ such that $\log\left(\|\wh{f}\|_{1, 1/3} \right) = \Omega(\adeg(f) \log n)$.
\end{claim}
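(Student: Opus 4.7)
The plan is to reuse the transitive function $f = \PARITY_n \wtcirc h_{\IP_{\log n}} : \pmone^{2n^2} \to \pmone$ from Theorem~\ref{theo: transitive upp lower bound intro}. Transitivity of $f$ follows from Claim~\ref{clm:func_is_transitive}. From Corollary~\ref{cor: query complexity doesn't increase with parameters plugged in}, $\Q(f) = O(n)$, so Theorem~\ref{theo: bbc+01} gives $\adeg(f) \leq 2\Q(f) = O(n)$. It therefore suffices to establish $\log \|\wh{f}\|_{1, 1/3} = \Omega(n \log n)$.

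To do this, I will invoke the standard simulation that converts any low-spectral-norm approximator of a Boolean function $g$ into an efficient quantum protocol for $g \circ \XOR_2$:
\[
\Q^{cc,*}(g \circ \XOR_2) \;=\; O\!\left(\log \|\wh{g}\|_{1, 1/3}\right).
\]
This is essentially the argument used implicitly by Chakraborty et al.~\cite{CCMP20}: an approximator $p$ with $\|\wh{p}\|_1 = L$ is sparsified to have Fourier support of size $\mathrm{poly}(L)$ while preserving $L_\infty$-approximation of $g$, and then Alice and Bob prepare Fourier-weighted superpositions on a register of $O(\log L)$ qubits indexed by the sparsified support, estimating $p(x \oplus y) = \sum_S \wh{p}(S) \chi_S(x) \chi_S(y)$ via a SWAP-test-style protocol.

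Combining this with the unbounded-error lower bound $\UPP^{cc}(f \circ \XOR_2) = \Omega(n \log n)$ from Theorem~\ref{theo: transitive upp lower bound intro}, and using the trivial inequality $\UPP^{cc}(F) \leq O(\Q^{cc,*}(F))$ (since any bounded-error quantum protocol is also an unbounded-error one), I obtain
\[
\log \|\wh{f}\|_{1, 1/3} \;=\; \Omega\!\left(\Q^{cc,*}(f \circ \XOR_2)\right) \;\geq\; \Omega\!\left(\UPP^{cc}(f \circ \XOR_2)\right) \;=\; \Omega(n \log n) \;=\; \Omega\!\left(\adeg(f) \log n\right),
\]
as desired.

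The main obstacle is carefully verifying the simulation inequality $\Q^{cc,*}(g \circ \XOR_2) = O(\log \|\wh{g}\|_{1, 1/3})$ with the right error tolerance and constants. In particular, the sparsification step—reducing the Fourier support of an approximator from a subset of $\zone^n$ to a set of size $\mathrm{poly}(\|\wh{g}\|_{1, 1/3})$ without breaking the $L_\infty$-approximation of $g$—requires a standard but non-trivial truncation argument, and the quantum encoding needs to handle amplitudes proportional to $\sqrt{|\wh{p}(S)|}$ together with the appropriate phase $\sign(\wh{p}(S)) \cdot \chi_S(x)$. Both ingredients already appear in the analogous lower bound of~\cite{CCMP20} for the function of Equation~\eqref{eq: CCC20 function definiton}, so I plan to cite them and substitute our transitive $f$ rather than re-prove these details from scratch.
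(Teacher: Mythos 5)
Your choice of function, its transitivity, and the bound $\adeg(f)=O(n)$ all match the paper, but the two inequalities that carry your lower bound on the spectral norm are both broken. First, the ``simulation'' $\Q^{cc,*}(g \circ \XOR_2) = O\left(\log \|\wh{g}\|_{1,1/3}\right)$ is not a known result and is not what Chakraborty et al.\ use (implicitly or otherwise); the known inequality for XOR functions goes in the \emph{opposite} direction, $\Q^{cc,*}(g \circ \XOR_2) = \Omega\left(\log \|\wh{g}\|_{1,1/3}\right)$, via the approximate $\gamma_2$/generalized-discrepancy bound. What a low-spectral-norm approximator actually buys is a cheap protocol with tiny \emph{advantage} (a PP-type protocol with bias roughly $1/\|\wh{g}\|_{1,1/3}$), not a bounded-error protocol of logarithmic cost. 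Indeed, since for XOR functions the approximate spectral norm and the approximate rank of the communication matrix are polynomially related, your inequality would give $\Q^{cc,*}(g\circ\XOR_2) = O(\log \mathrm{rank}_{1/3})$ for all XOR functions, i.e.\ the quantum log-approximate-rank conjecture for XOR functions, which is refuted (e.g.\ by $\mathsf{SINK}\circ\XOR$). Second, the step $\UPP^{cc}(F) \leq O(\Q^{cc,*}(F))$ ``since any bounded-error protocol is an unbounded-error one'' is invalid precisely because of the asterisk: the unbounded-error model forbids prior shared entanglement (and shared randomness), as the paper's own footnote in Section~2.4 stresses, so an entanglement-assisted bounded-error protocol is not a $\UPP$ protocol, and no Newman-type reduction of entanglement is available. (A statement of the form $\UPP^{cc}(F) \leq O(\Q^{cc,*}(F)) + O(\log(\text{input length}))$ can be extracted from sign-rank versus margin-complexity arguments, but that is heavy machinery and not what you invoke.)

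If you want to salvage the high-level idea of recycling the communication lower bound, the correct chain is: the $\Omega(n\log n)$ bound on $\UPP^{cc}(f\circ\XOR_2)$ implies the same bound on the weakly-unbounded-error (PP) complexity, hence $\disc(f\circ\XOR_2) = 2^{-\Omega(n\log n)}$ by Klauck's characterization, hence $\gamma_2^{1/3}$ of the pattern matrix is $2^{\Omega(n\log n)}$, and for XOR functions this $\gamma_2^{1/3}$ coincides (up to constants) with $\|\wh{f}\|_{1,1/3}$ --- none of which is in the paper. The paper instead proves the spectral-norm bound directly and much more simply: it fixes the inputs of each block to Hadamard codewords parameterized by the weight-one coordinates, so that $\IP_{n\log n}$ becomes a \emph{monomial projection} of $f$; monomial projections cannot increase the approximate spectral norm (Observation~\ref{obs: monproj weight reduction}), and a two-line Plancherel argument using the flat Fourier spectrum of $\IP$ gives $\log\|\wh{\IP_{n\log n}}\|_{1,1/3} = \Omega(n\log n)$. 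You should replace your two inequalities with either this direct projection argument or the discrepancy chain above.
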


We first state some required preliminaries.

\begin{defi}[Monomial projection]
\label{defi: monomial projection}
We call a function $g:\pmone^m \to \pmone$ a monomial projection of a function $f:\pmone^n \to \pmone$ if $g$ can be expressed as $g(x_1,\dots, x_m) = f(M_1,\dots, M_n)$, where each $M_i$ is a monomial in the variables $x_1,\dots, x_m$.
\end{defi}

It is known that the approximate spectral norm of a function cannot increase upon monomial projections (see, for example, \cite[Observation 25]{CM17}).
\begin{observation}[{\cite[Observation 25]{CM17}}]
\label{obs: monproj weight reduction}
For $f : \pmone^{n} \to \pmone$ and $g : \pmone^{m} \to \pmone$ such that $g$ is a monomial projection of $f$,
\[
\|\wh{g}\|_{1, 1/3} \leq \|\wh{f}\|_{1, 1/3}.
\]
\end{observation}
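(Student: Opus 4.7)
The plan is to take an optimal approximating polynomial for $f$ and pull it back through the monomial substitution, showing that the pullback approximates $g$ and has spectral norm no larger than the original.

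More concretely, let $p : \pmone^n \to \R$ be a polynomial achieving $\|\wh{p}\|_1 = \|\wh{f}\|_{1,1/3}$ and $|p(y) - f(y)| \leq 1/3$ for all $y \in \pmone^n$. Write $g(x) = f(M_1(x), \ldots, M_n(x))$ where each $M_i(x) = \prod_{j \in A_i} x_j$ for some $A_i \subseteq [m]$. Define $q : \pmone^m \to \R$ by $q(x) = p(M_1(x), \ldots, M_n(x))$. Since each $M_i$ maps $\pmone^m$ into $\pmone$, for every $x \in \pmone^m$ we have $|q(x) - g(x)| = |p(M_1(x), \ldots, M_n(x)) - f(M_1(x), \ldots, M_n(x))| \leq 1/3$, so $q$ is a valid $1/3$-approximator of $g$.

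The key step is to bound $\|\wh{q}\|_1$. Expanding $p = \sum_{S \subseteq [n]} \wh{p}(S) \prod_{i \in S} y_i$ and substituting $y_i = M_i(x)$, each term becomes $\wh{p}(S) \prod_{i \in S} \prod_{j \in A_i} x_j$; since $x_j^2 = 1$, this product equals the character $\chi_{T(S)}(x)$ where $T(S) = \bigtriangleup_{i \in S} A_i \subseteq [m]$ is the symmetric difference of the $A_i$'s over $i \in S$. Therefore
\[
q(x) = \sum_{T \subseteq [m]} \Bigl( \sum_{S : T(S) = T} \wh{p}(S) \Bigr) \chi_T(x),
\]
which identifies the Fourier coefficients $\wh{q}(T) = \sum_{S : T(S)=T} \wh{p}(S)$. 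The triangle inequality then gives
\[
\|\wh{q}\|_1 = \sum_T |\wh{q}(T)| \leq \sum_T \sum_{S : T(S) = T} |\wh{p}(S)| = \sum_{S \subseteq [n]} |\wh{p}(S)| = \|\wh{p}\|_1.
\]
Combining this with the approximation guarantee, $\|\wh{g}\|_{1,1/3} \leq \|\wh{q}\|_1 \leq \|\wh{p}\|_1 = \|\wh{f}\|_{1,1/3}$, as desired.

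There is no substantive obstacle here: the proof is a one-shot computation. The only thing to be careful about is the bookkeeping that a product of monomials in $\pmone$-valued variables is again a character (using $x_j^2 = 1$), and that the collapse of multiple $S$'s onto the same $T$ can only decrease, never increase, the $\ell_1$ mass by the triangle inequality.
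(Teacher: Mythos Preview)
Your proof is correct and is the standard argument for this fact. Note that the paper does not actually supply a proof of this observation; it simply cites it as \cite[Observation~25]{CM17} and states that ``the approximate spectral norm of a function can only decrease upon monomial projections.'' Your write-up fills in exactly the expected details: pull back an optimal approximator through the monomial substitution, observe that products of characters collapse to characters over $\pmone$, and apply the triangle inequality to bound the resulting $\ell_1$ Fourier mass.
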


\begin{fact}[Fourier coefficients of $\IP_{n}$]
\label{fact: ip fourier coefficients}
Let $\IP_n: \pmone^{2n} \to \pmone$ be as in Definition~\ref{defi: inner product function}. Then for all $S \subseteq[2n]$ we have,
\begin{align*}
    \abs{\wh{\IP_n}(S)} = \frac{1}{2^n}.
\end{align*}
\end{fact}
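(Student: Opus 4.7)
The plan is to compute the Fourier expansion of $\IP_n$ explicitly by exploiting its coordinate-wise product structure. First I would handle a single AND gate on $\pmone$-valued inputs. Using the convention that $-1$ represents TRUE and $+1$ represents FALSE (so that $x \wedge y = -1$ iff $x = y = -1$), a direct check on the four inputs in $\pmone^2$ shows
\[
x_i \wedge y_i \;=\; \tfrac12\bigl(1 + x_i + y_i - x_i y_i\bigr).
\]
In particular, each of the four Fourier coefficients of the two-bit AND has absolute value exactly $\tfrac12$.

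Next I would lift this to $\IP_n$ using the fact that $\IP_n(x,y) = \prod_{i=1}^n (x_i \wedge y_i)$, which gives
\[
\IP_n(x,y) \;=\; \frac{1}{2^n}\prod_{i=1}^n \bigl(1 + x_i + y_i - x_i y_i\bigr).
\]
For any $S \subseteq [2n]$, split $S$ into $S_x \subseteq [n]$ (coordinates of $x$) and $S_y \subseteq [n]$ (coordinates of $y$). When the product above is expanded, the monomial $\chi_S = \prod_{i \in S_x} x_i \prod_{j \in S_y} y_j$ is produced in exactly one way: from the $i$-th factor we must pick $1$ if $i \notin S_x \cup S_y$, pick $x_i$ if $i \in S_x \setminus S_y$, pick $y_i$ if $i \in S_y \setminus S_x$, and pick $-x_i y_i$ if $i \in S_x \cap S_y$. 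Collecting signs,
\[
\wh{\IP_n}(S) \;=\; \frac{(-1)^{|S_x \cap S_y|}}{2^n},
\]
which has absolute value $1/2^n$, as claimed.

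The argument is essentially a direct calculation and presents no real obstacle; the only thing that requires care is fixing the convention $\pmone \leftrightarrow \{\textnormal{TRUE},\textnormal{FALSE}\}$ used to define $\wedge$ on $\pmone$-valued inputs, since this determines the signs in the single-gate Fourier expansion and hence the correctness of the product formula above. An equivalent and slightly slicker way to see the conclusion is to observe that $\IP_n$ is a tensor product of the two-bit AND across the $n$ coordinate pairs, so its Fourier coefficients factor as products of the single-gate coefficients, each of which has absolute value $\tfrac12$; the product of $n$ such terms has absolute value $1/2^n$.
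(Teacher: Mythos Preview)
Your proof is correct. The paper states this as a ``Fact'' without proof, so there is nothing to compare against; your direct computation via the multilinear expansion of $\AND_2$ and the tensor-product structure of $\IP_n$ is the standard argument and works exactly as you describe.
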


\begin{proof}[Proof of Claim~\ref{clm: ack+ if false for transitive functions}]
Let $n > 0$ be a power of 2. Let $r = \PARITY_n : \pmone^n \to \pmone$ and $G = \IP_{\log n} : \pmone^{\log n} \times \pmone^{\log n} \to \pmone$. From Claim~\ref{clm:func_is_transitive}, the function $f = r \wtcirc h_G : \pmone^{2n^2} \to \pmone$ is transitive. By Corollary~\ref{cor: query complexity doesn't increase with parameters plugged in} we have
\begin{align}
    \Q(f) = \Theta(n). \label{eq: approx-norm quantum ub.}
\end{align}
By Theorem~\ref{theo: bbc+01} we have $\Q(f) \geq \adeg(f)/2$ and together with Equation~\eqref{eq: approx-norm quantum ub.}, this implies $\adeg(f) = O(n)$. Thus to complete the proof of the claim, it suffices to show $\log\left(\|\wh{f}\|_{1, 1/3} \right) = \Omega(n \log n)$.

We first note that $\IP_{n \log n}$ is a monomial projection of $f$. Consider the function $f$ acting on the input variables $x^{(1)}, \dots, x^{(n)}, y^{(1)}, \dots, y^{(n)}$, where $x^{(i)}, y^{(i)} \in \pmone^{n}$ for all $i \in [n]$.

For $i \in [n]$, set $x^{(i)}_{1^{\log n}} = y^{(i)}_{1^{\log n}} = 1$.
For $i \in [n]$ and string $s \in \pmone^{\log n} \setminus \cbra{1^{\log n}}$, set $x^{(i)}_s = \prod_{j: s_j = -1}x^{(i)}_j $ and $y^{(i)}_s = \prod_{j: s_j = -1}y^{(i)}_j$ .
That is, in each block of inputs $x^{(i)}$ and $y^{(i)}$, the coordinate corresponding to $1^{\log n}$ equals 1, the coordinates corresponding to $j \in \pmone^{\log n}$ with $|j| = 1$ are free variables, and all other variables are replaced by monomials in these variables.
Under this monomial projection there are $2n \log n$ free variables, namely $\left\{x^{(i)}_{s}, y^{(i)}_{s} : i \in [n], s \in \pmone^{\log n}, |s| = 1\right\}$. Also note that under this projection and every setting of the free variables, the blocks $x^{(i)}$ and $y^{(i)}$, for $i \in [n]$, are always Hadamard codewords. Let $f'$ be the monomial projection (see Definition~\ref{defi: monomial projection}) of $f$ under the projection defined in this paragraph. For the purpose of the next equality we abbreviate strings $s \in \pmone^{\log n}$ of Hamming weight 1 by the set $\cbra{i}$, where $i \in [\log n]$ is such that $s_i = -1$. On the free variables, the projected function $f'$ equals
\[
\PARITY_n \left(\langle (x^{(1)}_{\{1\}}, \dots, x^{(1)}_{\{\log n\}}), (y^{(1)}_{\{1\}}, \dots, y^{(1)}_{\{\log n\}})\rangle, \dots, \langle (x^{(n)}_{\{1\}}, \dots, x^{(n)}_{\{\log n\}}), (y^{(n)}_{\{1\}}, \dots, y^{(n)}_{\{\log n\}})\rangle\right).
\]
Thus,
\[
f' = \IP_{n\log n}.
\]

It follows from earlier works~\cite{Bruck90, BS92} that every polynomial that approximates $\IP_{n\log n}$ to error $1/3$ must have spectral norm $2^{\Omega(n \log n)}$. We include a short proof below for completeness.

Let $P$ be a polynomial that approximates $\IP_{n\log n}$ to error $1/3$. Since $P(x, y)\IP_{n\log n}(x, y) \geq 2/3$ for all $(x, y) \in \pmone^{2n\log n}$, we have
\begin{align*}
    2/3 & \leq \mathbb{E}_{x, y \in \pmone^{2n\log n}} [P(x, y)\IP_{n\log n}(x, y)]\\
    & = \sum_{S \subseteq [2n\log n]}\widehat{P}(S)\widehat{\IP_{n\log n}}(S) \tag*{by Plancherel's theorem (Fact~\ref{fact: plancherel's thm})}\\
    & \leq \sum_{S \subseteq [2n\log n]}\frac{|\widehat{P}(S)|}{2^{n \log n}} \tag*{by Fact~\ref{fact: ip fourier coefficients}}\\
    & = \frac{\|\wh{P}\|_1}{2^{n \log n}}\\
    & \implies \log(\|\wh{P}\|_1) = \Omega(n \log n)\\
    & \implies \log (\|\wh{\IP_{n \log n}}\|_{1, 1/3}) = \Omega(n \log n).
\end{align*}
This yields the desired contradiction by Observation~\ref{obs: monproj weight reduction}.
\end{proof}

\bibliography{reference}

\appendix

\section{Quantum communication lower bound via the  generalized discrepancy method}
\label{sec: appendix communication complexity lower bound via generalized discrepancy method}

In this appendix we prove Theorem~\ref{theo:  discrepancy lb on quantum communication}, which gives a lower bound on the quantum communication complexity of a composed function in terms of the approximate degree of the outer function and the discrepancy of the inner function. This result is implicit in~\cite[Theorem 7]{LZ10}. Their result is stated in the more general setting of matrix norms, and the $\log$ factor we require on the right-hand side of Theorem~\ref{theo:  discrepancy lb on quantum communication} is not included in their statement. Theorem~\ref{theo:  discrepancy lb on quantum communication} follows implicitly from the proof of~\cite[Theorem 7]{LZ10} along with the fact that the $\gamma_2^*$-norm characterizes discrepancy~\cite{LS09b}.

For completeness and clarity, we prove Theorem~\ref{theo:  discrepancy lb on quantum communication} below from first principles.

\begin{defi}
\label{defi: correlation definition}
For functions $f,g : \pmone^n \to \R$ and a probability distribution $\mu: \pmone^n \to \R$ the correlation between $f$ and $g$ with respect to $\mu$ is defined to be
\begin{align*}
    \corr{f}{g}{\mu} = \sum_{x \in \pmone^n} f(x)g(x) \mu(x).
\end{align*}
\end{defi}

For a Boolean function $f : \pmone^n \to \pmone$, its approximate degree can be captured by a certain linear program. Writing out the dual of this program and analyzing its optimum yields the following theorem (see, for example,~\cite[Theorem 1]{BT13}).
\begin{theorem}[Dual witness for $\epsilon$-approximate degree]
\label{theo: dual witness}
For every $\epsilon \geq 0$ and $f : \pmone^n \to \pmone$, $\adeg_{\epsilon}(f) \geq d$ if and only if there exists a polynomial $\psi:\pmone^n \to \R$ such that
\begin{enumerate}
    \item $\sum_{x \in \pmone^n} f(x) \psi(x) > \epsilon$,
    
    \item $\sum_{x \in \pmone^n} |\psi(x)| = 1$ and 
    
    \item $\wh{\psi}(S) = 0$ for all $|S| < d$.
\end{enumerate}
\end{theorem}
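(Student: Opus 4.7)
The plan is to obtain this characterization as the LP dual of the linear program whose optimum equals $\inf_{p} \max_{x \in \pmone^n} |p(x) - f(x)|$, where the infimum ranges over real multilinear polynomials of degree strictly less than $d$. Observe first that $\adeg_\epsilon(f) \geq d$ holds precisely when no such $p$ uniformly approximates $f$ to error $\epsilon$, i.e., when this infimum is strictly greater than $\epsilon$. So it suffices to show that the infimum exceeds $\epsilon$ if and only if a witness $\psi$ as in the statement exists.

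I would write the primal LP with variables $\eta \in \R$ and Fourier coefficients $\{\hat p(S)\}_{|S| < d}$ (viewing $p = \sum_{|S| < d}\hat p(S)\chi_S$): minimize $\eta$ subject to the $2 \cdot 2^n$ constraints $p(x) - f(x) \leq \eta$ and $-p(x) + f(x) \leq \eta$ for each $x \in \pmone^n$. This LP is feasible ($p = 0$, $\eta = 1$) and bounded below by $0$, so strong LP duality applies with no gap. Introducing nonnegative dual multipliers $\mu^+(x)$ and $\mu^-(x)$ for the two families of constraints, I form the Lagrangian and minimize out the primal variables. The coefficient of $\eta$ produces the constraint $\sum_x(\mu^+(x) + \mu^-(x)) = 1$, while for each $|S| < d$ the coefficient of $\hat p(S)$ produces $\sum_x(\mu^+(x) - \mu^-(x))\chi_S(x) = 0$, i.e., the vanishing of a Fourier coefficient of the signed measure $\mu^+ - \mu^-$.

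Setting $\psi(x) = \mu^-(x) - \mu^+(x)$, the dual objective becomes $\sum_x f(x)\psi(x)$. Any optimal dual solution can be taken to have $\mu^+(x)\mu^-(x) = 0$ for each $x$ (otherwise reduce both by their minimum, keeping $\psi$ fixed and strictly reducing $\sum_x(\mu^+(x) + \mu^-(x))$, contradicting the normalization), so $|\psi(x)| = \mu^+(x) + \mu^-(x)$ and the normalization becomes $\|\psi\|_1 = 1$. Conversely, any $\psi$ satisfying the conditions of the theorem can be decomposed as $\mu^+ = \max(-\psi, 0)$ and $\mu^- = \max(\psi, 0)$, yielding a feasible dual point. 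Hence the dual LP is exactly: maximize $\sum_x f(x)\psi(x)$ subject to $\|\psi\|_1 = 1$ and $\hat\psi(S) = 0$ for all $|S| < d$.

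By strong duality, the primal optimum equals the dual optimum, so the primal exceeds $\epsilon$ if and only if some feasible $\psi$ achieves $\sum_x f(x)\psi(x) > \epsilon$, which is precisely the existence asserted in the theorem. The only real subtlety is the duality step itself: since everything in sight is a finite-dimensional LP with a feasible and bounded primal, strong duality applies directly, and the small bookkeeping items (checking that the Lagrangian reduction from $(\mu^+, \mu^-)$ to a single signed $\psi$ with unit $\ell_1$ norm loses nothing, and that the orthogonality conditions coming from $\hat p(S)$ match the stated Fourier condition on $\psi$) are routine once the LP is written down explicitly.
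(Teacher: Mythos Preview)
Your proposal is correct and is precisely the LP-duality derivation the paper alludes to; note that the paper does not actually give a proof of this statement but simply remarks that it follows by taking the dual of the approximate-degree LP and cites \cite[Theorem~1]{BT13}. Your write-up spells out exactly that computation, and the one point you flag as a subtlety (strong duality for a feasible, bounded finite-dimensional LP, and the reduction from $(\mu^+,\mu^-)$ to a single signed $\psi$ with $\|\psi\|_1=1$) is handled correctly.
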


We require the following fact, which follows immediately from the fact that $\wh{f}(S)$ is a uniform average of different signed values of $f(x)$.
\begin{fact}[Folklore]
\label{fact: dual has small Fourier coefficients}
For every function $f: \pmone^n \to \R$,
\begin{align*}
    2^n \max_{S \subseteq[n]} \abs{\wh{f}(S)} \leq \sum_{x \in \pmone^n} \abs{f(x)}.
\end{align*}
\end{fact}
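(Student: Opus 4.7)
The plan is to unpack the definition of the Fourier coefficient and apply the triangle inequality. Recall that for any $S \subseteq [n]$, the character $\chi_S : \pmone^n \to \pmone$ defined by $\chi_S(x) = \prod_{i \in S} x_i$ takes values in $\pmone$, so in particular $|\chi_S(x)| = 1$ for all $x \in \pmone^n$. The Fourier coefficient is
\[
\wh{f}(S) = \frac{1}{2^n} \sum_{x \in \pmone^n} f(x) \chi_S(x).
\]

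First I would pull $2^n$ across to obtain $2^n \wh{f}(S) = \sum_{x} f(x) \chi_S(x)$, and then apply the triangle inequality together with $|\chi_S(x)| = 1$ to bound
\[
2^n |\wh{f}(S)| \leq \sum_{x \in \pmone^n} |f(x)||\chi_S(x)| = \sum_{x \in \pmone^n} |f(x)|.
\]
Since the right-hand side does not depend on $S$, taking the maximum over $S \subseteq [n]$ on the left-hand side preserves the inequality, giving the claimed bound
\[
2^n \max_{S \subseteq [n]} |\wh{f}(S)| \leq \sum_{x \in \pmone^n} |f(x)|.
\]

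There is essentially no obstacle here; the content is just that each Fourier coefficient is a uniform signed average of the values $f(x)$, so its magnitude is bounded by the average of the absolute values, which after rescaling by $2^n$ yields the $\ell_1$ norm of $f$ on the right-hand side.
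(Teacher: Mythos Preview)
Your proof is correct and is exactly the argument the paper has in mind: the paper merely states that the fact ``follows immediately from the fact that $\wh{f}(S)$ is a uniform average of different signed values of $f(x)$,'' which is precisely the triangle-inequality computation you wrote out.
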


The following theorem shows that if a two-party function $F$ correlates well with a two-party function $G$ under some distribution $\lambda$, and the discrepancy of $G$ under $\lambda$ is small, then the bounded-error quantum communication complexity of $F$ must be large.
This is referred to as the generalized discrepancy method, and was first proposed by Klauck~\cite{Klauck07}. 
The following version can be found in~\cite[page~173]{Cha09}, for example, stated as a lower bound on randomized communication complexity. However, the generalized discrepancy method is also known to give lower bounds on bounded-error quantum communication complexity, even in the model where the parties can share an arbitrary prior entangled state for free.

\begin{theorem}[Generalized Discrepancy Bound]
\label{theo: generalized discrepancy bound}
Consider functions $E, F : \pmone^m \times \pmone^n \to \pmone$. If there exists a distribution $\lambda : \pmone^{m} \times \pmone^n \to \R$ such that $\corr{E}{F}{\lambda} \geq \delta$, then \[
\Q^{cc, *}_{\frac12 - \epsilon}(E)=\Omega\left(  \log\left(\frac{\delta + 2\epsilon - 1}{\disc_{\lambda}(F)} \right)\right).
\]
\end{theorem}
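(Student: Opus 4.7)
The plan is to combine two ingredients: a pointwise correctness guarantee for the protocol computing $E$, and a structural upper bound showing that any cheap entanglement-assisted quantum protocol has small correlation (under $\lambda$) with any sign function $F$ of low discrepancy. Let $\Pi$ be an optimal entanglement-assisted quantum protocol for $E$ of cost $c = \Q^{cc,*}_{\frac{1}{2}-\epsilon}(E)$ qubits, and let $\pi(x,y) = 2\Pr[\Pi(x,y)=1] - 1 \in [-1,1]$ be its expected signed output. The error guarantee gives the pointwise inequality $\pi(x,y)\cdot E(x,y) \geq 2\epsilon$, and hence $|\pi(x,y) - E(x,y)| \leq 1 - 2\epsilon$ for every input.

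Splitting the inner product against $F$ as
\[
\sum_{x,y} \lambda(x,y)\,\pi(x,y)\,F(x,y) = \sum_{x,y} \lambda(x,y)\,E(x,y)\,F(x,y) + \sum_{x,y} \lambda(x,y)\,(\pi(x,y)-E(x,y))\,F(x,y),
\]
the first sum is $\corr{E}{F}{\lambda} \geq \delta$ by hypothesis, and the second has absolute value at most $(1-2\epsilon)\sum_{x,y}\lambda(x,y) = 1 - 2\epsilon$ since $|F|\leq 1$ and $\lambda$ is a probability distribution. Therefore the protocol function satisfies the correlation lower bound
\[
\Bigl|\sum_{x,y} \lambda(x,y)\,\pi(x,y)\,F(x,y)\Bigr| \geq \delta + 2\epsilon - 1.
\]

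The second ingredient is the Yao--Kremer-style decomposition: any entanglement-assisted quantum protocol of cost $c$ admits a bilinear representation $\pi(x,y) = \sum_{i=1}^{2^{O(c)}} a_i(x)\,b_i(y)$ with $|a_i(x)|,|b_i(y)| \leq 1$. Expressing each bounded factor as a convex combination of sign-valued indicators of subsets (via a standard truncation/averaging trick) and then applying Definition~\ref{defi: discrepancy} rectangle by rectangle yields
\[
\Bigl|\sum_{x,y} \lambda(x,y)\,\pi(x,y)\,F(x,y)\Bigr| \leq 2^{O(c)}\,\disc_\lambda(F).
\]
Comparing with the lower bound from the previous paragraph and taking logarithms rearranges to $c = \Omega\!\left(\log\frac{\delta + 2\epsilon - 1}{\disc_\lambda(F)}\right)$, which is the desired conclusion. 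The main obstacle I expect is the bilinear decomposition step in the entanglement-assisted setting: the prior shared state is a priori uncorrelated with the inputs, but one must verify that its presence does not destroy the factorization of $\pi(x,y)$ into a short sum of products of bounded Alice-functions and bounded Bob-functions. The cleanest way to handle this is to reduce to the no-entanglement model via teleportation (which at most doubles the communication, and so is absorbed into the $2^{O(c)}$ factor), and then apply the standard Kremer decomposition for quantum protocols without prior entanglement.
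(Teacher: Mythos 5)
Your first half is fine: the pointwise bias bound $\pi(x,y)E(x,y)\geq 2\epsilon$ and the triangle-inequality step giving $\bigl|\sum_{x,y}\lambda(x,y)\pi(x,y)F(x,y)\bigr|\geq \delta+2\epsilon-1$ are exactly the standard opening of the generalized discrepancy method (the paper itself does not prove this theorem; it cites Klauck and the version on p.~173 of Chattopadhyay's thesis, asserting it also holds with prior entanglement). The problem is the second half, and it is precisely at the step you flagged. The reduction you propose --- ``remove the prior entanglement via teleportation, which at most doubles the communication'' --- is backwards: teleportation \emph{consumes} shared entanglement to replace quantum communication by classical communication; it cannot eliminate a prior shared state. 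There is no generic simulation of an entanglement-assisted protocol by an unassisted one with only constant-factor overhead (the shared state can have arbitrarily large Schmidt rank, and the model $\Q^{cc,*}$ explicitly charges nothing for it), so your appeal to the plain Kremer decomposition does not go through. Moreover, the decomposition itself degrades in the assisted model: running Kremer's argument on a protocol that starts from a shared state $\sum_j \sqrt{p_j}\ket{j}\ket{j}$ produces a sum whose number of terms scales with the Schmidt rank of that state, not with $2^{O(c)}$, so the claimed representation $\pi(x,y)=\sum_{i=1}^{2^{O(c)}}a_i(x)b_i(y)$ with \emph{scalar} bounded factors is simply false in general.

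The standard repairs, either of which would complete your argument, are: (i) keep Kremer's decomposition but in its vector-valued form, $\pi(x,y)=\sum_{i=1}^{2^{O(c)}}\langle u_i(x),v_i(y)\rangle$ with $\norm{u_i(x)},\norm{v_i(y)}\leq 1$ (this survives arbitrary prior entanglement because the shared state only inflates the dimension of the vectors, not their norms or the number of terms), and then use Grothendieck's inequality to reduce each vector-valued term to sign-valued rectangles, giving $\bigl|\sum_{x,y}\lambda(x,y)\pi(x,y)F(x,y)\bigr|\leq 2^{O(c)}\,O(\disc_\lambda(F))$; or (ii) bound the $\gamma_2$ factorization norm of the bias matrix $\pi$ by $2^{O(c)}$ (Linial--Shraibman; this bound is insensitive to prior entanglement for the same reason) and invoke the duality between $\gamma_2^*$ and discrepancy --- the same fact the paper uses elsewhere when citing \cite{LS09b} --- to get the identical upper bound. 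With either fix, comparing against your lower bound $\delta+2\epsilon-1$ and taking logarithms yields the theorem; without one of them, the proof as written only establishes the statement for the model without shared entanglement.
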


\begin{defi}
For a distribution $\mu : \pmone^m \times \pmone^n \to \R$ and integer $k>0$, define the distribution $\mu^{\otimes k} : \pmone^{mk} \times \pmone^{nk} \to \R$ by
    \[
    \mu^{\otimes k}((X_1, Y_1) \dots, (X_k, Y_k)) = \prod_{i \in [k]}\mu(X_i, Y_i),
    \]
    where $X_i \in \pmone^{m}$ and $Y_i \in \pmone^n$ for all $i \in [k]$.
\end{defi}

We require the following XOR lemma for discrepancy due to Lee, Shraibman and {\v{S}}palek~\cite{LSS08}. \begin{theorem}[{\cite[Theorem 19]{LSS08}}]
\label{theo: lss08}
Let $P : \pmone^m \times \pmone^n \to \pmone$ be a two-party function and $\mu : \pmone^m \times \pmone^n \to \R$ be a distribution. For every integer $k > 0$,
\[
\disc_{\mu^{\otimes k}}(\PARITY_k \circ P) \leq (8 \disc_{\mu}(P))^{k}.
\]
\end{theorem}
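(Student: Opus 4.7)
The plan is to pass from discrepancy to a tensor-multiplicative norm, exploit multiplicativity across all $k$ tensor factors at once, and convert back with only a single Grothendieck loss. First encode $P$ and $\mu$ into a single real matrix $M$ indexed by $\pmone^m \times \pmone^n$ via the entrywise product $M(x,y) := \mu(x,y) P(x,y)$, so that $\disc_\mu(P) = \max_{S,T} |1_S^T M 1_T|$; I will abbreviate this to $\disc(M)$. The weighted matrix for $\mu^{\otimes k}(\PARITY_k \circ P)$ has entry $\prod_i \mu(x_i,y_i) P(x_i,y_i)$ at position $((x_1,\ldots,x_k),(y_1,\ldots,y_k))$, which is exactly the Kronecker power $M^{\otimes k}$. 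So the theorem reduces to showing $\disc(M^{\otimes k}) \leq (8 \disc(M))^k$.

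Next, introduce two auxiliary norms on a real matrix $N$: the $\ell_\infty \to \ell_1$ norm $\|N\|_{\infty \to 1} := \max_{u,v} |u^T N v|$, where $u,v$ range over sign vectors of the appropriate dimensions, and the factorization-dual norm $\gamma_2^*(N) := \sup \left| \sum_{x,y} N(x,y) \langle u_x, v_y \rangle \right|$, where the supremum is over all families of unit vectors $u_x, v_y$ in a common real Hilbert space. A standard four-rectangles expansion gives the two-sided comparison $\disc(N) \leq \|N\|_{\infty \to 1} \leq 4 \disc(N)$: for the upper inequality, take $u = 1_{S^+} - 1_{S^-}$ and $v = 1_{T^+} - 1_{T^-}$ and expand $u^T N v$ into four rectangle contributions each bounded by $\disc(N)$; for the lower, set $u_S := 2 \cdot 1_S - \mathbf{1}$ and expand $4 \cdot 1_S^T N 1_T = (\mathbf{1} + u_S)^T N (\mathbf{1} + u_T)$, bounding each of the four resulting terms by $\|N\|_{\infty \to 1}$.

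The two facts about $\gamma_2^*$ that I would invoke are (i) Grothendieck's inequality, $\gamma_2^*(N) \leq K_G \|N\|_{\infty \to 1}$, where Grothendieck's constant satisfies $K_G < 2$, and (ii) multiplicativity, $\gamma_2^*(A \otimes B) = \gamma_2^*(A) \gamma_2^*(B)$. The ``$\geq$'' half of (ii) is immediate: tensor the maximizing unit vectors and use $\langle u \otimes u', v \otimes v' \rangle = \langle u, v\rangle \langle u', v' \rangle$. The ``$\leq$'' half is a classical fact, following from multiplicativity of the primal factorization norm $\gamma_2$ (itself immediate by tensoring factorizations $A = X_A Y_A^T$, $B = X_B Y_B^T$ into $A \otimes B = (X_A \otimes X_B)(Y_A \otimes Y_B)^T$) together with Frobenius-pairing duality between $\gamma_2$ and $\gamma_2^*$. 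Combining (i), (ii), and the trivial inclusion $\|N\|_{\infty \to 1} \leq \gamma_2^*(N)$ (take $u_x, v_y \in \pmone$ as one-dimensional unit vectors) chains as
\[
\disc(M^{\otimes k}) \leq \|M^{\otimes k}\|_{\infty \to 1} \leq \gamma_2^*(M^{\otimes k}) = \gamma_2^*(M)^k \leq (K_G \|M\|_{\infty \to 1})^k \leq (4 K_G \disc(M))^k \leq (8 \disc(M))^k,
\]
using $4 K_G < 8$.

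The main obstacle, and what dictates the structure of the chain, is that Grothendieck's inequality must be invoked \emph{exactly once}, not once per tensor factor. Applying it per step (via a lemma of the shape $\disc(A \otimes B) \leq c \cdot \disc(A) \disc(B)$) would cost a factor of $K_G$ every time one converts between Hilbertian and sign-valued correlations, producing a per-step constant of roughly $16 K_G^2 \approx 50$, far above the target $8$ per factor. The chain above sidesteps this by doing the $k$-fold multiplication entirely inside $\gamma_2^*$, which is exactly multiplicative, and paying the Grothendieck loss only at the final conversion. I would cite Grothendieck's inequality itself as a black box.
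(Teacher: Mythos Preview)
The paper does not prove this statement; it is quoted from \cite{LSS08} and used as a black box in Appendix~\ref{sec: appendix communication complexity lower bound via generalized discrepancy method}. Your argument is essentially the original proof of Lee, Shraibman, and \v{S}palek: pass from discrepancy to $\|\cdot\|_{\infty\to 1}$ (losing a factor~$4$), from there to $\gamma_2^*$ via Grothendieck (losing $K_G<2$), exploit exact tensor-multiplicativity of $\gamma_2^*$, and convert back. The chain and the final constant $4K_G<8$ are correct.

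One caveat on your justification of $\gamma_2^*(A\otimes B)\leq\gamma_2^*(A)\,\gamma_2^*(B)$: the sentence ``multiplicativity of $\gamma_2$ together with Frobenius-pairing duality'' does not actually yield this direction. Tensoring factorizations gives only \emph{sub}multiplicativity $\gamma_2(A\otimes B)\leq\gamma_2(A)\gamma_2(B)$, and by duality that translates into \emph{super}multiplicativity of $\gamma_2^*$, which is the direction you already had by tensoring unit vectors. The reverse inequality for $\gamma_2^*$ is genuinely the nontrivial half; in \cite{LSS08} it is obtained from the semidefinite characterization of $\gamma_2^*$ (equivalently, from the dual description of $\gamma_2$ as $\max_{\|u\|=1}\|A\circ uu^T\|_{\mathrm{tr}}$-type formulas), not from abstract duality alone. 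The fact is correct and standard, so citing it is fine, but the one-line duality sketch as written does not establish it.
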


We recall Theorem~\ref{theo:  discrepancy lb on quantum communication} below.
\begin{theorem}[Restatement of Theorem~\ref{theo:  discrepancy lb on quantum communication}]
\label{theo: appendix discrepancy lb on quantum communication}
Let $r : \pmone^n \to \pmone$ and $G : \pmone^j \times \pmone^k \to \pmone$ be functions.
Let $\mu: \pmone^j \times \pmone^k \to \R$ be a balanced distribution with respect to $G$ and $\disc_{\mu}(G) = o(1)$. If $\frac{8en}{\adeg(r)} \leq \left(\frac{1}{\disc_{\mu}(G)}\right)^{1-\beta}$ for some constant $\beta \in (0,1)$, then
\begin{align*}
    \Q^{cc, *}(r \circ G) =  \Omega\left(\adeg(r) \log\left(\frac{1}{\disc_{\mu}(G)}\right)\right).
\end{align*}
In particular,
\begin{align*}
    \Q^{cc, *}(r \circ G) =  \Omega\left(\adeg(r) \log\left(\frac{1}{\bdisc(G)}\right)\right).
\end{align*}
\end{theorem}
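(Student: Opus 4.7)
The plan is to apply the generalized discrepancy method (Theorem~\ref{theo: generalized discrepancy bound}) to $r \circ G$ with a carefully chosen distribution $\lambda$ and hard function $F$ built from a dual witness for the approximate degree of $r$. Write $d = \adeg(r)$. Using the standard fact that $\adeg_{\gamma}(r) = \Theta(d)$ for any fixed constant $\gamma \in (0,1)$, Theorem~\ref{theo: dual witness} supplies a dual witness $\psi : \pmone^n \to \R$ with $\sum_z |\psi(z)| = 1$, $\sum_z r(z)\psi(z) > 2/3 + c$ for some absolute constant $c > 0$, and $\wh{\psi}(U) = 0$ whenever $|U| < d'$ for some $d' = \Omega(d)$. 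Since $\mu$ is balanced with respect to $G$, one has $\Pr_{\mu}[G(X_i, Y_i) = b] = 1/2$ for $b \in \pmone$, and hence $\Pr_{(X,Y) \sim \mu^{\otimes n}}[(G(X_1, Y_1), \ldots, G(X_n, Y_n)) = z] = 2^{-n}$ for every $z \in \pmone^n$. Setting
\begin{align*}
\lambda(X, Y) &= 2^n \abs{\psi(G(X_1, Y_1), \ldots, G(X_n, Y_n))} \cdot \mu^{\otimes n}(X, Y),\\
F(X, Y) &= \sign\paren{\psi(G(X_1, Y_1), \ldots, G(X_n, Y_n))},
\end{align*}
makes $\lambda$ a probability distribution, and a direct calculation gives $\corr{r \circ G}{F}{\lambda} = \sum_z r(z) \psi(z) > 2/3 + c$.

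The main technical step is to upper bound $\disc_\lambda(F)$. Since $F \cdot \lambda = 2^n (\psi \circ G) \cdot \mu^{\otimes n}$, expanding $\psi$ in the Fourier basis yields, for any rectangle $S \times T$,
\begin{align*}
\abs{\sum_{(X, Y) \in S \times T} F(X, Y) \lambda(X, Y)} \leq 2^n \sum_{U : |U| \geq d'} \abs{\wh{\psi}(U)} \cdot \abs{\sum_{(X, Y) \in S \times T} \prod_{i \in U} G(X_i, Y_i) \cdot \mu^{\otimes n}(X, Y)}.
\end{align*}
To bound each inner sum, I split coordinates as $(X, Y) = (X^U, X^{\bar{U}}, Y^U, Y^{\bar{U}})$. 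For every fixed $(X^{\bar{U}}, Y^{\bar{U}})$, the slices $\{X^U : (X^U, X^{\bar{U}}) \in S\}$ and $\{Y^U : (Y^U, Y^{\bar{U}}) \in T\}$ are again rectangle sides, so the conditional sum over $(X^U, Y^U)$ is a discrepancy expression for $\PARITY_{|U|} \circ G$ under $\mu^{\otimes |U|}$, which is at most $(8\disc_\mu(G))^{|U|}$ by Theorem~\ref{theo: lss08}; averaging over $(X^{\bar{U}}, Y^{\bar{U}}) \sim \mu^{\otimes(n-|U|)}$ preserves this bound. Plugging in $\abs{\wh{\psi}(U)} \leq 2^{-n}$ from Fact~\ref{fact: dual has small Fourier coefficients} and $\binom{n}{k} \leq (en/k)^k \leq (en/d')^k$ for $d' \leq k \leq n$ then gives
\begin{align*}
\disc_\lambda(F) \leq \sum_{k \geq d'} \binom{n}{k} (8\disc_\mu(G))^k \leq \sum_{k \geq d'} \paren{\frac{8 e n \disc_\mu(G)}{d'}}^k.
\end{align*}

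Finally, the hypothesis $\frac{8en}{\adeg(r)} \leq \paren{\frac{1}{\disc_\mu(G)}}^{1-\beta}$, combined with $d' = \Omega(d)$ and the fact that $\disc_\mu(G) = o(1)$ (which lets us absorb constant factors into a slightly smaller exponent), implies that the common ratio of the geometric series above is at most $O(\disc_\mu(G)^{\beta}) = o(1)$, so the sum equals $\disc_\mu(G)^{\Omega(d)}$. Applying Theorem~\ref{theo: generalized discrepancy bound} with $\epsilon = 1/6$, which makes the numerator $\corr{r \circ G}{F}{\lambda} + 2\epsilon - 1 > c$ a positive constant, yields $\Q^{cc,*}(r \circ G) = \Omega(\log(1/\disc_\lambda(F))) = \Omega(\adeg(r) \log(1/\disc_\mu(G)))$. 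The ``in particular'' clause follows by taking $\mu$ to be the balanced distribution attaining the minimum in the definition of $\bdisc(G)$. The most delicate step, I expect, will be the marginalization argument for $\disc_\lambda(F)$: it is where the product structure of $\mu^{\otimes n}$ is crucially exploited in order to reduce the $n$-variable discrepancy bound to the XOR lemma applied to $\mu^{\otimes |U|}$.
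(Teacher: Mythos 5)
Your proposal is correct and follows essentially the same route as the paper's proof: the same distribution $\lambda = 2^n\,|\psi\circ G|\cdot\mu^{\otimes n}$ built from a dual witness for $\adeg(r)$, the same fixing-of-outside-coordinates argument reducing rectangle discrepancy to the XOR lemma of Lee--Shraibman--\v{S}palek, the same binomial/geometric-series estimate, and the generalized discrepancy method at the end. The only (cosmetic) deviation is that you boost the dual witness's correlation to $2/3+c$ via constant-error amplification of approximate degree, whereas the paper keeps the standard $1/3$-correlation witness and instead takes a smaller error parameter in Theorem~\ref{theo: generalized discrepancy bound}; both handle the constants equally well.
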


The following proof method is adopted from Chattopadhyay's thesis~\cite{Cha09}.

\begin{proof}
For simplicity we assume $j = k = m$. The general proof follows along similar lines. Let $\adeg(r) = d$, and let $\psi: \pmone^n \to \R$ be a dual witness for this as given by Theorem~\ref{theo: dual witness}. Let $\nu : \pmone^n \to \R$ be defined as $\nu(x)  = |\psi(x)|$ for all $x \in \pmone^n$ and also define $h: \pmone^n \to \pmone$ as $h(x) = \sign(\psi(x))$ for all $x \in \pmone^n$. First note that $\nu$ is a distribution since $\sum_{x \in \pmone^n} \nu(x) = \sum_{x \in \pmone^n} |\psi(x)| = 1$ by Theorem~\ref{theo: dual witness}. From Theorem~\ref{theo: dual witness} we also have
\begin{align}
    \corr{r}{h}{\nu} & > 1/3 \label{eq: dual property 2} \\
    \wh{h\nu}(S) & = 0 \quad \text{for all~} |S| < d, 
    \label{eq: dual property 3}
\end{align}
where $h\nu(x) := h(x)\nu(x) = \psi(x)$ for all $x \in \pmone^n$.
We will construct a probability distribution $\lambda : \pmone^{mn} \times \pmone^{mn} \to \R$ using $\nu : \pmone^n \to \R$ and $\mu: \pmone^m \times \pmone^m \to \R$ such that $r \circ G$ and $h \circ G$ have large correlation under $\lambda$. We will also show that $\disc_{\lambda}(h \circ G)$ is small. The proof of the theorem would then follow from Theorem~\ref{theo: generalized discrepancy bound}.

Let $X,Y \in \pmone^{mn}$ be such that $X = (X_1, \dots, X_n)$, $Y = (Y_1, \dots, Y_n)$ and $X_i, Y_i \in \pmone^m$ for all $i \in [n]$. Also, let $G(X,Y) = (G(X_1, Y_1), \dots, G(X_n, Y_n)) \in \pmone^n$.

Define $\lambda: \pmone^{mn} \times \pmone^{mn} \to \R$ as follows.
\begin{align}
\label{eq: define lambda}
    \lambda(X,Y)
    &= 2^n \cdot \nu(G(X, Y)) \cdot \prod_{i \in [n]} \mu(X_i,Y_i). 
\end{align}

Observe that for any $z \in \pmone^n$,
\begin{align}
    \sum_{X,Y : G(X, Y) = z} \lambda(X,Y)
    &= \sum_{X,Y: G(X,Y) = z} 2^n \cdot \nu(G(X, Y)) \cdot \prod_{i \in [n]} \mu(X_i,Y_i) \tag*{by Equation~\eqref{eq: define lambda}} \nonumber\\
    &= 2^n \nu(z) \prod_{i \in [n]}\left(\sum_{X_i, Y_i : G(X_i, Y_i) = z_i} \mu(X_i, Y_i)\right)\nonumber \\
    &= \nu(z), \label{eq: lambdasum}
\end{align}
where the last equality follows since $\mu$ is balanced w.r.t.~$G$ by assumption.
Thus
\[
    \sum_{X,Y} \lambda(X,Y) = \sum_{z \in \pmone^n} \nu(z) = 1.
\]
We next observe that $\corr{r \circ G}{h \circ G}{\lambda}$ is large.
\begin{align}
    \corr{r \circ G}{h \circ G}{\lambda}
    &= \sum_{X,Y} {r \circ G}(X,Y) \cdot {h \circ G}(X,Y) \cdot \lambda(X,Y) \nonumber \\
    &= \sum_{z \in \pmone^n} \sum_{X,Y: G(X,Y) = z} r(z)h(z) \cdot \lambda(X,Y) \nonumber \\
    &= \sum_{z \in \pmone^n} r(z)h(z) \nu(z) \tag*{by Equation~\eqref{eq: lambdasum}} \nonumber \\
    &= \corr{r}{h}{\nu} > 1/3. \label{eq: correlation r and h under nu}
\end{align}
where the last equality follows from  Equation~\eqref{eq: dual property 2}.

We now upper bound the discrepancy of $h \circ G$ with respect to a rectangle $R$, under the distribution $\lambda$.
Let $R \subseteq \pmone^{mn} \times \pmone^{mn}$ be any rectangle of the form $R(X, Y) = A(X)B(Y)$ for $A : \pmone^{mn} \to \zone$ and $B : \pmone^{mn} \to \zone$.
\begin{align*}
    \discrepancy{h \circ G}{R}{\lambda}
    &= \abs{\sum_{X,Y} {h \circ G}(X,Y) \cdot R(X,Y) \cdot \lambda(X,Y)} \\
    &= \abs{\sum_{z \in \pmone^n} \sum_{X,Y: G(X,Y) = z} {h \circ G}(X,Y) \cdot R(X,Y) \cdot \lambda(X,Y)}\\
    &= 2^n \abs{\sum_{z \in \pmone^n}  h(z) \nu(z) \sum_{X,Y: G(X,Y) = z} R(X,Y) \prod_{i \in [n]} \mu(X_i,Y_i)} \tag*{by Equation~\eqref{eq: define lambda}}\\
    &= 2^n \abs{\sum_{z \in \pmone^n}  \left(\sum_{S \subseteq [n]} \wh{h\nu}(S)\chi_S(z)\right) \sum_{X,Y: G(X,Y) = z} R(X,Y) \prod_{i \in [n]} \mu(X_i,Y_i)}\\
    &= 2^n\abs{\sum_{z \in \pmone^n} \sum_{S \subseteq [n]} \wh{h\nu}(S) \sum_{X,Y: G(X,Y) = z} R(X,Y) \cdot \chi_S(z) \cdot \prod_{i \in [n]} \mu(X_i,Y_i)} \\
    &= 2^n\abs{\sum_{z \in \pmone^n} \sum_{S \subseteq [n]} \wh{h\nu}(S) \sum_{X,Y: G(X,Y) = z} R(X,Y) \cdot \prod_{i \in S}G(X_i,Y_i) \cdot \prod_{i \in [n]} \mu(X_i,Y_i)} \\
    & \leq 2^n \sum_{S \subseteq [n], |S| \geq d}\abs{\wh{h\nu}(S)} \cdot
    \abs{
    \sum_{X,Y} R(X,Y) \cdot \prod_{i \in S}G(X_i,Y_i)\mu(X_i, Y_i) \cdot \prod_{j \notin S}\mu(X_j,Y_j)} \tag*{since $\wh{h\nu}(S) = 0$ for all $|S| < d$ by Equation~\eqref{eq: dual property 3}}\\
    & \leq 2^n \sum_{S \subseteq [n], |S| \geq d}\abs{\wh{h\nu}(S)} \cdot \abs{
    \sum_{X_j,Y_j : j \notin S} \prod_{j \notin S}\mu(X_j,Y_j) \left(\sum_{X_i,Y_i : i \in S} A(X) B(Y) \cdot \prod_{i \in S}G\mu(X_i, Y_i)\right)}.
    \end{align*}
    
    For any $X = (X_1, \dots, X_n) \in \pmone^{mn}$ (respectively, $Y = (Y_1, \dots, Y_n) \in \pmone^{mn}$) and set $S \subseteq [n]$ define the string $X_S \in \pmone^{m|S|}$ (respectively, $Y_S \in \pmone^{m|S|}$) by $X_S = (\dots, X_i, \dots)$ (respectively, $Y_S = (\dots, Y_i, \dots)$), where $i$ ranges over all elements of $S$. For any $S$ and fixed $\{X_j : j \notin S\}$ (respectively, $\{Y_j : j \notin S\}$) note that $A(X) = A'(X_S)$ (respectively, $B(X) = B'(X_S)$) for some $A': \pmone^{m|S|} \to \zone$ (respectively, $B': \pmone^{m|S|} \to \zone$) which is a function of the fixed values $\{X_j : j \notin S\}$ (respectively, $\{Y_j : j \notin S\}$). Let $R'(X_S, Y_S) = A'(X_S) B'(Y_S)$.

    Continuing from the above we obtain,
    \begin{align*}
    \discrepancy{h \circ G}{R}{\lambda}    
    &\leq 2^n \sum_{S \subseteq [n], |S| \geq d}\abs{\wh{h\nu}(S)} \abs{
    \sum_{X_j,Y_j : j \notin S} \prod_{j \notin S}\mu(X_j,Y_j) \left(\sum_{X_i,Y_i : i \in S} R'(X_S, Y_S) \prod_{i \in S}G\mu(X_i, Y_i)\right)}\\
    & \leq 2^n \sum_{S \subseteq [n], |S| \geq d}\abs{\wh{h\nu}(S)} \abs{ \sum_{X_j,Y_j : j \notin S} \prod_{j \notin S}\mu_j(X_j,Y_j) \left(\disc_{\mu^{\otimes |S|}}(\XOR_{|S|} \circ G)\right)} \\
    &= 2^n \sum_{S \subseteq [n], |S| \geq d}\abs{\wh{h\nu}(S)} \cdot
    \disc_{\mu^{\otimes |S|}}(\XOR_{|S|} \circ G)\\
    &\leq \sum_{S \subseteq [n], |S| \geq d} \left(8 \disc_{\mu}(G)\right)^{|S|}
    \tag*{from Fact~\ref{fact: dual has small Fourier coefficients} and Theorem~\ref{theo: lss08}}
    \\
    &= \sum_{k = d}^n \binom{n}{k} \left(8 \disc_{\mu}(G)\right)^{k} \leq \sum_{k = d}^n \left( \frac{8en}{k} \disc_{\mu}(G)\right)^{k}
    \\
    &\leq \sum_{k = d}^n \left(\disc_{\mu}(G)^{\beta}\right)^{k}
    \tag*{since $\frac{8en}{d} \disc_{\mu}(G) \leq \left(\disc_{\mu}(G)\right)^{\beta}$ by assumption}
    \\
     &\leq \frac{\disc_{\mu}(G)^{d \cdot \beta}}{1 - \disc_{\mu}(G)^{\beta}}
\end{align*}

Since $\disc_{\mu}(G) = o(1)$ by assumption ($\disc_{\mu}(G) = 1 - \Omega(1)$ suffices, but we assume $\disc_{\mu}(G) = o(1)$ in the statement for readability) and $\beta \in (0, 1)$ is a constant, we have $1 - \disc_\mu(G)^\beta = \Omega(1)$, and hence
\begin{align}
\label{eq: discrepancy of hoG wrt lambda}
\disc_{\lambda}(h \circ G) = O(\disc_{\mu}(G)^{d \beta}).
\end{align}

Hence,
\begin{align*}
    \Q^{cc}_{1/2 - 2/5}(r \circ G) 
    &\geq  \log\left(\frac{1/3 + 4/5 - 1}{\disc_{\lambda}(h \circ G)}\right) \tag*{by Theorem~\ref{theo: generalized discrepancy bound}}\\
    &= \Omega\left( \log\left(\frac{1}{\disc_{\mu}(G)^{d\beta}}\right) \right) \tag*{by Equation~\eqref{eq: discrepancy of hoG wrt lambda}}\\
    &= \Omega\left(\adeg(r) \log\left( \frac{1}{\disc_{\mu}(g)}\right)\right) \tag*{since $\beta = \Omega(1)$ by assumption, and $d = \adeg(r)$}.
\end{align*}

The theorem follows since $\Q^{cc}(F) = \Theta(\Q^{cc}_{\epsilon}(F))$ for all constants $\epsilon \in (0, 1/2)$ and all Boolean functions $F$ (with the constant in the $O(\cdot)$ depending on~$\epsilon$).

\end{proof}
\end{document}